\documentclass[11pt,letterpaper]{article}

\usepackage[utf8]{inputenc}
\usepackage{authblk}

\usepackage{amssymb,amsmath,amsfonts}
\usepackage{amsthm}
\usepackage[margin=1truein]{geometry}

\usepackage{graphicx}
\usepackage[normalem]{ulem}

\usepackage{xcolor}
 \definecolor{bordeaux}{RGB}{100,0,50}
 \definecolor{darkblue}{RGB}{25, 25, 112}

\usepackage[pdftex,colorlinks=true,linkcolor=blue,citecolor=blue,urlcolor=red,unicode=true,hyperfootnotes=false,bookmarksnumbered]{hyperref}
\usepackage[capitalise, compress, nameinlink, noabbrev]{cleveref}

\newtheorem{ptheorem}{Theorem}
\newtheorem{plemma}[ptheorem]{Lemma}
\newtheorem{pcorollary}[ptheorem]{Corollary}
\newtheorem{pdefinition}[ptheorem]{Definition}

\AddToHook{env/plemma/begin}{\crefalias{ptheorem}{plemma}}

\crefname{ptheorem}{Theorem}{Theorems}
\crefname{plemma}{Lemma}{Lemmas}
\crefname{pdefinition}{Definition}{Definitions}

\hypersetup{linkcolor={{blue!70!black}}, citecolor={black}, urlcolor={blue!70!black}}
\hypersetup{linkcolor=bordeaux, citecolor = darkblue, urlcolor = darkblue}

\usepackage{enumerate,color}

\theoremstyle{plain}

\theoremstyle{definition} 

\newcommand{\N}{\mathbb{N}}
\newcommand{\R}{\mathbb{R}}

\author{
Bogumi{\l} Kami\'nski\thanks{Decision Analysis and Support Unit, SGH Warsaw School of Economics, Warsaw,
Poland; e-mail: bkamins@sgh.waw.pl}, 
Pawe{\l} Pra{\l}at\thanks{Department of Mathematics, Toronto Metropolitan University, Toronto, ON, Canada; e-mail: pralat@torontomu.ca}, 
Maria Sadza\thanks{High School Student, The National Mathematics and Science College, Coventry, United Kingdom; e-mail: maria.sadza@natmatsci.ac.uk}
}

\date{}

\title{Going in Circles:\ Collaborative Multi-Robot Treasure Hunting}

\begin{document}

\maketitle

\begin{abstract}
This paper investigates a multi-robot search-and-visit problem involving $n$ \emph{robots} starting at the origin and $k$ unknown \emph{treasures} hidden on the unit circle $\mathcal{C}$. The robots move independently anywhere in the plane with a maximum speed of 1 and instantly share discovery information. The goal is to determine an algorithm that minimizes the total time needed for all robots to individually visit every treasure. To establish the foundational limits of this task, we first solve an \emph{auxiliary} optimal routing problem for a single robot on $\mathcal{C}$ that possesses complete prior knowledge of the treasure locations. The exact analysis of this auxiliary problem yields explicit upper and lower bounds for the original search problem.
\end{abstract}

\section{Introduction and Main Results} 

Search algorithms have an extensive history of study in the mathematics and theoretical computer science literature. Examples of search problem variants and models include probabilistic search~\cite{stone1975}, game theoretic applications~\cite{alpern2002}, classical pursuit and evasion~\cite{nahin2012}, search problems and group testing~\cite{ahlswede1987}, etc. (For search problems on graphs, see~\cite{bonato2017}).

We investigate a variant of the search problem involving $n$ robots and $k$ treasures. Unlike standard search models that focus on the time required for searchers to find hidden targets, our goal is to minimize the total time required for every robot to individually reach every treasure. We can think of many real-world scenarios where this requirement naturally arises, from multi-disciplinary facility auditing, where specialized inspectors with unique expertise must each independently evaluate every detected defect, to hazardous site verification, where safety protocols demand that multiple distinct diagnostic robots independently inspect and clear every identified threat. In all these cases, it is not sufficient for the group of searchers to simply find targets; they must all reach these targets in order to achieve the goal of the problem.

Although discovering treasures is a central component of the problem, the cost of a solution is measured not by when treasures are found, but by the time required for every robot to physically visit every treasure. Consequently, an optimal strategy must balance initial search efficiency with post-discovery traversal time. To minimize total completion time, a strategy might sacrifice pure search speed, for instance, by having robots temporarily pause their search to visit nearby discovered treasures, or by constraining their search paths to stay closer to known targets. Despite the geometric simplicity of our domain, this interplay between discovery and traversal yields rich, non-trivial results that establish a solid foundation for studying this problem in more complex environments.

To establish the foundational limits of our search model, we first examine an \emph{auxiliary} optimal routing problem involving a single robot starting on the circle with complete prior knowledge of all treasure locations. Although originally introduced to derive tight bounds for the primary search problem, this auxiliary variant proves to be of independent theoretical interest. We provide an exact solution for this variant, demonstrating that even in this deceptively simple setup, characterizing the worst-case placement of treasures is highly non-trivial.

\subsection{Definitions} 

Fix $k, n \in \N = \{1, 2, \ldots \}$. Suppose that $k$ \emph{treasures} are hidden on the unit circle $\mathcal{C}$ centered at the origin $(0,0)$ in the Euclidean plane. Initially, $n$ \emph{robots} are located at the origin. Each robot can move independently with a maximum speed of 1, and their movement is \emph{not} restricted to $\mathcal{C}$. A treasure is \emph{discovered} when a robot's position coincides with it. Upon discovery, all robots instantly learn the treasure's location through communication; however, each robot must individually \emph{visit} the treasure's locations to satisfy the objective. The goal is to determine a strategy that minimizes the time required for every robot to visit every treasure.

Formally, let $\mathcal{C}^k$ denote the set of all possible configurations of $k$ treasures on the circle~$\mathcal{C}$. Let $\mathcal{S}_k^n$ be the set of all valid strategies for $n$ robots. For a given strategy $S \in \mathcal{S}_k^n$ and a placement of treasures $\mathbf{T} = (T_1, T_2, \dots, T_k) \in \mathcal{C}^k$, let $\rho_S(\mathbf{T})$ be the time required for all robots to visit all treasures. The worst-case cost of strategy $S$ is then defined as:
$$
\rho_S = \sup_{\mathbf{T} \in \mathcal{C}^k} \rho_S(\mathbf{T}).
$$
Our objective is to identify optimal strategies and determine the minimax cost $\rho_k^n$, defined as:
$$
\rho_k^n = \inf_{S \in \mathcal{S}_k^n} \rho_S = \inf_{S \in \mathcal{S}_k^n} \sup_{\mathbf{T} \in \mathcal{C}^k} \rho_S(\mathbf{T}).
$$

\medskip

It is evident that the task becomes more demanding as the number of treasures increases or the number of available robots decreases. From the definition of $\rho_k^n$, we directly obtain the following monotonicity properties:
\begin{align*}
\text{ for a fixed } n \in \N, \qquad & \rho_1^n \le \rho_2^n \le \ldots \le \rho_k^n \le \ldots \\
\text{ for a fixed } k \in \N, \qquad & \rho_k^1 \ge \rho_k^2 \ge \ldots \ge \rho_k^n \ge \ldots
\end{align*}

\subsection{Auxiliary Problem}\label{sec:aux_problem_def} 

Before stating the main results, we introduce a related auxiliary problem. Consider again the unit circle $\mathcal{C}$ centered at the origin. In this variant, a single robot is placed at a point $R \in \mathcal{C}$, and $k \in \N$ treasures are located at points $T_1, T_2, \dots, T_k \in \mathcal{C}$. Without loss of generality, we may assume $R = (1,0)$. Since we are interested in the worst-case placement of treasures, we may also assume that the $k$ treasures are placed in unique locations and that no treasure is located in $R$. Let $d(A,B)$ denote the Euclidean distance between any two points $A=(a_1,a_2), B=(b_1,b_2) \in \mathbb{R}^2$:
$$
d(A,B) = \sqrt{ (a_1-b_1)^2 + (a_2-b_2)^2 }. 
$$

In this variant, the robot is immediately aware of all treasure locations and seeks to visit them in the minimum possible time. As before, the robot may move freely in the plane. The optimal cost for a given configuration of treasures is the length of a shortest path starting at $R$ and passing through all points in the set $\{T_1, \ldots, T_k\}$:
$$
C(T_1, T_2, \ldots, T_k) = \min_\sigma \left( d(R,T_{\sigma(1)}) + \sum_{i=2}^k d(T_{\sigma(i-1)}, T_{\sigma(i)}) \right),
$$
where the minimum is taken over all permutations $\sigma$ of the set $\{1, 2, \dots, k\}$ associated with $k!$ possible orders of visiting the treasures. A strategy (not necessarily unique) associated with a permutation $\sigma$ yielding $C(T_1, T_2, \ldots, T_k)$ will be called \emph{optimal}.

We are interested in characterizing the maximum cost associated with the worst-case placement of the $k$ treasures. Specifically, for a given $k \in \mathbb{N}$, we aim to compute or estimate the following constant:
\begin{equation}\label{eq:ck}
c_k = \sup_{(T_1, T_2, \ldots, T_k) \in \mathcal{C}^k} C(T_1, T_2, \ldots, T_k),
\end{equation}
where the supremum is taken over all possible configurations of $k$ points on the circle $\mathcal{C}$.

\medskip

We may restrict our attention to $k \ge 2$, as the case $k=1$ is trivial: the maximum cost $c_1 = 2$ is achieved when the single treasure is placed at $T_1 = (-1,0)$, the point antipodal to~$R$. It follows directly from the definition of $c_k$ that the sequence is non-decreasing, as visiting $k+1$ treasures is at least as demanding as visiting $k$ of them. Finally, since the robot can always visit every treasure by traversing the entire circumference of $\mathcal{C}$, the cost is bounded by $2\pi$. As a result, for all $k \in \mathbb{N}$,
$$
2 = c_1 \le c_2 \le \ldots \le c_k \le 2\pi \approx 6.2832.
$$

\subsection{Summary of Our Results} 

Let us start with presenting our results for the auxiliary problem. To state the main result we need to introduce some technical definition. It looks complicated and mysterious but everything will become clear in \cref{sec:exact_ck} so stay tuned. 

For a given integer $k \ge 2$, 
$$
s_k := \max_{m_k \le \alpha \leq M_k} \left( 4 \sum_{\ell=1}^j \sin \left( 2^{\ell-2} \alpha \right) + 2 (k-2j+1) \sin \left( \frac {\pi - (2^j-1) \alpha}{k+1-2j} \right) - 2 \sin(\alpha/2) \right),
$$
where
$$
m_k := \frac {\pi}{2^{\lfloor k/2 \rfloor-1} (k-2 \lfloor k/2 \rfloor+3)-1}, \qquad \qquad 
M_k := \frac {2\pi}{k+1},
$$
and $j = j(k, \alpha)$ is the unique integer in $\{1, 2, \ldots, \lfloor k/2 \rfloor\}$ that satisfies:
$$
\frac {2\pi}{2^j (k-2j+3)-2} \le \alpha < \frac {2\pi}{2^{j-1} (k-2(j-1)+3)-2}.
$$
(For $\alpha=M_k$, we set $j=1$ which is an edge case where we have $\alpha = \frac {2\pi}{2^{j-1} (k-2(j-1)+3)-2}$.) Note that $s_k$ is well defined as it is the maximum of a continuous function over a compact set.

\medskip

Our main result for the auxiliary problem (and, in fact, the strongest result in the entire paper) is to show that this implicit definition of $s_k$ matches $c_k$: for any integer $k \ge 2$, $c_k = s_k$. (See \cref{thm:ck=sk} in \cref{sec:exact_ck}.) Constants $s_k$ are implicitly defined but can be easily approximated numerically. The values of $s_k$ for $2 \le k \le 24$ are approximated in \cref{table:optimal} in Appendix \ref{sec:appendix_tables_sk}. 

\medskip

We independently prove some explicit upper and lower bounds: for any integer $k \ge 2$,
\begin{eqnarray*}
c_k &\ge& 2(k-1)\sin \left( \frac {\pi}{k} \right) + 2 \sin \left( \frac {\pi}{2k} \right) 
~~\ge~~  2 \pi - \frac {\pi}{k} - \frac {\pi^3}{3k^2} + \frac {7\pi^3}{24k^3} \\
c_k &\le& 2 k \sin \left( \frac {\pi}{k} \right) 
~~\le~~ 2\pi - \frac{\pi^3}{3k^2} + \frac{\pi^5}{60 k^5}. 
\end{eqnarray*}
(See \cref{lem:general_LB_ck} in \cref{sec:lb_ck} for the lower bound and \cref{lem:general_UB_ck} in \cref{sec:ub_ck} for the upper bound.) These bounds, in particular, imply that $\lim_{k \to \infty} c_k = 2\pi$.

\bigskip

Let us now present our results for the original problem. It is easy to see that $\rho_k^1 = 1+2\pi \approx 7.2832$ (see \cref{sec:n2k1}). Determining the value of $\rho_1^2$ is already non-trivial. We show that $\rho_1^2 = 1 + \frac {2\pi}{3} + \sqrt{3} \approx 4.8264$ (see \cref{thm:rho12} in \cref{sec:n2k1}). These are the only two values that we managed to determine exactly. With quite a bit of effort, we only proved the following bonds for $\rho_2^2$: $5.5675 \approx 1+\frac {\pi}{3} + c_2 \le \rho_2^2 \le 6.2195$, where $6.2195$ is an upper bound for some implicitly defined constant $U$; see~\eqref{eq:def_U}. We prove it in \cref{thm:n2k2} in \cref{sec:n2k2}.

\medskip

We independently prove some general explicit upper and lower bounds for $\rho_k^n$: for any $n, k \in \N$, 
\begin{eqnarray*}
\text{(see \cref{lem:rho_lower})} \qquad \rho_k^n &\ge& 1 + c_k \\
\text{(see \cref{thm:org_LB}, equation \eqref{eq:org_LB1})} \qquad \rho_k^n &\ge& 1 + \frac {2\pi}{(k+1)n}+2k \sin \left( \frac {\pi}{k+1} \right) \\
\text{(see \cref{lem:rho_upper}, equation \eqref{eq:rho_upper_1})} \qquad \rho_k^n &\le& 1 + 2 \pi \\
\text{(see \cref{lem:rho_upper}, equation \eqref{eq:rho_upper_2})} \qquad \rho_k^n &\le& 1 + \frac {2\pi}{n} + c_k.
\end{eqnarray*}
These bounds make a connection between the auxiliary problem and the original one. There are two lower and two upper bounds and each one provides the best bound for some pair of $n$ and $k$. The corresponding bounds for $\rho_k^n$ for $1 \le n \le 5$ and $2 \le k \le 24$ are approximated in \cref{table:bounds} in Appendix \ref{sec:appendix_tables_rho}. In particular, these bounds allow us to understand asymptotic behaviour of $\rho_k^n$ when $k \to \infty$ or $n \to \infty$: 
for any $k \in \N$, $\lim_{n \to \infty} \rho_k^n = 1 + c_k$ (see \cref{cor:n_to_infty}), and 
for any $n \in \N$, $\lim_{k \to \infty} \rho_k^n = 1 + 2\pi$ (see \cref{cor:k_to_infty}).

\subsection{Related Work} 

Search problems involving mobile agents have been extensively studied in theoretical computer science and robotics. In geometric domains, these agents are commonly modelled as autonomous robots and must coordinate their trajectories to search an environment efficiently.

When operating without a prior map of the environment, research often focuses on online exploration strategies~\cite{albers2000,albers2002,deng1991,hoffmann2001}. In many applications, however, complete exploration is not the ultimate objective, but rather an intermediate step to facilitate target discovery or task completion~\cite{kleinberg1994,papadimitriou1991}.

Conversely, when the environment's geometry is known beforehand, the focus shifts to locating hidden targets within the domain. Such targets may be stationary, as in the classic cow-path problem~\cite{beck1964,bellman1963} or parallel searching in the plane~\cite{baeza1993,baeza1995}. Alternatively, targets may be mobile, giving rise to game-theoretic formulations such as pursuit-evasion or cops-and-robbers games~\cite{bonato2017,nahin2012}.

\bigskip

To the best of our knowledge, the specific variant studied in this paper has not been previously investigated. However, it is closely related to the well-studied \emph{evacuation problem}. In an evacuation scenario, there are $n$ robots and $k$ exits; however, while our objective requires every robot to visit all $k$ targets, evacuation requires every robot to reach one of the exits.

The evacuation literature explores various communication models and spatial constraints. In~\cite{czyzowicz2014}, the authors study evacuation with unknown exit locations ($k=2$ and $k=3$) under two paradigms: \emph{wireless} communication (instantaneous long-range messaging) and \emph{face-to-face} communication (information exchange restricted to co-located robots). Subsequently, Czyzowicz et al.~\cite{czyzowicz2015} improved the bounds for two robots evacuating a disk with one exit by constructing linear and triangular detours for worst-case positions.

Other variants consider different levels of prior information, agent heterogeneity, or domain topologies. In~\cite{czyzowicz2016}, robots have a map with known exit locations on a circle's perimeter but lack their own absolute initial coordinates, knowing only their relative separation. Pattanayak et al.~\cite{pattanayak2018} investigate $k=2$ exits at unknown locations separated by a known arc distance $d$. Lamprou et al.~\cite{lamprou2016} study wireless evacuation for heterogeneous robots moving at different maximum speeds, while Borowiecki et al.~\cite{borowiecki2016} generalize the problem to graph topologies where a subset of vertices contain exits.

\section{Auxiliary Problem: Visiting $k$ Points on the Cycle} 

In this section, we completely analyze the auxiliary problem described in \cref{sec:aux_problem_def} by computing $c_k$, the cost of visiting the worst-case placement of $k$ treasures (see (\ref{eq:ck}) for a formal definition of $c_k$ and (\ref{eq:s_k}) for the definition of the associated value). In \cref{sec:connection_aux_org} we will make a connection between the auxiliary problem and the original one but here we concentrate exclusively on the auxiliary problem. 

We first provide some suboptimal but explicit lower and upper bounds for $c_k$; see \cref{sec:lb_ck} and \cref{sec:ub_ck}, respectively. To highlight the difficulty of understanding the exact values of $c_k$, we first warm up with analyzing $c_2$ in \cref{sec:c2}. The general case $k \ge 2$ is investigated in \cref{sec:exact_ck}.  The main result, \cref{thm:ck=sk}, provides an exact (but implicitly defined) value of $c_k$ for any $k \ge 2$. To show the main result we first prove some useful properties of optimal strategies in \cref{sec:optimal_strategies} before dealing independently with upper and lower bounds for $c_k$ (see \cref{sec:ck_ge_sk} and \cref{sec:ck_le_sk}, respectively). 

\subsection{Suboptimal (but Explicit) Lower Bounds} \label{sec:lb_ck}

Having noted that $c_1 = 2$, we begin with a general lower bound for $c_k$ for any integer $k \ge 2$. Although this bound is not tight (except the case $k=3$), it is explicit and straightforward to derive. Notably, it establishes that $c_k$ approaches $2\pi$ as the number of treasures grows large.

\begin{plemma}\label{lem:general_LB_ck}
For any integer $k \ge 2$, 
\begin{eqnarray}
c_k &\ge& 2(k-1)\sin \left( \frac {\pi}{k} \right) + 2 \sin \left( \frac {\pi}{2k} \right) \label{eq:ck-lower-bound} \\
&\ge& 2 \pi - \frac {\pi}{k} - \frac {\pi^3}{3k^2} + \frac {7\pi^3}{24k^3}. \label{eq:ck-lower-bound2}
\end{eqnarray}
In particular, $\lim_{k \to \infty} c_k = 2\pi$.
\end{plemma}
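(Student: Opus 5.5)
The plan is to exhibit one explicit configuration of $k$ treasures and show that every visiting order for it costs at least the right-hand side of \eqref{eq:ck-lower-bound}. Since $c_k$ is a supremum over configurations, this gives the bound directly. The configuration is the regular $k$-gon rotated so that $R=(1,0)$ sits at the midpoint of an arc between two consecutive vertices. Concretely, place $T_i$ at angle $\frac{\pi}{k} + \frac{2\pi (i-1)}{k}$ for $i=1,\dots,k$. All treasures are distinct and none coincides with $R$, so the configuration is admissible.

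The lower bound on $C(T_1,\dots,T_k)$ is a term-by-term estimate of the path cost. It needs no case analysis over permutations, and I expect this to be the only conceptual step.
\begin{itemize}
\item Any two distinct treasures subtend a central angle of at least $2\pi/k$ (measured the short way, so the angle is at most $\pi$). Hence their chord satisfies $d(T_a,T_b)=2\sin(\theta/2)\ge 2\sin(\pi/k)$, because $\sin$ is increasing on $[0,\pi/2]$.
\item Every treasure subtends an angle of at least $\pi/k$ with $R$, so $d(R,T_i)\ge 2\sin(\pi/(2k))$.
\end{itemize}
Now fix a permutation $\sigma$. The first leg $d(R,T_{\sigma(1)})$ is at least $2\sin(\pi/(2k))$. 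Each of the remaining $k-1$ legs joins two distinct treasures and so is at least $2\sin(\pi/k)$. Summing gives the bound for every $\sigma$, hence for $C$, and therefore $c_k\ge C(T_1,\dots,T_k)$ yields \eqref{eq:ck-lower-bound}. Equality is in fact attained by walking to the nearest treasure and then around the polygon, though this is not needed.

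For \eqref{eq:ck-lower-bound2}, apply $\sin x\ge x-x^3/6$ for $x\ge 0$ to both sine terms. This gives
$$
2(k-1)\left(\frac{\pi}{k}-\frac{\pi^3}{6k^3}\right)+2\left(\frac{\pi}{2k}-\frac{\pi^3}{48k^3}\right).
$$
Expanding, the linear terms give $2\pi-\frac{2\pi}{k}+\frac{\pi}{k}=2\pi-\frac{\pi}{k}$. The cubic terms give $-\frac{\pi^3}{3k^2}+\frac{\pi^3}{3k^3}-\frac{\pi^3}{24k^3}=-\frac{\pi^3}{3k^2}+\frac{7\pi^3}{24k^3}$, which is exactly the claimed expression.

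Finally, combine this with the trivial upper bound $c_k\le 2\pi$ noted earlier, obtained by traversing the whole circle. The right-hand side of \eqref{eq:ck-lower-bound2} tends to $2\pi$, so by squeezing $\lim_{k\to\infty}c_k=2\pi$. The only place needing care is confirming that the minimal angular separations really are $2\pi/k$ between treasures and $\pi/k$ from $R$; both follow immediately from the construction.
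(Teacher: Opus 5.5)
Your proposal is correct and follows essentially the same route as the paper. Both use the regular $k$-gon rotated so that $R$ bisects an arc, bound each leg below (the first leg by $2\sin(\pi/(2k))$, the later legs by the side $2\sin(\pi/k)$), truncate with $\sin x \ge x - x^3/6$, and squeeze against $c_k \le 2\pi$.
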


\begin{proof}
Fix any integer $k \ge 2$. To establish a lower bound for $c_k$, we consider a configuration where the $k$ treasures form the vertices of a regular $k$-gon inscribed in $\mathcal{C}$. Let $\alpha = 2\pi/k$ be the angular distance between consecutive treasures. We orient the $k$-gon such that the robot's starting position $R = (1,0)$ is equidistant from $T_1$ and $T_k$. Under this symmetry, the $i$-th treasure $T_i = (\cos(\theta_i), \sin(\theta_i))$ is located at the angle
$$
\theta_i = \frac {\alpha}{2} + (i-1) \alpha = \frac {\alpha (2i-1)}{2} = \frac {\pi (2i-1)}{k}.
$$
By computing the cost $C(T_1, T_2, \ldots, T_k)$ for this specific placement, we obtain a lower bound for the supremum $c_k$.

First, we observe that an optimal strategy is to visit the treasures in consecutive order. Due to the symmetry of the configuration, there are four equivalent paths: starting at either $T_1$ or $T_k$ and proceeding either clockwise or anti-clockwise around the circle. Each of these four paths has an associated cost of
\begin{eqnarray*}
d(R,T_{1}) + \sum_{i=2}^k d(T_{i-1}, T_{i}) &=& 2 \sin \left( \frac {\theta_1}{2} \right) + \sum_{i=2}^k 2 \sin \left( \frac {\theta_i - \theta_{i-1}}{2} \right) \\
&=& 2 \sin \left( \frac {\pi}{2k} \right) + 2 (k-1) \sin \left( \frac {\pi}{k} \right).
\end{eqnarray*}
To confirm that no other permutation $\sigma$ of the set $\{1, 2, \ldots, k\}$ yields a smaller cost, note that $T_1$ and $T_k$ are the treasures nearest to $R$; thus, $d(R, T_{\sigma(1)}) \ge d(R, T_1)$. Furthermore, because the treasures form a regular $k$-gon, the minimum distance between any two distinct treasures is the side length $d(T_1, T_2)$. It follows that $d(T_{\sigma(i-1)}, T_{\sigma(i)}) \ge d(T_{i-1}, T_i)$ for all $i \in \{2, 3, \ldots, k\}$. Summing these lower bounds confirms that the sequential visiting order is optimal, thereby proving inequality \eqref{eq:ck-lower-bound}.

To prove inequality~\eqref{eq:ck-lower-bound2} we need to lower bound function $\sin(x)$ for $x \in [0,\pi]$. This task is easy since $\sin(x)$ possesses a well-behaved alternating Taylor series:
$$
\sin(x) = x - \frac{x^3}{3!} + \frac {x^5}{5!} - \frac {x^7}{7!} + \ldots.
$$
Because this is an alternating series whose terms strictly decrease in absolute value when $x$ is small enough (for example, if $x \in [0,\pi]$ and the first two terms are ignored), one can truncate the series after any negative term to get a strict lower bound. In particular, truncating after the cubic term we get the following standard lower bound: for any $x \in [0,\pi]$, $\sin(x) \ge x - x^3/6$. It follows that
\begin{eqnarray*}
c_k &\ge& 2 \sin \left( \frac {\pi}{2k} \right) + 2 (k-1) \sin \left( \frac {\pi}{k} \right) \\
&\ge& 2 \left( \frac {\pi}{2k} - \frac {\pi^3}{48k^3} \right) + 2 (k-1) \left( \frac {\pi}{k} - \frac {\pi^3}{6k^3} \right) \\
&=& 2 \pi - \frac {\pi}{k} - \frac {\pi^3}{3k^2} + \frac {7\pi^3}{24k^3}.
\end{eqnarray*}
This proves inequality~\eqref{eq:ck-lower-bound2}. 

Since, trivially, $c_k \le 2\pi$ for any $k \in \N$, we get that $\lim_{k \to \infty} c_k = 2\pi$ and the proof of the lemma is finished.
\end{proof}

\subsection{Suboptimal (but Explicit) Upper Bounds} \label{sec:ub_ck}

As already mentioned a few times, $c_k \le 2\pi$ for any $k \ge 2$. Below, we provide a slightly better upper bound reducing the gap between the upper bound~\eqref{eq:ck-upper-bound} and the lower bound~\eqref{eq:ck-lower-bound} to 
$$
2 \sin \left( \frac {\pi}{k} \right) - 2 \sin \left( \frac {\pi}{2k} \right).
$$
The gap between the two corresponding weaker bounds~\eqref{eq:ck-upper-bound2} and~\eqref{eq:ck-lower-bound2} is 
$$
\left( 2\pi - \frac{\pi^3}{3k^2} + \frac{\pi^5}{60 k^5} \right) 
- \left( 2 \pi - \frac {\pi}{k} - \frac {\pi^3}{3k^2} + \frac {7\pi^3}{24k^3} \right) 
= 
\frac {\pi}{k} - \frac {7\pi^3}{24k^3} + \frac{\pi^5}{60 k^5}.
$$

\begin{plemma}\label{lem:general_UB_ck}
For any integer $k \ge 2$, 
\begin{eqnarray}
c_k &\le& 2 k \sin \left( \frac {\pi}{k} \right) \label{eq:ck-upper-bound} \\
&\le& 2\pi - \frac{\pi^3}{3k^2} + \frac{\pi^5}{60 k^5}. \label{eq:ck-upper-bound2}
\end{eqnarray}
\end{plemma}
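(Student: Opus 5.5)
Fix a configuration of $k$ distinct treasures on $\mathcal{C}$, none at $R$. Together with $R$ they give $k+1$ points on the circle, so they cut $\mathcal{C}$ into $k+1$ arcs. Let their angular lengths be $\beta_0, \beta_1, \ldots, \beta_k > 0$, listed in circular order starting at $R$, with $\sum_{i=0}^k \beta_i = 2\pi$.

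\textbf{Step 1: a family of candidate routes.} The robot chooses one arc to skip and travels along the chords of the other $k$ arcs. Concretely, to skip the arc of length $\beta_m$, it first goes from $R$ in one direction through all treasures up to that arc. It then returns along the chords back to $R$ and continues in the other direction through the remaining treasures.

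Done literally, the return retraces chords, so instead we use the following simpler family, which keeps every route cost a sum of at most $k$ chords. Skip the arc adjacent to $R$ on one side. The robot walks around the circle in the other direction along $k$ consecutive chords. This costs
$$2\sum_{i \ne 0}\sin(\beta_i/2) \qquad \text{or} \qquad 2\sum_{i\ne k}\sin(\beta_i/2),$$
depending on the direction.

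\textbf{Step 2: averaging the candidates.} Taking the better of the two directions gives
$$C \le 2\sum_{i=0}^k \sin(\beta_i/2) - 2\max\bigl(\sin(\beta_0/2),\sin(\beta_k/2)\bigr).$$
This alone does not suffice: if both arcs adjacent to $R$ are tiny, we only recover roughly the full perimeter. So we need more candidates.

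Consider skipping an arbitrary arc $\beta_m$ with $1\le m\le k-1$. The robot goes clockwise to the treasure just before arc $m$ along chords. It then jumps by a single chord directly to the treasure adjacent to $R$ on the other side, and proceeds along chords to the far endpoint of arc $m$. The jump chord is at most the chord of the arc $\beta_0+\cdots+\beta_{m-1}$, so by subadditivity of $\sin(x/2)$ on $[0,2\pi]$ it is at most the sum of the traversed chords. That bound is too weak.

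\textbf{Step 3: the key Jensen step (main obstacle).} The intended clean route is to show that the cost is at most the sum of $k$ chords whose arcs have total angle at most $2\pi$. Then concavity of $\sin$ on $[0,\pi]$, via Jensen's inequality, gives a bound of at most $2k\sin(\pi/k)$. (Chords of arcs longer than $\pi$ can be replaced by the complementary arc, which only helps.)

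So the real task is to choose which of the $k+1$ arcs to drop so that the cost is at most the sum of the remaining $k$ chords. The one-direction walk from $R$ does exactly this when the dropped arc is $\beta_0$ or $\beta_k$. Jensen applied to the remaining $k$ arcs, whose total is at most $2\pi$, then gives
$$2\sum \sin(\beta_i/2)\le 2k\sin\!\left(\frac{\sum\beta_i}{2k}\right)\le 2k\sin(\pi/k).$$
This already proves the bound, with no need to drop the largest arc. The cost of walking from $R$ in one direction through all $k$ treasures is the sum of $k$ chords covering total angle $2\pi-\beta_k\le 2\pi$. Concavity of $x\mapsto\sin(x/2)$ on $[0,2\pi]$ then gives~\eqref{eq:ck-upper-bound} directly.

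The only point I expect to need care is checking that Jensen applies on the whole range $[0,2\pi]$. It does, since $\sin(x/2)$ is concave there. The real difficulty turns out to be noticing that Steps 1--2 are unnecessary.

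\textbf{Step 4: the explicit bound~\eqref{eq:ck-upper-bound2}.} Use the truncated alternating Taylor series $\sin x\le x-x^3/6+x^5/120$ for $x\in[0,\pi]$ with $x=\pi/k$. This gives
$$2k\sin(\pi/k)\le 2\pi-\frac{\pi^3}{3k^2}+\frac{\pi^5}{60k^4}.$$
The term $\pi^5/(60k^4)$ is slightly weaker than the stated $\pi^5/(60k^5)$. So I would recheck the stated exponent, or handle small $k$ numerically, since the truncation gives $k^4$.
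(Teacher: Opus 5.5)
Your Step 3 is the paper's proof of \eqref{eq:ck-upper-bound}. You bound $C$ by the cost of the one-directional walk from $R$ through all $k$ treasures. That cost is a sum of $k$ chords $2\sin(\beta_i/2)$ whose arcs total $2\pi-\beta_k<2\pi$. Jensen's inequality, followed by monotonicity of $\sin$ on $[0,\pi/2]$ (valid since $\pi/k\le\pi/2$), then gives the bound. Steps 1--2 play no role and can be deleted.

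For \eqref{eq:ck-upper-bound2}, your arithmetic is right and the stated bound is wrong: $2k\cdot\frac{\pi^5}{120k^5}=\frac{\pi^5}{60k^4}$. The paper's proof applies the same truncation and miswrites the exponent.

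Your suggested fallback of handling small $k$ numerically would not rescue the $k^5$ version. The second inequality of the chain fails for every $k\ge 2$. Using the next lower Taylor bound $\sin x\ge x-\frac{x^3}{6}+\frac{x^5}{120}-\frac{x^7}{5040}$, one gets
\[
2k\sin\left(\frac{\pi}{k}\right)-\left(2\pi-\frac{\pi^3}{3k^2}+\frac{\pi^5}{60k^5}\right)\ \ge\ \frac{\pi^5}{60k^4}\left(1-\frac{1}{k}-\frac{\pi^2}{42k^2}\right)>0.
\]
For instance, at $k=2$ the middle term is $4$, while the claimed bound is about $3.859$. So the correct second line is $2\pi-\frac{\pi^3}{3k^2}+\frac{\pi^5}{60k^4}$, which is what you proved.

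The weaker claim that $c_k$ itself lies below the $k^5$ expression may well be true, since the tabulated values of $s_k$ sit below it. However, it cannot be obtained by passing through $2k\sin(\pi/k)$.
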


\begin{proof}
Suppose that $k$ treasures are located at $T_i = (\cos(\theta_i), \sin(\theta_i))$, $i \in \{1, 2, \ldots, k\}$, where $0 < \theta_1 < \theta_2 < \ldots < \theta_k < 2\pi$. Fix $\alpha_1 = \theta_1$ and let $\alpha_i = \theta_i - \theta_{i-1}$, $i \in \{2, 3, \ldots, k\}$, be the angular distance between consecutive treasures. Clearly, visiting treasures following a~counter-clockwise order is not better than following the optimal order, yielding the following: 
\begin{eqnarray*}
C(T_1, T_2, \ldots, T_k) &=& \min_\sigma \left( d(R,T_{\sigma(1)}) + \sum_{i=2}^k d(T_{\sigma(i-1)}, T_{\sigma(i)}) \right) \\
&\le& d(R,T_{1}) + \sum_{i=2}^k d(T_{i-1}, T_{i}) ~~=:~~ U(T_1, T_2, \ldots, T_k).
\end{eqnarray*}
We immediately get that
$$
c_k = \sup_{(T_1, T_2, \ldots, T_k) \in \mathcal{C}^k} C(T_1, T_2, \ldots, T_k) \le \sup_{(T_1, T_2, \ldots, T_k) \in \mathcal{C}^k} U(T_1, T_2, \ldots, T_k) =: u_k.
$$

Computing $c_k$ is non-trivial but computing $u_k$ is easy. In the latter case, the optimal placement of the treasures is to place $T_k$ infinitesimally close to $R$ (on the clockwise side) and place the remaining $k-1$ points as vertices of the regular $k$-gon inscribed in $\mathcal{C}$ where the last point is $R$. 

Formally, suppose that $\theta_k = 2\pi - \epsilon$ for some arbitrarily small $\epsilon > 0$. Note that 
$$
U(T_1, T_2, \ldots, T_k) = \sum_{i=1}^k 2 \sin \left( \frac {\alpha_i}{2} \right) = 2 \sum_{i=1}^k \sin \left( \frac {\alpha_i}{2} \right) 
$$
with $\sum_{i=1}^k \alpha_i/2 = \theta_k / 2 = \pi - \epsilon/2 < \pi$. Since $\sin(x)$ is a concave function on $(0,\pi)$, it follows from Jensen's Inequality that 
$$
\frac {1}{k} \sum_{i=1}^k \sin \left( \frac {\alpha_i}{2} \right) \le \sin \left( \frac {\sum_{i=1}^k \alpha_i/2}{k} \right) = \sin \left( \frac {\pi-\epsilon/2}{k} \right).
$$
We get that 
$$
U(T_1, T_2, \ldots, T_k) \le 2 k \sin \left( \frac {\pi-\epsilon/2}{k} \right),
$$
and the upper bounds is achieved when $\alpha_1 = \alpha_2 = \ldots = \alpha_k = \frac{2\pi-\epsilon}{k}$.
In other words, given that $\theta_k = 2\pi - \epsilon$ for some arbitrarily small $\epsilon > 0$, the worst-case placement for the remaining treasures is to distribute them evenly. We conclude that 
$$
u_k = \lim_{\epsilon \to 0} 2 k \sin \left( \frac {\pi-\epsilon/2}{k} \right) = 2 k \sin \left( \frac {\pi}{k} \right),
$$
which proves~\eqref{eq:ck-upper-bound}.

\medskip

As observed in the proof of \cref{lem:general_LB_ck}, one can truncate the alternating Taylor series of $\sin(x)$ after any positive term to get a strict upper bound. In particular, for any $x \in [0,\pi]$, $\sin(x) \le x - x^3/6 + x^5/120$. Applying this inequality, we get that 
$$
c_k\leq 2\pi - \frac{\pi^3}{3k^2} + \frac{\pi^5}{60 k^5}.
$$
This proves~\eqref{eq:ck-upper-bound2} which finishes the proof of the lemma.
\end{proof}

\subsection{Warming Up with the Case $k=2$} \label{sec:c2}

\cref{lem:general_LB_ck} implies $c_2 \ge 2+\sqrt{2} \approx 3.4142$ whereas \cref{lem:general_UB_ck} implies that $c_2 \le 4$. The true value of $c_2$ is in between the two bounds, slightly larger than the above lower bound.

\begin{plemma}
The worst-case cost for $k=2$ is given by:
$$
c_2 = \frac{ \sqrt{33}+3 }{16} \sqrt{ 30 + 2\sqrt{33}} \approx 3.5203.
$$
\end{plemma}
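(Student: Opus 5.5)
Write the positions of the two treasures by the three arcs they cut off. Let $T_1, T_2$ be at angles $0<\theta_1<\theta_2<2\pi$. Set $x=\theta_1$, $y=\theta_2-\theta_1$ and $z=2\pi-\theta_2$, so that $x+y+z=2\pi$ with all three positive.

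Every chord subtending an angle $\phi\in[0,2\pi]$ has length $2\sin(\phi/2)$. Hence $d(R,T_1)=2\sin(x/2)$, $d(T_1,T_2)=2\sin(y/2)$ and $d(R,T_2)=2\sin(\theta_2/2)=2\sin(z/2)$. With only two treasures there are just two orders, and both traverse the chord $T_1T_2$. Therefore
$$
C(T_1,T_2)=2\sin(y/2)+2\min\bigl(\sin(x/2),\sin(z/2)\bigr),
$$
and $c_2$ is the supremum of this expression over $x+y+z=2\pi$.

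\textbf{Step 1: reduce to $x=z$.} Fix $y$ and put $S=(x+z)/2=\pi-y/2\in(0,\pi)$. Let $a=\min(x,z)/2$; then $a\le S/2\le\pi/2$. Since $\sin$ is increasing on $[0,\pi/2]$, we get
$$
\min\bigl(\sin(x/2),\sin(z/2)\bigr)\le\sin a\le\sin(S/2).
$$
Both inequalities become equalities when $x=z$. So for each fixed $y$ the cost is maximized by the symmetric placement $x=z$, that is, $T_1,T_2$ symmetric about the $x$-axis.

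\textbf{Step 2: optimize in one variable.} With $x=z$ we have $y=2\pi-2x$ and $\sin(y/2)=\sin x$. We must maximize
$$
f(x)=2\sin(x/2)+2\sin x, \qquad x\in[0,\pi].
$$
Setting $f'(x)=\cos(x/2)+2\cos x=0$ and writing $c=\cos(x/2)$ turns the condition into $4c^2+c-2=0$. Its admissible root is $c=(\sqrt{33}-1)/8\in(0,1)$, and this critical point is unique in the interval.

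At the critical point, $f=2s(1+2c)$ with $s=\sin(x/2)=\sqrt{1-c^2}$. Here $1+2c=(3+\sqrt{33})/4$ and
$$
s=\sqrt{1-\tfrac{34-2\sqrt{33}}{64}}=\tfrac{1}{8}\sqrt{30+2\sqrt{33}}.
$$
This gives the claimed value $\frac{\sqrt{33}+3}{16}\sqrt{30+2\sqrt{33}}\approx 3.5203$. It remains to compare with the endpoints: $f(0)=0$ and $f(\pi)=2$ (degenerate configurations, approached only as a limit). Both are smaller, so the critical point is the global maximum and the supremum is attained.

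\textbf{Main obstacle.} The only subtle point is Step 1: justifying the symmetric reduction and handling the case where $y$ or $x+z$ exceeds $\pi$. Writing every chord as $2\sin(\cdot/2)$ of the corresponding arc avoids this, since that formula is valid for arcs up to $2\pi$. The rest is a routine single-variable calculus check.
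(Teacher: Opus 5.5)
Your proof is correct, and it takes a somewhat different route from the paper's.

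The paper assumes without loss of generality that $T_1$ is the nearer treasure. It fixes the angle $\alpha\in[0,\pi]$ of $T_1$ and then chooses the worst $T_2$ subject to $d(R,T_1)\le d(R,T_2)$. This forces a case split:
\begin{itemize}
\item for $\alpha\le\pi/2$ the antipode of $T_1$ is admissible, giving $2+2\sin(\alpha/2)\le 2+\sqrt2$;
\item for $\alpha\ge\pi/2$ the best choice is the reflection of $T_1$ across the $x$-axis, giving $2\sin(\alpha/2)+2\sin\alpha$ on $[\pi/2,\pi]$.
\end{itemize}
The two regimes are then compared.

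You instead fix the arc $y$ of the chord $T_1T_2$, which both visiting orders must traverse. A one-line monotonicity argument then shows that the symmetric choice $x=z$ maximizes the remaining term for every $y$. This has several consequences:
\begin{itemize}
\item The case split disappears. The paper's antipodal configurations are automatically dominated; for instance, at $y=\pi$ both give at most $2+\sqrt2$.
\item You are left with a single optimization of the same function over all of $[0,\pi]$.
\item Your Step 1 is effectively a self-contained, elementary instance, via monotonicity rather than concavity and averaging, of the symmetry property the paper only proves in general later (\cref{lem:symmetry}).
\item You supply two justifications the paper leaves implicit: why the reflected (symmetric) placement is the worst one, and why the unique critical point is a global maximum (the endpoint comparison).
\end{itemize}

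What the paper's parametrization buys in exchange is an explicit description of the adversary's best response to each position of the nearer treasure.
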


\begin{proof}
Let $T_1, T_2 \in \mathcal{C}$ be the locations of two treasures. Without loss of generality, we may assume $d(R,T_1) \le d(R,T_2)$, meaning the robot is initially at least as close to $T_1$ as it is to $T_2$. The cost to visit both is thus:
$$
C(T_1, T_2) = \min \Big( d(R,T_1) + d(T_1,T_2), d(R,T_2) + d(T_2+T_1) \Big) = d(R,T_1) + d(T_1,T_2).
$$
By symmetry, we may set $T_1 = (\cos(\alpha), \sin(\alpha))$ for some $\alpha \in [0, \pi]$, so that $d(R, T_1) = 2\sin(\alpha/2)$.

\medskip

Let us consider two cases for a possible angle $\alpha \in [0, \pi]$. 

\medskip

\noindent \textbf{Case 1. $\alpha \le \pi/2$}:
To maximize $C(T_1, T_2)$, $T_2$ should be placed at the antipode of $T_1$, that is, $T_2 = (\cos(\alpha+\pi), \sin(\alpha+\pi))$. This yields a distance $d(T_1, T_2) = 2$ and a total cost of $2 \sin(\alpha/2) + 2$. The maximum on this interval is $2+\sqrt{2} \approx 3.4142$, occurring at the end of the range for parameter $\alpha$, at $\alpha = \pi/2$.

\medskip

\noindent \textbf{Case 2. $\alpha \ge \pi/2$}:
In this range, the cost is maximized when $T_2$ is placed at the reflection of $T_1$ across the $x$-axis, that is, $T_2 = (\cos (\alpha), - \sin (\alpha))$, yielding $d(T_1, T_2) = 2 \sin \alpha$. (Recall that it is assumed that $d(R,T_1) \le d(R,T_2)$.) The cost function becomes:
$$
f(\alpha) := 2 \sin(\alpha/2) + 2 \sin(\alpha).
$$ 
To find the extremum, we differentiate with respect to $\alpha$:
\begin{eqnarray*}
f'(\alpha) &=& \cos(\alpha/2) + 2 \cos(\alpha) \\
&=& \cos(\alpha/2) + 2 ( 2 \cos(\alpha/2)^2 - 1) \\
&=& 4 x^2 + x - 2, 
\end{eqnarray*}
where $x = \cos(\alpha / 2)$. Setting $f'(\alpha) = 0$ and applying the quadratic formula we conclude that the maximum over $\alpha \in [\pi/2, \pi]$ is obtained when $\alpha = \alpha_0 \approx 1.8719$ that satisfies $\cos(\alpha_0 / 2) = \frac {\sqrt{33}-1}{8}$. At this critical point, the associated cost is equal to
\begin{eqnarray*}
f(\alpha_0) &=& 2 \sin( \alpha_0/2) + 2 \sin(\alpha_0) \\
&=& 2 \sin( \alpha_0/2) + 4 \sin(\alpha_0/2) \cos(\alpha_0/2) \\
&=& 2 \sin( \alpha_0/2) \Big( 1 + 2 \cos(\alpha_0/2) \Big) \\
&=& 2 \sqrt{ 1 - \cos^2( \alpha_0/2)} \Big( 1 + 2 \cos(\alpha_0/2) \Big) \\
&=& \frac{ \sqrt{33}+3 }{16} \sqrt{ 30 + 2\sqrt{33}} \approx 3.5203.
\end{eqnarray*}

Comparing the two cases, we conclude that 
$$
c_2 = \max \left( \sqrt{2}+2, \frac{ \sqrt{33}+3 }{16} \sqrt{ 30 + 2\sqrt{33}} \right) = \frac{ \sqrt{33}+3 }{16} \sqrt{ 30 + 2\sqrt{33}},
$$
which finishes the proof of the lemma.
\end{proof}

\subsection{Exact (but Implicit) Values} \label{sec:exact_ck}

The case $k=2$ was relatively straightforward to investigate, yet the worst-case cost and its corresponding adversarial placement of treasures were non-trivial. This complexity suggests that finding an explicit, closed-form expression for $c_k$ for an arbitrary $k$ is unlikely. Nevertheless, numerical approximations of $c_k$ and corresponding placements of treasures reveal a distinct structural pattern, leading us to define and investigate the following family of configurations. The code used to identify the pattern can be found in Appendix \ref{sec:appendix_code_special_family}. 

\begin{pdefinition}[Special family of treasure placements]\label{def:special_family}
Fix an integer $k \ge 2$ and suppose that $\alpha \in \R$ satisfies the following bounds:
\begin{eqnarray*}
\alpha & \ge & m_k = \frac {2\pi}{2^{\lfloor k/2 \rfloor} (k-2 \lfloor k/2 \rfloor+3)-2} = \frac {\pi}{2^{\lfloor k/2 \rfloor-1} (k-2 \lfloor k/2 \rfloor+3)-1} \\
\alpha & \leq & M_k = \frac {2\pi}{2^{0} (k-2 \cdot 0+3)-2} = \frac {2\pi}{k+1}.
\end{eqnarray*}

Let $j = j(k, \alpha)$ be the unique integer in $\{1, 2, \ldots, \lfloor k/2 \rfloor\}$ that satisfies:
\begin{equation}
\frac {2\pi}{2^j (k-2j+3)-2} \le \alpha < \frac {2\pi}{2^{j-1} (k-2(j-1)+3)-2}. \label{eq:bound_on_alpha}
\end{equation}
Next, we define the sequence of angular increments $\beta_\ell$ as follows:
\begin{equation*}
\beta_\ell =
\begin{cases}
2^{\ell-1} \alpha & \text{ if } \ell \in \{1, 2, \ldots, j \} \\
\frac {2\pi - 2(2^j-1) \alpha}{k+1-2j} & \text{ if } \ell \in \{j+1, j+2, \ldots, \lfloor k/2 \rfloor \}.
\end{cases}
\end{equation*}

A placement of treasures $\textbf{T} = (T_1, T_2, \ldots, T_k) \in \mathcal{C}^k$ is called \emph{symmetric} if $T_i = (\cos(\alpha_i), \sin(\alpha_i))$ for some angles $\alpha_i \in (0, 2\pi)$ satisfying 
\begin{equation}\label{def:symmetric}
\alpha_{k+1-i} = 2\pi - \alpha_i \qquad \qquad \text{ for any } i \in \{1, 2, \ldots, \lceil k/2 \rceil \}.
\end{equation}
(Note that if $k$ is odd, then this condition implies that $\alpha_{\lceil k/2 \rceil} = \pi$.)

A symmetric placement of treasures $\textbf{T} \in \mathcal{C}^k$ is called \emph{$(k, \alpha)$-special} if the angles additionally satisfy
$$
\alpha_ i = \sum_{\ell=1}^i \beta_\ell \qquad \qquad \text{ for any } i \in \{1, 2, \ldots, \lfloor k/2 \rfloor \}.
$$
\end{pdefinition}

Although the definition of a $(k, \alpha)$-special placement may look mysterious, the underlying geometric intuition is quite straightforward. (See \cref{fig:special_placement}.) The angular distances between consecutive treasures initially double with each step. At a specific threshold (determined by $j=j(k, \alpha)$), this doubling stops, and all remaining treasures are distributed equidistantly. The stopping point $T_{j+1}$ has the following crucial property: doubling the angle stops at the very first time when the remaining equal distribution of points would produce an angle that is smaller than or equal to what doubling would give. In particular, we do not double the angle at $\ell=j+1$ since $\frac{2\pi - 2(2^{\ell-1}-1) \alpha}{k+1-2(\ell-1)} = \frac{2\pi - 2(2^{j}-1) \alpha}{k+1-2j}$ is at most $2^{\ell-1} \alpha = 2^j \alpha$ (so we fix $\beta_\ell = \frac{2\pi - 2(2^{j}-1) \alpha}{k+1-2j}$). This property holds due to the lower bound on $\alpha$ (see~\eqref{eq:bound_on_alpha}). On the other hand, we still double the angle at $\ell=j$ since $\frac{2\pi - 2(2^{\ell-1}-1) \alpha}{k+1-2(\ell-1)} = \frac{2\pi - 2(2^{j-1}-1) \alpha}{k+3-2j}$ is grater than $2^{\ell-1} \alpha = 2^{j-1} \alpha$ (so we fix $\beta_\ell = 2^{\ell-1} \alpha$). This property holds due to the upper bound on $\alpha$ (see~\eqref{eq:bound_on_alpha}). To guarantee that such threshold value is reached (that is, that $1 \le j \le \lfloor k/2 \rfloor$) we impose some bounds on $\alpha$, namely, $m_k \le \alpha \leq M_k$. Finally, note that the doubling angle implies that for $1 \leq i < j$ we have that $d(T_i, T_k) = d(T_i,T_{i+1})$ whereas the stopping rule implies that $d(T_{j}, T_k) > d(T_{j},T_{j+1})$. 

\begin{figure}[ht!]
    \centering
    \includegraphics[width=0.45\linewidth]{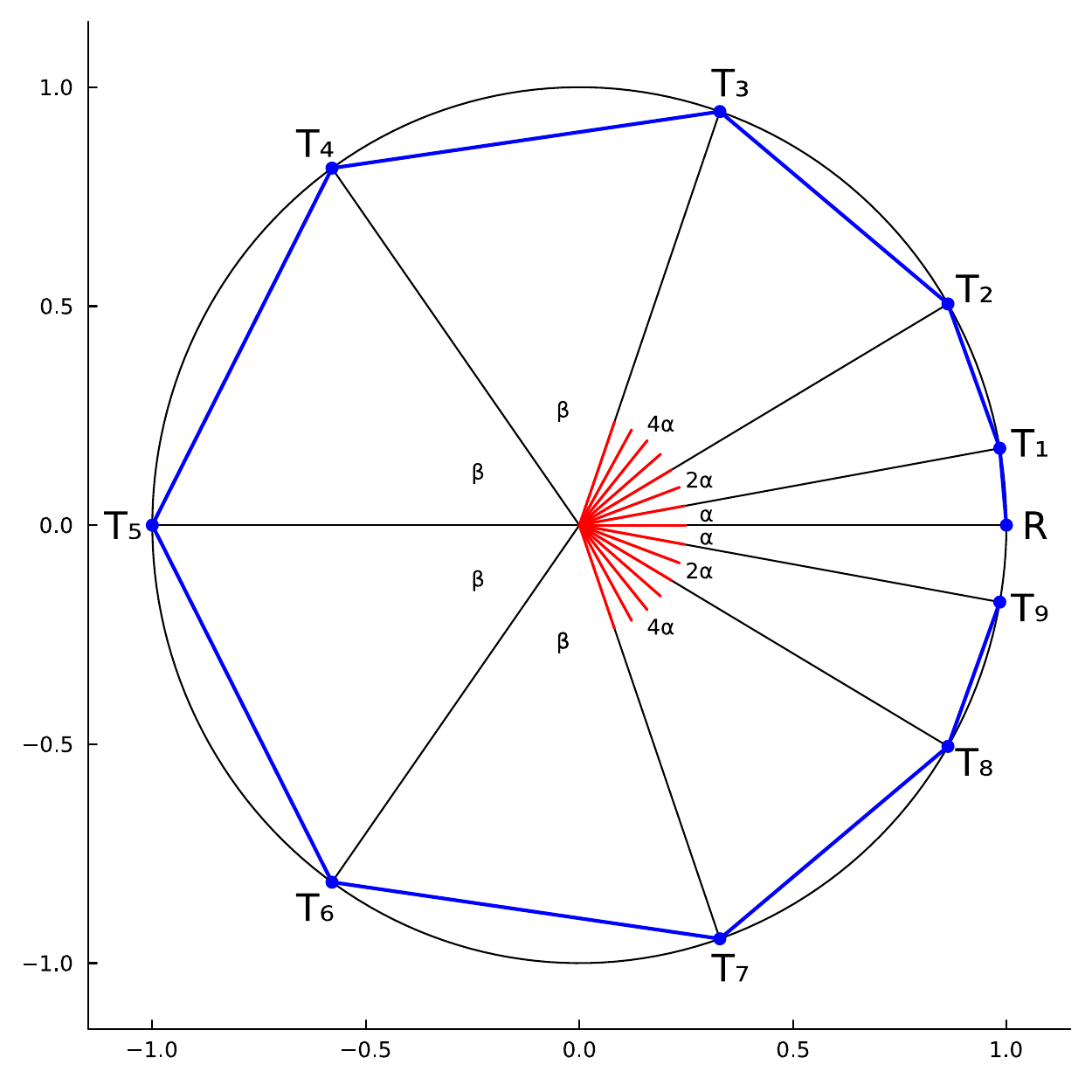}
    \includegraphics[width=0.45\linewidth]{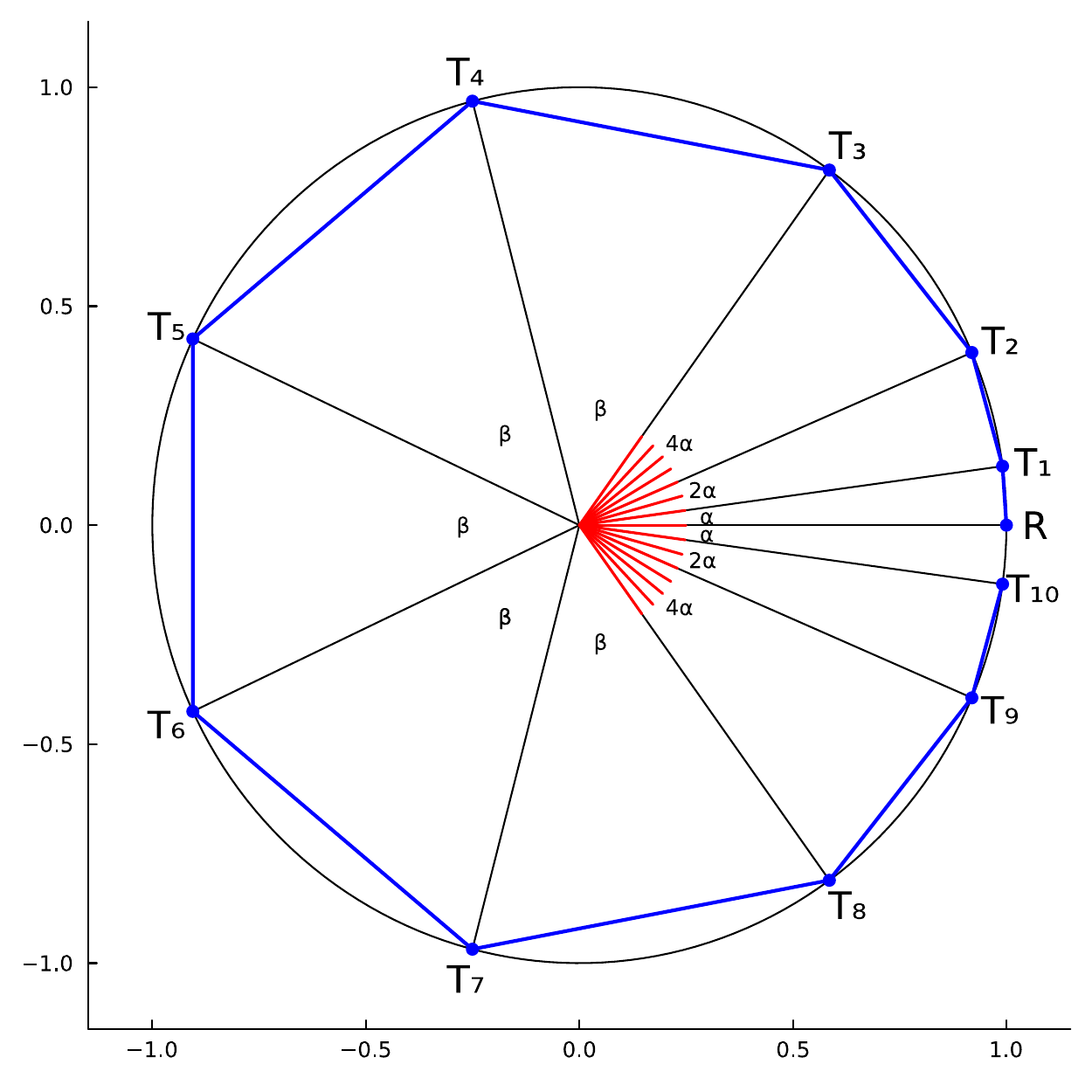}
    \caption{Examples of $(9,\alpha)$-special placement (Left) and $(10,\alpha)$-special placement (Right). In both cases, the transition occurs at $j=j(k, \alpha)=3$.}
    \label{fig:special_placement}
\end{figure}

\medskip

Consider any $(k, \alpha)$-special placement of treasures $\textbf{T} = (T_1, T_2, \ldots, T_k) \in \mathcal{C}^k$. Let $s_{k, \alpha}$ denote the cost of visiting $\textbf{T}$ in the counter-clockwise order implied by the identity permutation $\sigma(i)=i$. This cost is given by the following formula
\begin{eqnarray}
s_{k, \alpha} &=& \sum_{\ell=1}^k 2 \sin \left( \beta_\ell/2 \right) \nonumber \\
&=& 4 \sum_{\ell=1}^j \sin \left( 2^{\ell-2} \alpha \right) + 2 (k-2j+1) \sin \left( \frac {\pi - (2^j-1) \alpha}{k+1-2j} \right) - 2 \sin(\alpha/2). \label{eq:s_ka}
\end{eqnarray}
Finally, we define $s_k$ as the maximum cost over all valid initial angles $\alpha$:
\begin{equation}\label{eq:s_k}
s_k = \max_{m_k \le \alpha \leq M_k} s_{k, \alpha}.
\end{equation}
In \cref{table:optimal} (see \cref{sec:appendix_tables_sk}), approximated values of $s_k$ (for $k \in \{2, 3, \ldots, 24\}$) can be found, together with the corresponding optimal values of $\alpha$ and $j$. They were generated using the code that can be found in Appenix \ref{sec:appendix_code_sk}.

\medskip

Our ultimate goal is to establish that this carefully constructed configuration strictly captures the worst-case cost.

\begin{ptheorem}\label{thm:ck=sk}
For any integer $k \ge 2$, $c_k = s_k$.
\end{ptheorem}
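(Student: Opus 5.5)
The proof splits into the two inequalities $c_k \ge s_k$ and $c_k \le s_k$, following the roadmap announced at the start of the section: first some structural properties of optimal strategies, then each direction separately.

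\textbf{Lower bound $c_k \ge s_k$.} Fix $\alpha \in [m_k, M_k]$ and the $(k,\alpha)$-special placement $\mathbf{T}$. Since $c_k \ge C(\mathbf{T})$, it suffices to show $C(\mathbf{T}) = s_{k,\alpha}$, i.e.\ that the counter-clockwise sweep is optimal for this placement. I would first establish general properties of optimal tours on the circle:
\begin{itemize}
\item an optimal order may be taken non-crossing;
\item every tour therefore has the form: sweep in one direction to some treasure $T_i$, then jump back across $R$ (chord $T_iT_{i'}$) and sweep the remaining arc.
\end{itemize}
This reduces the competitors to $O(k)$ ``zigzag'' orders. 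By the symmetry of the special placement, it is enough to compare the sweep with orders that go counter-clockwise to $T_i$ and then jump to $T_k$. Such a detour replaces the edge $T_iT_{i+1}$ by the chord $T_iT_k$, and it saves nothing exactly when $d(T_i,T_k) \ge d(T_i,T_{i+1})$. The doubling rule is designed to make these equalities hold for $i<j$, and the stopping rule gives the strict inequality at $i=j$. For $i>j$ the inequality follows because the remaining gaps are equal and the chord to $T_k$ only grows. The longer zigzags (return, go back across, and so on) I would handle by the same chord-versus-gap comparison applied inductively. Taking the maximum over $\alpha$ then gives $c_k \ge s_k$.

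\textbf{Upper bound $c_k \le s_k$.} Take an arbitrary placement with angles $0<\theta_1<\dots<\theta_k<2\pi$ and gaps $\gamma_0,\dots,\gamma_k$, where $\gamma_0=\theta_1$ and $\gamma_k = 2\pi-\theta_k$. Bound $C$ by the minimum over the two sweeps and the zigzag tours. Then maximize this minimum over the gaps, as in a minimax problem. I would argue the following for a worst-case placement:
\begin{itemize}
\item By a perturbation (exchange) argument, at a maximizer every zigzag tour that is not strictly worse must be tied with the sweep. Otherwise some gap could be enlarged to increase the minimum, using concavity of $\sin$ exactly as in the proof of \cref{lem:general_UB_ck}.
\item Also by perturbation, the maximizer may be taken symmetric about the $x$-axis (averaging/symmetrization, justified by concavity).
\item The tie conditions $d(T_i,T_{k+1-i}) = d(T_i,T_{i+1})$ translate into the gap-doubling recursion.
\item Once doubling is no longer forced, Jensen's inequality makes the remaining gaps equal.
\end{itemize}
This yields a one-parameter family, namely the special placements with $\alpha$ the first gap. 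The range $[m_k,M_k]$ and the index $j(k,\alpha)$ then arise from the feasibility of the doubling. Hence $C \le \max_\alpha s_{k,\alpha} = s_k$.

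\textbf{Main obstacle.} The hardest step will be the upper bound, in particular justifying rigorously that the maximizer's binding constraints are exactly the consecutive doubling ties. The difficulty is that the objective is a minimum of many tour lengths and is not concave. I would first try an induction on $k$: peel off the outermost symmetric pair $T_1,T_k$ and compare with the $(k-2)$-problem on the remaining arc. Failing that, I would fall back on a KKT-style analysis of the piecewise-smooth minimax.
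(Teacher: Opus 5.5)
\textbf{Lower bound.} This half is essentially the paper's \cref{lem:lower_bound_ska}, with one imprecision. Non-crossing tours correspond to $2^{k-2}$ signatures (with $v_1=0$), not $O(k)$ orders. In a multi-jump tour, the robot jumps from $T_m$ to the first unvisited treasure $T_{k-n}$ on the far side, not to $T_k$. The paper handles this by flipping the last jump and applying \cref{le:segment}, which gives $d(T_m,T_{k-n})\ge\min\{d(T_m,T_{m+1}),d(T_m,T_k)\}=d(T_m,T_{m+1})$.

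\textbf{Upper bound: the genuine gap.} Your sentence ``at a maximizer every zigzag that is not strictly worse must be tied with the sweep, otherwise some gap could be enlarged'' is the whole difficulty, and you give no perturbation that lengthens all currently shortest tours at once. Jensen as in \cref{lem:general_UB_ck} does not provide one. You flag this yourself as the main obstacle, so as written $c_k\le s_k$ is not proved. There are also two slips.
\begin{itemize}
\item Your tie condition uses the wrong chord. For a symmetric placement, $d(T_i,T_{k+1-i})=d(T_i,T_{i+1})$ gives $\theta_{i+1}=3\theta_i$, i.e.\ gaps $\alpha,2\alpha,6\alpha,\dots$. Doubling comes instead from $d(T_i,T_k)=d(T_i,T_{i+1})$, i.e.\ $\vartheta_{i+1}=\vartheta_{k+1}+\theta_i$, which is the chord you correctly used in the lower bound.
\item The objective \emph{is} concave in the gap vector. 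Each chord is $2\sin(x/2)$ of a partial sum of gaps, and a minimum of concave functions is concave. This is exactly what justifies your symmetrization step.
\end{itemize}

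\textbf{What the paper does that your plan lacks.}
\begin{enumerate}
\item It bounds $C\le B:=\min_{0\le \ell\le k-2} d(\mathcal{L}_\ell)$, keeping only the sweep and the single-jump paths $\mathcal{L}_\ell=(R,T_1,\dots,T_\ell,T_k,T_{k-1},\dots,T_{\ell+1})$. Under symmetry only $\ell\le k/2$ matter. This restriction is what makes a perturbation argument tractable.
\item \cref{lem:no_jumps} shows that at a maximizer of $B$ over symmetric placements, the sweep is shortest. Otherwise, take $\ell$ to be the largest index of a shortest $\mathcal{L}_\ell$. Increase $\vartheta_1,\vartheta_{k+1}$ by $\epsilon$ and decrease $\vartheta_{\ell+1},\vartheta_{k+1-\ell}$ by $\epsilon$. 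Because $d(\mathcal{L}_\ell)<d(\mathcal{L}_0)$, every shortest $\mathcal{L}_s$ satisfies $\vartheta_{k+1}+\theta_s\le\vartheta_{k+1}+\theta_\ell<\vartheta_{\ell+1}$. Hence the chord $T_sT_k$ gains $2\cos\bigl((\vartheta_{k+1}+\theta_s)/2\bigr)$ per unit $\epsilon$. This dominates the change on the shortened gaps, which is at most $2\cos(\vartheta_{\ell+1}/2)$ per unit $\epsilon$, so every shortest path lengthens simultaneously, a contradiction.
\item One is then left with the \emph{inequalities} $d(T_\ell,T_{\ell+1})\le d(T_\ell,T_k)$ for $\ell\le\lfloor k/2\rfloor$, not ties. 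In the relevant range these are caps $\vartheta_{\ell+1}\le\alpha+\theta_\ell$. For fixed $\alpha=\vartheta_1$, the concave sweep length is maximized by saturating these caps (doubling) until spreading the remaining gaps evenly becomes feasible. That is precisely the $(k,\alpha)$-special placement. Values $\alpha<m_k$ are infeasible, and $\alpha>M_k$ is dominated by $\alpha=M_k$.
\end{enumerate}
Your fallbacks (induction on $k$, or KKT over all zigzag tours) contain none of these steps. In particular, if every zigzag stays in the minimum, there is no evident direction that increases all tied multi-jump tours at once.
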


After proving some useful properties in \cref{sec:optimal_strategies}, we independently prove a lower bound (see \cref{thm:ck_ge_sk} in \cref{sec:ck_ge_sk}) and an upper bound (see \cref{thm:ck_le_sk} in \cref{sec:ck_le_sk}).

\subsubsection{Properties of Optimal Strategies and Optimal Location of Treasures} \label{sec:optimal_strategies}

Before we state useful property of any optimal strategy, let us start with some basic geometric fact that we will use several times.

\begin{plemma}\label{le:segment}
Consider a circle and three distinct points $A$, $B$, and $C$ lying on this circle. Let $\stackrel{\frown}{BC}$ denote a segment of the circle with end points $B$ and $C$ that does not contain $A$. It follows that $\operatorname*{argmin}_{D\in\stackrel{\frown}{BC}}d(A,D)\subseteq\{B,C\}$. In other words, a point closest to $A$ in $\stackrel{\frown}{BC}$ is $B$ or $C$.
\end{plemma}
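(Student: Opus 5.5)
We parametrize the circle and reduce to a monotonicity statement about chord lengths. Rotating and scaling if necessary, assume the circle has radius $r$ and center $O$, and place $A$ at angle $0$. Any point $D$ on the circle at angle $\phi$ then satisfies
$$
d(A,D) = 2r\sin\left(\frac{\delta(\phi)}{2}\right),
$$
where $\delta(\phi) = \min(\phi, 2\pi-\phi) \in [0,\pi]$ is the angular distance from $A$. Since $x \mapsto \sin(x/2)$ is strictly increasing on $[0,\pi]$, minimizing $d(A,D)$ over the arc is the same as minimizing $\delta(\phi)$ over the arc.

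The key step is to describe $\delta$ on the arc. Since the arc $\stackrel{\frown}{BC}$ does not contain $A$, it corresponds to a closed interval of angles $[\phi_B,\phi_C] \subseteq (0,2\pi)$, after relabelling $B$ and $C$ if necessary. On $(0,2\pi)$ the function $\delta(\phi)=\min(\phi,2\pi-\phi)$ is a minimum of two affine functions, hence concave: it increases on $(0,\pi]$ and decreases on $[\pi,2\pi)$. A concave function on a closed interval attains its minimum only at the endpoints, unless it is constant on some subinterval; here it is strictly concave in the sense of being piecewise strictly monotone, so any minimizer lies in $\{\phi_B,\phi_C\}$.

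Concretely, argue by cases. If $\phi_C \le \pi$, then $\delta$ is strictly increasing on the arc and the unique minimizer is $B$. If $\phi_B \ge \pi$, then $\delta$ is strictly decreasing on the arc and the unique minimizer is $C$. If $\phi_B < \pi < \phi_C$, then for any interior $\phi$ we have $\delta(\phi) > \min(\delta(\phi_B),\delta(\phi_C))$, because on $[\phi_B,\pi]$ the value exceeds $\delta(\phi_B)$ and on $[\pi,\phi_C]$ it exceeds $\delta(\phi_C)$. Composing with the strictly increasing $\sin(\cdot/2)$ transfers all of this to $d(A,\cdot)$.

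The only real subtlety is the correct use of the hypothesis that $A \notin \stackrel{\frown}{BC}$. It guarantees that the angle interval avoids the point $0 \equiv 2\pi$, where $\delta$ has its kink downward, that is, its minimum. This is what makes $\delta$ concave along the whole arc. Otherwise the statement fails, since $A$ itself would be the closest point.
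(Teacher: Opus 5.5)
Your proof is correct and follows essentially the paper's route. Both arguments parametrize $D$ by its central angle from $A$, write $d(A,D)=2r\sin(\cdot/2)$, and use that this function has no interior minimum on the angle interval of an arc that avoids $A$. The paper is a bit shorter because it uses the oriented angle $\theta\in(0,2\pi)$, for which $\theta\mapsto\sin(\theta/2)$ is itself strictly concave, so your tent function $\delta$ and three-case split are not needed. Also, $\delta$ is concave but not strictly concave as you call it, though your explicit case analysis already handles this correctly.
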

\begin{proof}
Consider a circle of radius $r > 0$ with a center at point $O$ and three distinct points $A$, $B$, and $C$ lying on this circle. Consider any $D\in\stackrel{\frown}{BC}$. Note that $d(A,D) = 2 r \sin(\measuredangle AOD / 2)$, where $\measuredangle AOD / 2 \in (0,\pi)$. Since the $\sin$ function is strictly concave in this domain, $d(A,D)$ is minimized at $\measuredangle AOD = \measuredangle AOB$ or at $\measuredangle AOD = \measuredangle AOC$. This implies that
$$
\operatorname*{argmin}_{D\in\stackrel{\frown}{BC}}d(A,D)\subseteq\{B,C\},
$$
which finishes the proof of the lemma.
\end{proof}

Now, we state a simple but useful observation, following~\cite{flood1956}, that no optimal path visiting treasures crosses itself.

\begin{plemma}\label{lem:no_cross_in_optimal}
Let $\textbf{T} = (T_1, T_2, \ldots, T_k) \in \mathcal{C}^k$ be any placement of $k$ treasures. A shortest path associated with an optimal strategy of visiting $\textbf{T}$ does not intersect itself.
\end{plemma}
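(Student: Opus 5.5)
The plan is the classical uncrossing argument of Flood: if a shortest path crosses itself, we exhibit a strictly shorter path that visits the same treasures, contradicting optimality.

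Let the optimal path be the polygonal chain $P_0=R, P_1=T_{\sigma(1)}, \ldots, P_k=T_{\sigma(k)}$. First we reduce to straight segments between consecutive visited points. Any path from $P_{i-1}$ to $P_i$ has length at least $d(P_{i-1},P_i)$, so a shortest path may be taken to be this chain. Since all points lie on $\mathcal{C}$ and are distinct, no three of them are collinear. Hence two segments of the chain can only meet in one of two ways: at a shared endpoint (consecutive segments), or by crossing at a single interior point. Two non-consecutive segments cannot overlap, and no vertex can lie in the interior of another segment.

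Now suppose two non-consecutive segments $P_{i-1}P_i$ and $P_{j-1}P_j$ with $i<j$ cross at a point $X$. Replace them by $P_{i-1}P_{j-1}$ and $P_iP_j$, and traverse the subchain $P_i,\ldots,P_{j-1}$ in reverse. The new sequence $P_0,\ldots,P_{i-1},P_{j-1},P_{j-2},\ldots,P_i,P_j,\ldots,P_k$ still starts at $R$ and visits every treasure. The segments of the reversed part have the same lengths, so only the two swapped edges change the total. By the triangle inequality,
$$d(P_{i-1},P_{j-1})+d(P_i,P_j) \le d(P_{i-1},X)+d(X,P_{j-1})+d(P_i,X)+d(X,P_j) = d(P_{i-1},P_i)+d(P_{j-1},P_j).$$

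The main step is showing this inequality is strict. Equality would force $X$ to lie on segment $P_{i-1}P_{j-1}$. Since $X$ also lies on line $P_{i-1}P_i$, this would make $P_{i-1},P_i,P_{j-1}$ collinear, unless $X=P_{i-1}$. Both alternatives are impossible: three distinct points on a circle are never collinear, and $X$ is an interior point of both segments. So the new path is strictly shorter, contradicting optimality.

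One degenerate case needs checking, namely when the segments share an endpoint, since the argument above assumed a proper crossing. Consecutive segments meet only at their common vertex, which is not a self-intersection. Non-consecutive segments share no endpoint, because the points $P_0,\ldots,P_k$ are distinct: we assumed $R$ is not a treasure and the treasures sit at distinct locations. Hence the optimal chain is simple.
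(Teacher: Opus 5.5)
Your proof is correct and is the same uncrossing (2-opt) argument the paper uses: replace the crossing edges $T_aT_b$ and $T_cT_d$ by $T_aT_c$ and $T_bT_d$, reverse the middle subpath, and get strictness from the triangle inequality at the crossing point $X$. You are somewhat more careful than the paper: you explicitly show that any crossing is at an interior point of both segments, and you derive strictness (which the paper asserts via non-degenerate triangles) from the fact that no three distinct points of $\mathcal{C}$ are collinear.
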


\begin{proof}
For a contradiction, suppose that for some placement of $k$ treasures $\textbf{T}$ there exists an optimal path visiting $\textbf{T}$ that intersects itself. Edges of such path can be partitioned into 5 parts: a path from $R$ to $T_a$ (possibly degenerated, that is, it might consists of a single vertex $R$ and no edge), an edge from $T_a$ to $T_b$, a path from $T_b$ to $T_c$, an edge from $T_c$ to $T_d$, and a path from $T_d$ to $T_e$ (possibly degenerated), where segments $T_aT_b$ and $T_cT_d$ intersect at point $X$. Consider now an alternative strategy of visiting $\textbf{T}$: follow a path from $R$ to $T_a$, then go to $T_c$, follow a path from $T_c$ to $T_b$, then go to $T_d$, and finally follow a path from $T_d$ to $T_e$. The difference between the cost of an optimal path and the cost of an alternative path is equal to:
\begin{align*}
d(T_a, T_b) & + d(T_c, T_d) - d(T_a, T_c) - d(T_b, T_d) \\
& = \Big( d(T_a, X) + d(X, T_b) \Big) + \Big( d(T_c, X) + d(X, T_d) \Big) - d(T_a, T_c) - d(T_b, T_d) \\
& = \Big( d(T_a, X) + d(X, T_c) - d(T_a, T_c) \Big) + \Big( d(T_b, X)  + d(X, T_d) - d(T_b, T_d) \Big) \\
&>0
\end{align*}
by triangle inequality applied to triangles $\Delta T_a T_c X$ and $\Delta T_b T_d X$. Note that neither of the two triangles is a degenerate case of a triangle with zero area and so strict inequality holds. We get that the cost of an alternative strategy is smaller than the cost of an optimal one which gives as the desired contradiction. 
\end{proof}

The above lemma reduces the number of potential optimal strategies for traversing $k$ treasures from $k!$ to $2^{k-1}$. Indeed, suppose that $k$ treasures are located at $T_i = (\cos(\theta_i), \sin(\theta_i))$, $i \in \{1, 2, \ldots, k\}$, where $0 < \theta_1 < \theta_2 < \ldots < \theta_k < 2\pi$. It is easy to see (for example, by induction) that at any point of any optimal strategy, there are $a$ ``counter-clockwise'' treasures $T_1, T_2, \ldots, T_a$ and $b$ ``clockwise'' treasures $T_k, T_{k-1}, \ldots, T_{k+1-b}$ already visited (for some nonnegative integers $a$, $b$ such that $a+b<k$), and the robot occupies either $T_a$ or $T_{k+1-b}$. The only potentially optimal move from there is to go to $T_{a+1}$ or go to $T_{k-b}$. We may then encode any potentially optimal strategy as a binary vector $\textbf{v}=(v_1, v_2, \ldots, v_{k-1})$ of length $k-1$. We will refer to any of such vectors as a \emph{signature} of the corresponding strategy. The first coordinate indicates whether the robot goes to $T_1$ ($v_1=0$) or it goes to $T_k$ ($v_1=1$) at time $t=1$. Then, $v_t=0$ for some $t \ge 2$ indicates that the robot continues traversing treasures along the circle at time $t$: moving from $T_a$ to $T_{a+1}$ or moving from $T_{k+1-b}$ to $T_{k-b}$. On the other hand, $v_t=1$ indicates that the robot ``jumps'' to the other side of the circle at time $t$: moving from $T_a$ to $T_{k-b}$ or moving from $T_{k+1-b}$ to $T_{a+1}$. Note that at time $t=k$ there is only one treasure left to visit so both moves are equivalent (hence, the vector has length $k-1$ not $k$).

\medskip

Our next observation is that an optimal location of treasures has to be symmetric; see \cref{def:special_family}, equation~\eqref{def:symmetric}. This is a natural and expected property, but it requires justification. 

\begin{plemma}\label{lem:symmetry}
Any optimal location of the $k$ treasures is symmetric.
\end{plemma}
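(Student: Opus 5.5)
The plan is to prove the statement by a concavity (symmetrization) argument on the vector of angular gaps. We encode a placement $0<\theta_1<\dots<\theta_k<2\pi$ by its gap vector $\mathbf{g}=(g_1,\dots,g_{k+1})$, where $g_1=\theta_1$, $g_i=\theta_i-\theta_{i-1}$ for $2\le i\le k$, and $g_{k+1}=2\pi-\theta_k$. These lie in the simplex $\Delta=\{g_i\ge 0,\ \sum g_i=2\pi\}$; boundary points correspond to coinciding points, and including them only helps, since $C$ extends continuously and the supremum is then attained by compactness. Reflection across the $x$-axis acts on gaps as the reversal $r(\mathbf{g})=(g_{k+1},g_k,\dots,g_1)$. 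Symmetry in the sense of \eqref{def:symmetric} is exactly $r(\mathbf{g})=\mathbf{g}$.

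By \cref{lem:no_cross_in_optimal}, $C(\mathbf{T})=F(\mathbf{g}):=\min_{\mathbf{v}}L_{\mathbf{v}}(\mathbf{g})$, the minimum over the $2^{k-1}$ signatures. Each leg of the strategy with signature $\mathbf{v}$ joins two points of $\{R,T_1,\dots,T_k\}$. The counter-clockwise arc between them is a sum $\theta$ of a block of consecutive gaps, i.e.\ a fixed linear form in $\mathbf{g}$, and the leg has length $2\sin(\theta/2)$ with $\theta\in[0,2\pi]$. Since $x\mapsto 2\sin(x/2)$ is concave on $[0,2\pi]$, each $L_{\mathbf{v}}$ is concave on $\Delta$, and hence so is $F$, as a minimum of concave functions. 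Moreover, reflecting a strategy gives a strategy for the reflected placement with the same length, and the signature set is closed under this operation. Therefore $F(r(\mathbf{g}))=F(\mathbf{g})$.

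Now let $\mathbf{g}$ be an optimal (maximizing) placement and put $\mathbf{h}=\tfrac12(\mathbf{g}+r(\mathbf{g}))$, which is symmetric. Concavity and invariance give
\[
F(\mathbf{h})\ \ge\ \tfrac12\bigl(F(\mathbf{g})+F(r(\mathbf{g}))\bigr)=F(\mathbf{g})=c_k .
\]
This already shows that a symmetric optimal placement exists. To prove that every optimal placement is symmetric, assume $r(\mathbf{g})\neq\mathbf{g}$ and choose a signature $\mathbf{v}$ attaining $F(\mathbf{h})$. Then
\[
F(\mathbf{h})=L_{\mathbf{v}}(\mathbf{h})\ \ge\ \tfrac12\bigl(L_{\mathbf{v}}(\mathbf{g})+L_{\mathbf{v}}(r(\mathbf{g}))\bigr)\ \ge\ F(\mathbf{g}),
\]
and the goal is to make the first inequality strict. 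Since $\sin(x/2)$ is strictly concave on $[0,2\pi]$, it suffices to find one leg of $\mathbf{v}$ whose arc, a linear form $\lambda$, satisfies $\lambda(\mathbf{g})\neq\lambda(r(\mathbf{g}))$. That would give $F(\mathbf{h})>c_k$, a contradiction.

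I expect this strictness step to be the main obstacle. The plan is to argue that the arcs of the legs of any full visiting path determine $\mathbf{g}$. Every "continue" leg has arc equal to a single gap, and the first leg has arc $g_1$ or $g_{k+1}$. A "jump" leg from $T_a$ to $T_{k-b}$ spans the still-unvisited block; by an induction along the path, the gap sequence can then be recovered from the list of leg arcs. So if all leg arcs of $\mathbf{v}$ agreed on $\mathbf{g}$ and $r(\mathbf{g})$, we would get $\mathbf{g}=r(\mathbf{g})$.

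If this recovery turns out to fail for some signatures, for example because a jump's arc is one of two complementary arcs with equal chord, the fallback is a local perturbation. Here I would use \cref{le:segment} to move a single treasure within its arc, show that this strictly increases the cost whenever the active strategies are asymmetric, and thereby show that an asymmetric $\mathbf{g}$ cannot be a maximizer.
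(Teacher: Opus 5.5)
Your proposal is correct and is essentially the paper's proof. The paper's averaged placement $\theta'_i=(\theta_i+\theta^s_i)/2$ is exactly your $\mathbf{h}=\tfrac12(\mathbf{g}+r(\mathbf{g}))$, and the paper uses the same concavity-of-chord-length comparison against the reflected configuration, followed by an equality-case argument. Your primary strictness plan already works: each leg's arc is a fixed linear form, so the complementary-arc worry does not arise and the fallback is unnecessary. The paper's briefer version of that step amounts to noting that equality on a leg between $T_a$ and $T_b$ forces $\theta_a+\theta_{k+1-a}=\theta_b+\theta_{k+1-b}$. This propagates along the path from $R$, where the sum is $2\pi$, to every treasure, which forces symmetry.
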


\begin{proof}
Consider any optimal location of $k$ treasures $\textbf{T} = (T_1, T_2, \ldots, T_k) \in \mathcal{C}^k$ such that $T_i = (\cos(\theta_i), \sin(\theta_i))$ for some $0 < \theta_1 < \theta_2 < \ldots < \theta_k < 2\pi$. Our goal is to show that $\theta_i+\theta_{k+1-i}=2\pi$ for any $i \in \{1, 2, \ldots, \lceil k/2 \rceil \}$, that is, the location of the points is symmetric about the diameter of the circle going through the starting point $R$, see definition given by equation~\eqref{def:symmetric}.

To simplify the notation, define $\theta_0=0$, $\theta_{k+1}=2\pi$, and $T_0=T_{k+1}=R = (1,0)$. For a contradiction, suppose that the condition $\theta_i+\theta_{k+1-i}=2\pi$ is not met for some $i \in \{1, 2, \ldots, \lceil k/2 \rceil \}$. Consider any $n$-element permutation $\sigma$ of visiting the $k$ treasures. We will construct another location of treasures $\textbf{T}' = (T'_1, T'_2, \ldots, T'_k) \in \mathcal{C}^k$ (the construction does not depend on $\sigma$) such that the length of the path induced by $\sigma$ for $\textbf{T}'$ is larger than the length of an optimal path for $\textbf{T}$. This will prove that $\textbf{T}$ cannot be an optimal placement, since in $\textbf{T}'$ all paths are longer than the shortest path on $\textbf{T}$, and finish the proof.

\medskip

To show this let us first define an auxiliary collection of $k$ treasures $\textbf{T}^s = (T^s_1, T^s_2, \ldots, T^s_k)$: for any $i \in [k]$, let $T^s_i = (\cos(\theta^s_i), \sin(\theta^s_i))$ where $\theta^s_i=2\pi-\theta_{k+1-i}$. Note that $\textbf{T}$ and $\textbf{T}^s$ are symmetric about the diameter of the circle going through the starting point $R$. (See \cref{fig:symmetric_paths}.) In particular, the length $\ell$ of an optimal (shortest) path for $\textbf{T}$ is the same as for $\textbf{T}^s$. Now, define $\textbf{T}'$ as follows: for any $i \in [k]$, let $T'_i = (\cos(\theta'_i), \sin(\theta'_i))$ where $\theta'_i = (\theta_i+\theta^s_i)/2 = \pi + (\theta_i-\theta_{k+1-i}) / 2$. Note that $0 < \theta'_1 < \theta'_2 < \ldots < \theta'_k < 2\pi$. (Note also that $\textbf{T}'$ is symmetric but we will not use this property; again, see \cref{fig:symmetric_paths}.) 

\begin{figure}[ht!]
    \centering
    \includegraphics[width=0.45\linewidth]{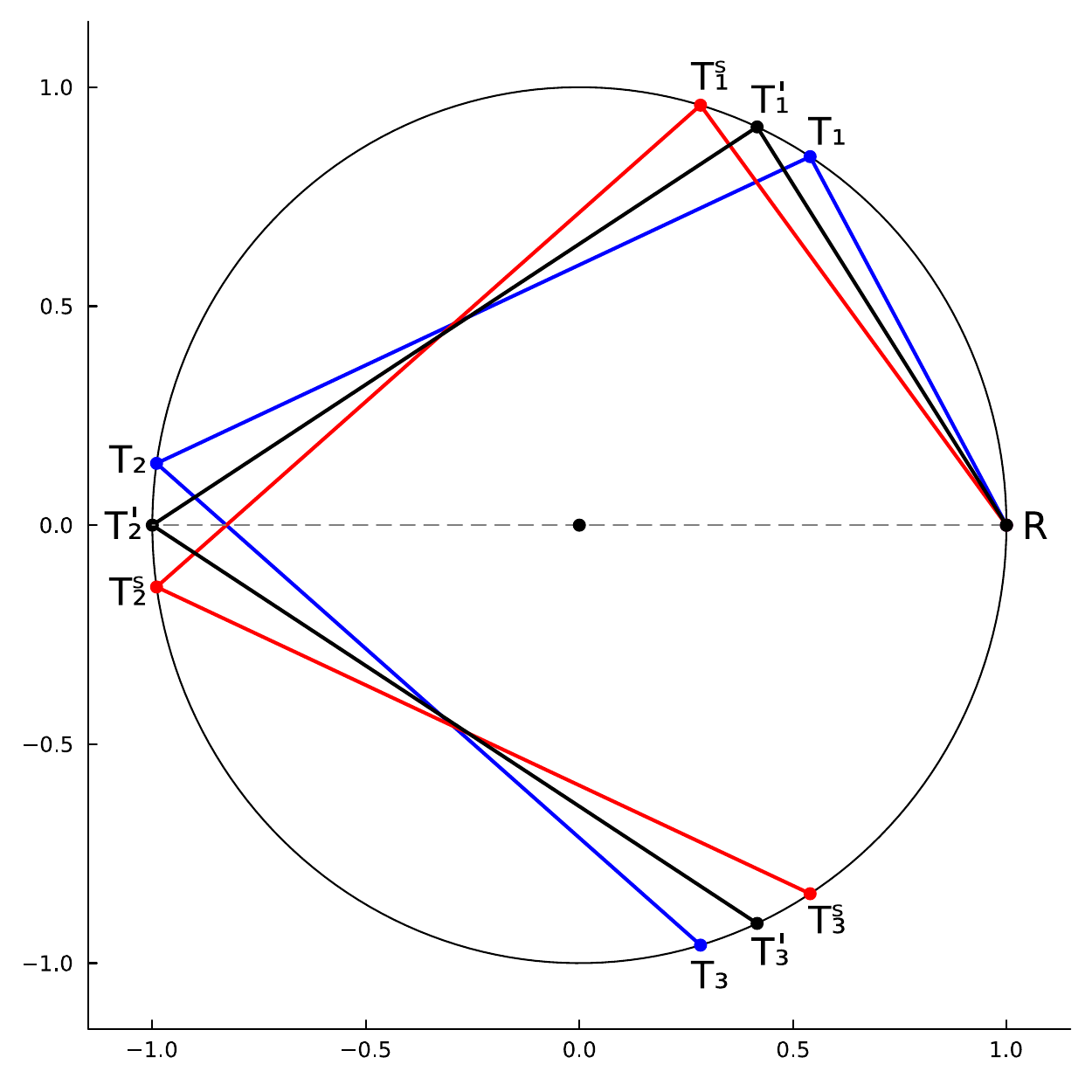}
    \caption{Example of a collection of treasures $\textbf{T}$ and its associated collections of treasures, $\textbf{T}^s$ and $\textbf{T}'$.}
    \label{fig:symmetric_paths}
\end{figure}

For any two indices $k+1 \ge i>j \ge 0$, we have the following
\begin{eqnarray*}
d(T_i,T_j) &=& 2 \sin \left( \frac{\theta_i-\theta_j}{2} \right) \\
d(T'_i,T'_j) &=& 2 \sin \left( \frac{\theta'_i-\theta'_j}{2} \right) ~~=~~ 2 \sin \left( \frac 12 \cdot \frac {\theta_i-\theta_j}{2} + \frac 12 \cdot \frac{\theta_{k+1-j}-\theta_{k+1-i}}{2} \right) \\
d(T^s_i,T^s_j) &=& 2 \sin \left( \frac {\theta_{k+1-j}-\theta_{k+1-i}}{2} \right).
\end{eqnarray*}
(Note that we included, for example, $j=0$ to cover the distance from $R=T_0$ to $T_i$.) 
Since the $\sin$ function is concave on $[0, \pi]$, in particular we have that $\sin( x/2 + y/2 ) \ge \sin(x)/2 + \sin(y)/2$ for any $x,y \in [0, \pi]$. We conclude that 
$$
d(T'_i,T'_j) ~\geq~ \frac 12 \cdot 2 \sin \left( \frac{\theta_i-\theta_j}{2} \right) + \frac 12 \cdot 2 \sin \left( \frac {\theta_{k+1-j}-\theta_{k+1-i}}{2} \right) ~=~ \frac {d(T_i,T_j) + d(T^s_j,T^s_i)}{2}.
$$
As a result, the length of the path induced by $\sigma$ for $\textbf{T}'$ is at least the average of the lengths of the two paths induced by $\sigma$ for $\textbf{T}$ and, respectively, $\textbf{T}^s$. But these two lengths are at least $\ell$ which means that the length of the path induced by $\sigma$ for $\textbf{T}'$ is at least $\ell$ as well. 

It remains to show that the length of the path induced by $\sigma$ for $\textbf{T}'$ is actually greater than $\ell$. For the contradiction, suppose that it is equal to $\ell$ which happens if and only if in every part of the path induced by $\sigma$ we have $\sin((\theta_i-\theta_j) / 2) = \sin((\theta_{k+1-j}-\theta_{k+1-i})/2)$. This implies that the angle between points $T_i$ and $T_j$ and the angle between points $T_{k+1-i}$ and $T_{k+1-j}$ are equal. However, since $\sigma$ visits all the points, the location of the $k$ treasures must be symmetric, as otherwise the required equality would not hold for the first pair of points $T_j$ and $T_{k+1-j}$ which are not symmetric (note that the starting point $T_0=T_{k+1}=R=(1,0)$ is symmetric). However, by assumption $\textbf{T}$ is not symmetric. Therefore, we get the desired contradiction which finishes the proof of the lemma.
\end{proof}

\subsubsection{Lower Bound: $c_k \ge s_k$} \label{sec:ck_ge_sk}

To establish the lower bound for $c_k$, we need to construct a valid placement of $k$ treasures $\textbf{T} \in \mathcal{C}^k$ that cannot be traversed faster than $s_k$. To this end we will prove that any $(k,\alpha)$-special placement of treasures cannot be traversed faster than by visiting the treasures in the counter-clockwise order. 

\medskip

\begin{plemma}\label{lem:lower_bound_ska}
Let $\textbf{T} = (T_1, T_2, \ldots, T_k) \in \mathcal{C}^k$ be any $(k,\alpha)$-special placement of $k \ge 2$ treasures. Then, an optimal strategy is to visit $\textbf{T}$ in the counter-clockwise order, that is,
$$
C(T_1, T_2, \ldots, T_k) = s_{k,\alpha},
$$
where $s_{k,\alpha}$ is defined in (\ref{eq:s_ka}).
\end{plemma}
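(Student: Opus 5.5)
By \cref{lem:no_cross_in_optimal}, we only need to compare the counter-clockwise tour with the $2^{k-1}$ non-crossing strategies encoded by signatures $\mathbf{v}$. The plan is an exchange (induction) argument showing that, for a $(k,\alpha)$-special placement, every signature can be turned into the all-zero signature $(0,0,\ldots,0)$ without increasing the cost. By the symmetry of the placement (Definition~\ref{def:special_family}), the tours starting at $T_1$ and at $T_k$ are mirror images, so we may assume $v_1=0$. At any intermediate moment the visited set is an arc block $T_1,\ldots,T_a$ together with $T_k,\ldots,T_{k+1-b}$, and the unvisited treasures form the arc $T_{a+1},\ldots,T_{k-b}$.

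\textbf{Step 1: describe the cost of a jump.} Suppose the robot, currently at $T_a$, jumps to $T_{k-b}$ and later returns to $T_{a+1}$ (or the reverse). Its route then traverses the unvisited arc from both ends, and the last treasure visited lies strictly inside that arc. Write a general non-crossing tour as a sequence of runs along alternating sides. The total cost is a sum of chords between consecutive treasures along the circle, minus some ``skipped'' edges $d(T_i,T_{i+1})$, plus jump chords spanning the gap between the two sides.

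\textbf{Step 2: compare chord lengths.} Use the explicit geometry: the gaps $\beta_1,\ldots,\beta_j$ double ($\beta_\ell=2^{\ell-1}\alpha$), then stay equal, and the placement is symmetric. For $i<j$, the doubling gives $d(T_i,T_k)=d(T_i,T_{i+1})$. More precisely, for the small-index points the distance from $T_i$ to the mirror point $T_{k+1-i'}$ equals $2\sin$ of half the angle $\alpha_i+\alpha_{i'}$, which the doubling makes equal to the chord of the next gap. \cref{le:segment} then shows that the nearest point of any arc to a given treasure is an endpoint. Hence any jump chord across the unvisited arc is at least as long as the chord it replaces along the circle. In the equidistant tail, by contrast, a jump spans at least two equal gaps while replacing one, so it is strictly longer. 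The inductive claim is therefore: for the remaining arc with the robot at one end, the cheapest completion is to sweep the arc in one direction. This is proved by induction on the number of remaining treasures, using concavity of $\sin$ (as in \cref{lem:general_UB_ck}) to show that reversing direction costs a chord at least equal to the sum of the gaps it saves.

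\textbf{Main obstacle.} The hard part is the early jumps near $R$, i.e.\ while still in the doubling zone with $a<j$. There, jumping back to $T_k$ costs exactly as much as continuing, so equality is tight. One must then show that the subsequent route from the clockwise side is not cheaper. I would handle this by symmetry: after such a jump, the state ``visited $T_1,\ldots,T_a$ and $T_k$, robot at $T_k$'' is compared with the state ``robot at $T_{a+1}$'', using the monotone gap structure. The stopping condition on $j$ (namely $\beta_{j+1}\le 2^j\alpha$, together with $d(T_j,T_k)>d(T_j,T_{j+1})$) is exactly what rules out profitable jumps once we are past the threshold. I expect verifying the inequalities from \eqref{eq:bound_on_alpha} in this threshold case to be the crux.
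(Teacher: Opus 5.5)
You have the right ingredients: the reduction to $v_1=0$ by symmetry, the identity $d(T_i,T_k)=d(T_i,T_{i+1})$ for $i<j$, and \cref{le:segment}. The argument does not close, however, because you never pin down exactly what a jump changes in the cost. Your inductive step says that reversing direction costs a chord at least the \emph{sum} of the gaps it saves, via concavity of $\sin$. That is the wrong accounting. A reversal from $T_a$ to $T_{k-b}$ skips exactly one gap chord, $d(T_a,T_{a+1})$, so the comparison is one chord against one chord and concavity plays no role. A bound against several gap chords would in fact fail in the doubling zone, where $d(T_a,T_k)=d(T_a,T_{a+1})$ exactly. Because of this you are left with your ``main obstacle'': controlling the route after a tight jump near $R$. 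You defer this to an unspecified symmetry argument and an expected verification of \eqref{eq:bound_on_alpha}, so as written the central step of the proof is missing.

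The idea that removes the obstacle is to compare a signature with the one obtained by flipping its \emph{last} $1$ to $0$.
\begin{itemize}
\item After that last jump, say from $T_m$ to $T_{k-n}$, the robot sweeps $T_{k-n},T_{k-n-1},\dots,T_{m+1}$. This is the reverse of the sweep $T_{m+1},\dots,T_{k-n}$ made without the jump, so it has the same length.
\item Hence the two costs differ by exactly $d(T_m,T_{k-n})-d(T_m,T_{m+1})$, and nothing downstream needs to be controlled.
\item This difference is nonnegative because $d(T_m,T_k)\ge d(T_m,T_{m+1})$ for every $m$. There is equality for $m<j$; the stopping rule $\beta_{j+1}\le 2^j\alpha$ gives it for $m=j$; and for $m>j$ an even longer arc from $T_k$ through $R$ is compared with a gap of at most $\beta_{j+1}$. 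Combine this with \cref{le:segment} applied to the arc from $T_{m+1}$ to $T_k$ that avoids $T_m$.
\item The case where the robot sits on the clockwise side is the mirror image.
\end{itemize}
Repeating the flip reaches $(0,0,\dots,0)$ without increasing the cost. Equality in the doubling zone is harmless, since the lemma only claims that the counter-clockwise order is \emph{an} optimal strategy. Also, \eqref{eq:bound_on_alpha} enters only through the case $m=j$, not as a separate crux. Equivalently, in your state-based induction on the number of remaining treasures, the inductive hypothesis already says the remainder after either move is a sweep. That reduces the inductive step to this same single chord comparison, which is what your proposal is missing.
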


The desired lower bound follows immediately from the above lemma. Indeed, the desired placement is the one that is linked with the optimal value of $\alpha$ in the definition of $s_k$ that maximizes the cost over the family of $(k,\alpha)$-special placements of treasures.

\begin{ptheorem}\label{thm:ck_ge_sk}
For any integer $k \ge 2$, $c_k \ge s_k$.
\end{ptheorem}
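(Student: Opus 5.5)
The theorem follows from \cref{lem:lower_bound_ska} together with a compactness argument, and I would take the lemma as given. The plan has three steps.

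First, I would check that the function $\alpha \mapsto s_{k,\alpha}$ is continuous on $[m_k, M_k]$. This makes the maximum in~\eqref{eq:s_k} attained at some $\alpha^* \in [m_k, M_k]$. The continuity is presumably what the paper means by "continuous function over a compact set". Within each interval~\eqref{eq:bound_on_alpha}, the formula~\eqref{eq:s_ka} is a finite sum of sines of affine functions of $\alpha$. At a breakpoint $\alpha = \frac{2\pi}{2^j(k-2j+3)-2}$ the two adjacent formulas, for $j$ and $j+1$, must agree. The lower bound in~\eqref{eq:bound_on_alpha} is equality exactly when $\frac{2\pi-2(2^j-1)\alpha}{k+1-2j}=2^j\alpha$. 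At that value the doubled increment $\beta_{j+1}=2^j\alpha$ coincides with the equal-distribution increment, so the two placements coincide and so do their costs.

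Second, I would fix $\alpha^*$ and build the $(k,\alpha^*)$-special placement $\mathbf{T}^*$ of \cref{def:special_family}. I need it to be a genuine configuration in $\mathcal{C}^k$ with $0<\alpha_1<\dots<\alpha_k<2\pi$. This requires all increments to be positive and the half-angles to stay below $\pi$. The only nontrivial check is $\beta_{j+1}>0$, which amounts to $(2^j-1)\alpha<\pi$. This follows from the upper bound $\alpha<\frac{2\pi}{2^{j-1}(k-2j+5)-2}$. The endpoint case $\alpha=M_k$ with $j=1$ is handled separately: there $\beta_2=\frac{2\pi-2\alpha}{k-1}=\alpha$, which is positive.

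Third, by \cref{lem:lower_bound_ska}, $C(\mathbf{T}^*)=s_{k,\alpha^*}=s_k$. Since $c_k$ is a supremum over all configurations, $c_k\ge C(\mathbf{T}^*)=s_k$.

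The only real obstacle is this bookkeeping: confirming that the special placement is well defined and that $s_{k,\alpha}$ is continuous across the breakpoints, so the maximum is attained. If continuity were awkward, I would avoid it. For each $\alpha$ the same argument gives $c_k\ge s_{k,\alpha}$, and taking the supremum over $\alpha$ yields $c_k\ge s_k$ whether or not the maximum is attained.
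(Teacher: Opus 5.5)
Your proposal is correct and follows the same route as the paper. The paper's proof takes the maximizing $\alpha$ in~\eqref{eq:s_k}, forms the corresponding $(k,\alpha)$-special placement, and applies \cref{lem:lower_bound_ska} to conclude $c_k \ge s_{k,\alpha} = s_k$. Your extra checks (positivity of the increments, and agreement of the $j$ and $j+1$ formulas at each breakpoint) fill in bookkeeping the paper leaves implicit. Your pointwise fallback $c_k \ge s_{k,\alpha}$ for every $\alpha$ also shows that attainment of the maximum is not needed.
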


It remains to prove the lemma.

\begin{proof}[Proof of \cref{lem:lower_bound_ska}]
Let us consider $\textbf{T} = (T_1, T_2, \ldots, T_k) \in \mathcal{C}^k$, any $(k,\alpha)$-special placement of $k \ge 2$ treasures. As commented right after \cref{lem:no_cross_in_optimal}, we may encode any potentially optimal strategy as a binary vector $\textbf{v}=(v_1, v_2, \ldots, v_{k-1})$ of length $k-1$. In fact, by symmetry of $\textbf{T}$, without loss of generality, we may assume that $v_1=0$, that is, the robot starts by visiting $T_1$. 

Consider any vector $\textbf{v}=(0, v_2, v_3, \ldots, v_{k-1})$. Our goal is to show that the cost of a strategy associated with $\textbf{v}$ is at least $s_{k,\alpha}$, the cost of the strategy associated with vector $(0,0,\ldots,0)$ (visiting $\textbf{T}$ in the counter-clockwise order). To that end, we will show that the cost associated with vector $\textbf{v}_1=(0, v_2, v_3, \ldots, v_{i-1}, 1, 0, 0, \ldots, 0)$ is at least the one associated with vector $\textbf{v}_0=(0, v_2, v_3, \ldots, v_{i-1}, 0, 0, \ldots, 0)$. (Note that the two vectors differ only in one place, namely, $v_i$ is flipped from 1 to 0.)

\begin{figure}[ht!]
    \centering
    \includegraphics[width=0.45\linewidth]{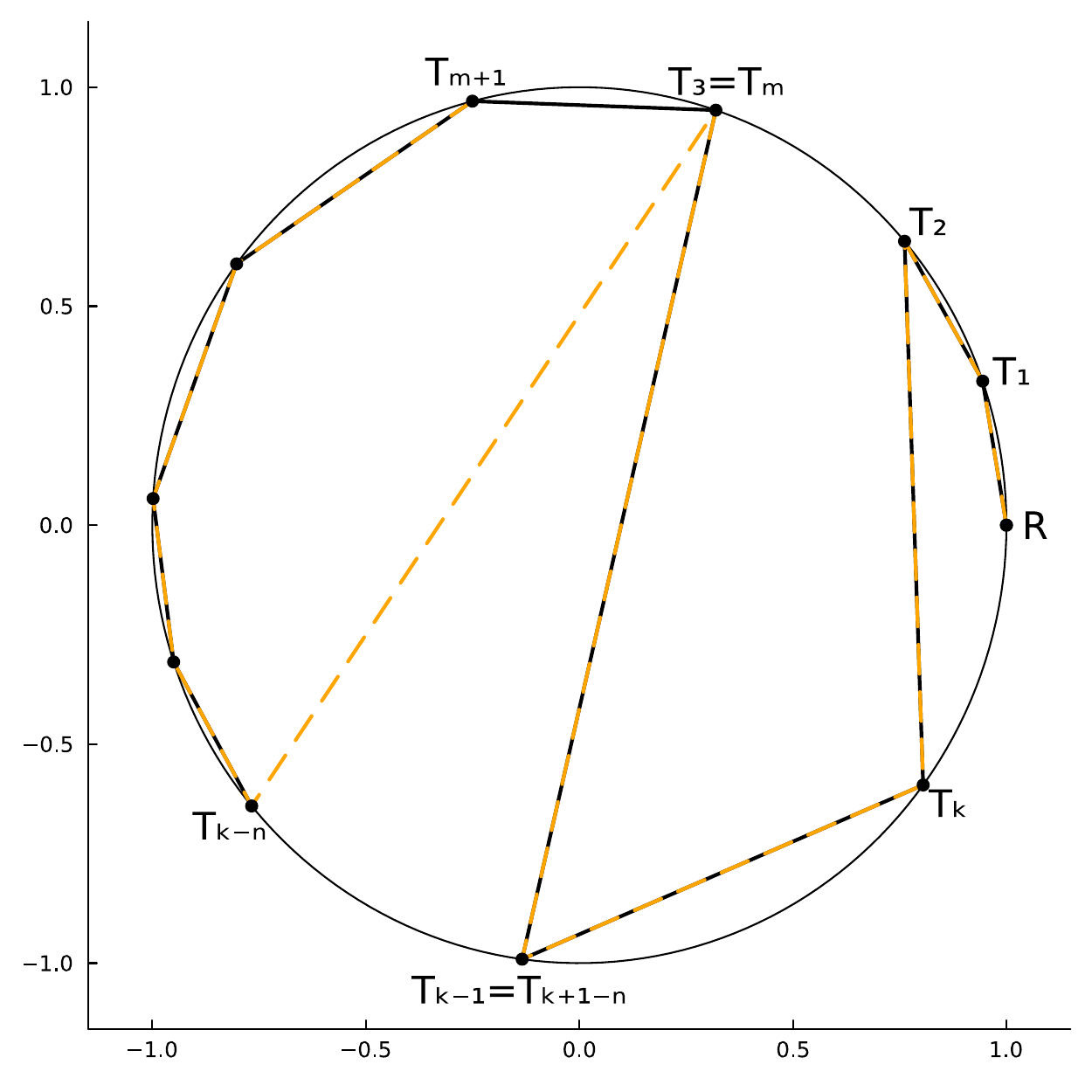}
    \caption{Two strategies to visit $k$ treasures, the first one associated with vector $\textbf{v}_0$ (black solid path) and the second one associated with vector $\textbf{v}_1$ (orange dashed line).}
    \label{fig:jumps}
\end{figure}

Consider any vector $\textbf{v}_0=(0, v_2, v_3, \ldots, v_{i-1}, 0, 0, 0, \ldots, 0)$: for some $2 \le i \le k-1$ we have $v_j \in \{0,1\}$ for all $2 \le j < i$; the remaining coordinates are equal to zero. The associated strategy visits $T_1, T_2, \ldots, T_m$ and $T_k, T_{k-1}, \ldots, T_{k+1-n}$ during the first $i-1$ rounds (for some integers $m \ge 1$ and $n \ge 0$ such that $m+n=i-1$), and then the robot continues traversing the remaining treasures along the circle. Suppose that at round $i-1$, the robot occupies $T_m$ (the proof for the other case is exactly the same). The difference between the cost associated with $\textbf{v}_1$ and $\textbf{v}_0$ is equal to
$$
d(T_m,T_{k-n}) - d(T_m, T_{m+1}).
$$
Indeed, instead of continuing from $T_m$ to $T_{m+1}$, the robot ``jumps'' to $T_{k-n}$ and continues along the circle from there; note that in what follows the robot makes exactly the same moves but in reversed order (see \cref{fig:jumps}). The difference is non-negative by the definition of the $(k,\alpha)$-special placement. To see this note that by construction $d(T_m, T_k) \geq d(T_m,T_{m+1})$ and, since the remaining point $T_{k-n}$ lies in the circle section $\stackrel{\frown\quad}{T_kT_{m+1}}$ which does not include $T_m$, we know by \cref{le:segment} that $d(T_m, T_{k-n})\geq\min\{d(T_m,T_{m+1}),d(T_m,T_k)\} = d(T_m,T_{m+1})$. Hence, any deviation from the ``continue traversing the remaining treasures along the circle'' strategy cannot decrease the cost. This finishes the proof of the lemma.
\end{proof}

\subsubsection{Upper Bound: $c_k \le s_k$} \label{sec:ck_le_sk}

To prove the upper bound $c_k \le s_k$, one needs to prove that for any valid placement of $k$ treasures $\textbf{T} \in \mathcal{C}^k$, there exists a strategy for the robot that visits all treasures in at most $s_k$ time. In fact, by \cref{lem:symmetry} we may restrict ourselves to symmetric locations of treasures; see definition given by equation~\eqref{def:symmetric}. Moreover, by \cref{lem:no_cross_in_optimal} we may restrict ourselves to paths (and the associated strategies) that do not intersect themselves. Hence, as explained right after the proof of \cref{lem:no_cross_in_optimal}, the number of potential optimal strategies reduces from $k!$ to $2^{k-2}$. Any potential optimal strategy can be represented by a \emph{signature}, a binary vector $\textbf{v}=(v_1, v_2, \ldots, v_{k-1})$ of length $k-1$. Recall that the first coordinate of $\textbf{v}$ indicates whether the robot goes to $T_1$ ($v_1=0$) or it goes to $T_k$ ($v_1=1$) at time $t=1$. At this point, since we restrict ourselves to symmetric locations of treasures, we may fix $v_1 = 0$. Then, $v_t=0$ for some $t \ge 2$ indicates that the robot continues traversing treasures along the circle at time $t$; $v_t=1$ indicates that the robot ``jumps'' to the other side of the circle at time $t$.

It is not uncommon (but always somewhat surprising) that it seems easier to prove a slightly stronger statement. Indeed, we will prove that the desired upper bound holds even if we restrict ourselves to $k-1$ simple strategies (out of $2^{k-2}$), namely, the one with no ``jumps'' (with signature $\textbf{v}=(0,0, \ldots, 0)$) or those with exactly one ``jump''. Signatures of such strategies have at most one ``$1$''. 

\medskip

Formally, consider any symmetric location of $k$ treasures $\textbf{T}=(T_1, T_2, \ldots, T_k) \in \mathcal{C}^k$: $T_i = (\cos(\theta_i), \sin(\theta_i))$, $i \in \{1, 2, \ldots, k\}$, where $0 < \theta_1 < \theta_2 < \ldots < \theta_k < 2\pi$ and for each $i \in [k]$ we have $\theta_i+\theta_{k+1-i}=2\pi$. For convenience, we set $\theta_0=0$ and $\theta_{k+1}=2\pi$, and for each $i \in [k+1]$ denote $\vartheta_i=\theta_i-\theta_{i-1}$. Note that for each $i \in [k+1]$, $\vartheta_i>0$ and $\vartheta_i=\vartheta_{k+2-i}$.

We restrict ourselves to $k-1$ strategies associated with $k-1$ paths $\mathcal{L}_\ell$ ($\ell \in \{0,1,\ldots, k-2\}$) defined as follows:
\begin{itemize}
    \item $\mathcal{L}_0=(R,T_1,\dots,T_k)$ (no ``jump'');
    \item $\mathcal{L}_\ell=(R,T_1, \dots, T_\ell, T_k, T_{k-1}, \dots, T_{\ell+1})$ for some $\ell \in [k-2]$ (``jump'' from $T_\ell$ to $T_k$ and then traverse the remaining treasures in the reversed order).
\end{itemize}
Let $d(\mathcal{L}_\ell)$ denote the length of path $\mathcal{L}_\ell$. Note that for any $\ell \in[k-2]$ we have
\begin{eqnarray}
d(\mathcal{L}_\ell) - d(\mathcal{L}_0) &=& d(T_\ell,T_k) - d(T_\ell, T_{\ell+1}) \label{eq:diff_Ll-L0} \\
&=& 2\left(\sin\left(\sum_{j=\ell+1}^k \vartheta_j/2\right)-\sin(\vartheta_{\ell+1}/2)\right) \nonumber \\
&=& 2\left(\sin\left(\vartheta_{k+1}/2+\sum_{j=1}^\ell\vartheta_j/2\right)-\sin(\vartheta_{\ell+1}/2)\right). \nonumber
\end{eqnarray}
More importantly, since the location of treasures is symmetric, if $\ell > k/2$, then $\theta_\ell \ge \pi$ and so the angular distance between $T_\ell$ and $T_k$ is less than $\pi$ and is grater than the angular distance between $T_\ell$ and $T_{\ell+1}$. As a result, we get that $d(T_\ell,T_k) > d(T_\ell, T_{\ell+1})$ and so $d(\mathcal{L}_\ell) > d(\mathcal{L}_0)$. It follows that the optimal cost $B(T_1, T_2, \ldots, T_k)$ of visiting the $k$ treasures using one of these $k-1$ specific strategies satisfies the following property: 
\begin{equation}\label{eq:b}
B(T_1, T_2, \ldots, T_k) = \min_{0 \le \ell \le k-2} d(\mathcal{L}_\ell) = \min_{0 \le \ell \le k/2} d(\mathcal{L}_\ell).
\end{equation}

Finally, let us define the maximum cost associated with the worst-case symmetric placement of the $k$ treasures for this more restrictive variant: 
\begin{equation}\label{eq:bk}
b_k = \sup_{(T_1, T_2, \ldots, T_k) \in \mathcal{C}^k} B(T_1, T_2, \ldots, T_k),
\end{equation}
where the supremum is taken over all symmetric locations of $k$ treasures on the circle $\mathcal{C}$. Since for any symmetric location of treasures $\textbf{T}$, $C(T_1, T_2, \ldots, T_k) \le B(T_1, T_2, \ldots, T_k)$, we get that $c_k \le b_k$. The desired upper bound ($c_k \le s_k$) is implied by the following stronger statement.

\begin{ptheorem}\label{thm:ck_le_sk}
For any integer $k \ge 2$, $c_k \le b_k \le s_k$.
\end{ptheorem}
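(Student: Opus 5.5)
The first inequality $c_k \le b_k$ was already obtained just before the statement: by \cref{lem:symmetry} we may restrict to symmetric placements, and for these $C \le B$. So the work is to show $b_k \le s_k$. I would parametrize a symmetric placement by its gaps $\vartheta_1,\dots,\vartheta_{\lceil (k+1)/2\rceil}$, using $\vartheta_i=\vartheta_{k+2-i}$ and $\sum_{i=1}^{k+1}\vartheta_i = 2\pi$, and write $\alpha := \vartheta_1 = \vartheta_{k+1}$.

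In these variables
$$d(\mathcal{L}_0) = 2\sum_{i=1}^{k+1}\sin(\vartheta_i/2) - 2\sin(\alpha/2),$$
and by \eqref{eq:diff_Ll-L0}, $d(\mathcal{L}_\ell)\ge d(\mathcal{L}_0)$ holds exactly when
$$\sin\Bigl(\tfrac{\alpha+\theta_\ell}{2}\Bigr)\ge \sin(\vartheta_{\ell+1}/2), \qquad \ell\le k/2 .$$
While $\alpha+\theta_\ell+\vartheta_{\ell+1}\le 2\pi$, which holds in the relevant range, this is the linear constraint $\vartheta_{\ell+1}\le \alpha+\theta_\ell$. The quantity $b_k$ is a supremum of the continuous function $B=\min_{\ell\le k/2} d(\mathcal{L}_\ell)$. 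After taking the closure (allowing coinciding points, which only decreases $B$), we may fix a maximizing configuration $\mathbf{T}^*$.

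The first step is to reduce to the case where the minimum in \eqref{eq:b} is attained by $\mathcal{L}_0$, i.e.\ all constraints $\vartheta_{\ell+1}\le\alpha+\theta_\ell$ hold. Suppose that at $\mathbf{T}^*$ some $d(\mathcal{L}_\ell)<d(\mathcal{L}_0)$, so that $\vartheta_{\ell+1}>\alpha+\theta_\ell$. I would then move angular mass from gap $\vartheta_{\ell+1}$ (and its mirror) into the earlier gaps $\vartheta_1,\dots,\vartheta_\ell$ and their mirrors, keeping the total $2\pi$ and symmetry. This increases $\alpha+\theta_\ell$, hence increases $d(\mathcal{L}_\ell)$ through the jump term $d(T_\ell,T_k)$. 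It also increases $d(\mathcal{L}_0)$, because by concavity of $\sin$ it brings the gaps closer together. One still has to check that this perturbation does not decrease the other $d(\mathcal{L}_{\ell'})$; this needs care. The conclusion would be that at a maximizer $\mathcal{L}_0$ is among the minimizers, so $b_k=\sup d(\mathcal{L}_0)$ over the feasible polytope defined by the linear constraints.

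The second step fixes $\alpha$ and maximizes the concave, symmetric function $\sum_i \sin(\vartheta_i/2)$ over the gaps $\vartheta_2,\dots$, subject to the prefix constraints $\vartheta_{\ell+1}\le \alpha+\vartheta_1+\dots+\vartheta_\ell$ and the fixed total. Without constraints, concavity (Jensen, as in \cref{lem:general_UB_ck}) would make all the inner gaps equal. The constraints cap the early gaps: the largest admissible value of $\vartheta_2$ is $2\alpha$, of $\vartheta_3$ then $4\alpha$, and so on. I would use a water-filling or exchange argument: if $\vartheta_i<\vartheta_{i'}$ with $i<i'$ and the constraint on $\vartheta_i$ is slack, shifting mass from $i'$ to $i$ increases the objective and keeps feasibility, since enlarging an earlier gap only loosens later constraints. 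Hence the optimum saturates the constraints greedily, giving doubling gaps $2^{\ell-1}\alpha$. It switches to the equal split exactly at the first index where the equal share $\frac{2\pi-2(2^j-1)\alpha}{k+1-2j}$ drops to at most the doubled value. This is precisely the index $j(k,\alpha)$ in \eqref{eq:bound_on_alpha}, so the optimum is the $(k,\alpha)$-special placement with value $s_{k,\alpha}$.

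The third step handles the values of $\alpha$ outside $[m_k,M_k]$. For $\alpha>M_k$ the gaps are forced to be at most roughly $\alpha$, and a Jensen estimate shows the value is at most $s_{k,M_k}$. For $\alpha<m_k$ even full doubling up to $\lfloor k/2\rfloor$ cannot exhaust $2\pi$, so the constraint set is empty or degenerate, or the value is dominated by the value at $\alpha=m_k$. Taking the maximum over $\alpha$ then gives $b_k\le s_k$.

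I expect the main obstacle to be the first step, namely proving that at an extremal configuration no single-jump path $\mathcal{L}_\ell$ is strictly shorter than $\mathcal{L}_0$. The perturbation has to be shown to increase the minimum over all $\ell$ simultaneously. If the direct perturbation fails, I would instead analyze the max–min through a KKT/tie argument: at the maximizer all active paths have equal length, and I would show that such ties can be resolved in favor of $\mathcal{L}_0$.
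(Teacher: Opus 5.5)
Your overall plan matches the paper's: symmetry gives $c_k\le b_k$; one shows that $\mathcal{L}_0$ is among the shortest restricted paths at the extremal configuration (\cref{lem:no_jumps}); then for fixed $\alpha=\vartheta_1$ the constraints $\vartheta_{\ell+1}\le\alpha+\theta_\ell$ plus concavity force the $(k,\alpha)$-special placement, and $\alpha\notin[m_k,M_k]$ is discarded. Your second and third steps are at the same level of detail as the paper's greedy argument.

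The genuine gap is the first step. You leave it open, and the perturbation as you describe it does not go through. Two specific choices make it work:
\begin{itemize}
\item Let $S$ be the set of indices $s\ge 1$ for which $d(\mathcal{L}_s)$ is minimal, and take $\ell=\max S$. Paths outside $S$ are strictly longer and stay so for small $\epsilon$ by continuity. So you only need to check $\mathcal{L}_s$ with $s\in S$, and all of these have $s\le\ell$.
\item Move the mass only into $\vartheta_1$ and its mirror $\vartheta_{k+1}$: increase both by $\epsilon$ and decrease $\vartheta_{\ell+1},\vartheta_{k+1-\ell}$ by $\epsilon$.
\end{itemize}
With these choices, for every $s\in S$ the jump arc $A_s=\vartheta_{k+1}+\theta_s$ grows at rate $2$. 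For $s<\ell$ the derivative of $d(\mathcal{L}_s)$ is $\cos(\vartheta_1/2)+2\cos(A_s/2)-2\cos(\vartheta_{\ell+1}/2)$; for $s=\ell$ one of the two negative terms disappears. This is positive because $A_s\le A_\ell<\vartheta_{\ell+1}$, which is exactly the statement $d(\mathcal{L}_\ell)<d(\mathcal{L}_0)$, and $\cos$ is decreasing on $[0,\pi]$. The case $k=2\ell$, where the middle gap shrinks by $2\epsilon$, is checked separately in the same way.

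The freedom you allow, spreading the mass over $\vartheta_1,\dots,\vartheta_\ell$, can genuinely break this. Suppose the mass goes into $\vartheta_{s+1}$ (and its mirror) for some minimizing $s<\ell$. Then $\mathcal{L}_s$ skips the gap $\vartheta_{s+1}$ and its jump arc is unchanged. It still traverses both shrunk gaps, so the derivative of its length is $\cos(\vartheta_{s+1}/2)-2\cos(\vartheta_{\ell+1}/2)$. This is negative as soon as $\vartheta_{\ell+1}<2\pi/3$.

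The choice of $\ell$ matters too. Even with the transfer into $\vartheta_1,\vartheta_{k+1}$, suppose $\ell$ is not the largest minimizing index. A minimizing $\mathcal{L}_s$ with $s>\ell$ then traverses both shrunk gaps, while its jump arc grows only at rate $1$ (the loss in $\vartheta_{\ell+1}$ cancels the gain in $\vartheta_1$ inside $\theta_s$). That arc already exceeds $\vartheta_{\ell+1}$, so its length can decrease.

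The missing observation is that $\vartheta_1$ and $\vartheta_{k+1}$ are the only gaps that enter every jump chord $d(T_s,T_k)$. That is what produces the factor $2$ that beats the two losses. Without it, or a worked-out version of your KKT fallback, the reduction of $b_k$ to maximizing $d(\mathcal{L}_0)$ under the constraints is not established.
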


Note that the $\sin$ function is strictly concave on the interval $[0, \pi]$. Hence, for any $\ell \in \{0, 1, \ldots, \lfloor k/2\rfloor\}$, $d(\mathcal{L}_\ell)$ is a strictly concave function of $(\vartheta_i/2)_{i \in [k+1]}$ as it is a sum of strictly concave functions. Since the minimum of two strictly concave functions is itself strictly concave, our objective function $B(T_1, T_2, \ldots, T_k)$ is also strictly concave. Finally, note that the domain of angles, $(\vartheta_i)_{i \in [k+1]}$, over which we optimize is a convex set, as $\vartheta_i\in[0,\pi]$, $\vartheta_i=\vartheta_{k+2-i}$, and $\sum_{i=1}^{k+1}\vartheta_i=2\pi$. This implies that the supremum in the definition of $b_k$ (see~\eqref{eq:bk}) is achieved for the unique location of $k$ treasures. (This also shows that the problem can be efficiently handled by numerical optimization algorithms.) The main ingredient of the argument is the following observation about this optimal location. 

\begin{plemma}\label{lem:no_jumps}
Consider the more restrictive variant of the problem involving only $\lfloor k/2\rfloor+1$ strategies. An optimal location of $k$ treasures $\mathbf{T}=(T_1, T_2, \ldots, T_k)$ satisfies the following property: 
$$
B(T_1, T_2, \ldots, T_k) = d(\mathcal{L}_0).
$$
\end{plemma}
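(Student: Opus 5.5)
We argue by contradiction using a perturbation (local improvement) argument. Let $\mathbf{T}$ be the optimal symmetric location for $b_k$, with gaps $(\vartheta_i)_{i\in[k+1]}$ satisfying $\vartheta_i=\vartheta_{k+2-i}$ and $\sum_i\vartheta_i=2\pi$. Suppose $B(\mathbf{T})<d(\mathcal{L}_0)$. Then the minimum in~\eqref{eq:b} is attained only by paths $\mathcal{L}_\ell$ with $\ell\ge1$; let $A\subseteq\{1,\dots,\lfloor k/2\rfloor\}$ be the set of such \emph{active} indices. For every $\ell\in A$, \eqref{eq:diff_Ll-L0} gives $d(T_\ell,T_k)<d(T_\ell,T_{\ell+1})$. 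Equivalently, $\sin\bigl(\vartheta_{k+1}/2+\sum_{j\le \ell}\vartheta_j/2\bigr)<\sin(\vartheta_{\ell+1}/2)$.

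The plan is to exhibit a symmetric perturbation of the gaps that strictly increases $d(\mathcal{L}_\ell)$ for every active $\ell$. Since inactive paths have strictly larger length, they stay above the minimum for small perturbations, so $B$ would strictly increase, contradicting optimality.

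\textbf{Choice of perturbation.} Let $\ell^*=\min A$. We move angular mass from gap $\vartheta_{\ell^*+1}$ (and its mirror $\vartheta_{k+1-\ell^*}$) into the first gap $\vartheta_1$ (and its mirror $\vartheta_{k+1}$). This preserves symmetry and the total sum.

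\textbf{Why the active paths grow.} Write $d(\mathcal{L}_\ell)=\sum_{i\le \ell}2\sin(\vartheta_i/2)+2\sin\bigl(\vartheta_{k+1}/2+\sum_{j\le\ell}\vartheta_j/2\bigr)+\sum_{i=\ell+2}^{k}2\sin(\vartheta_{i}/2)$. For $\ell\in A$, the derivative of $d(\mathcal{L}_\ell)$ along the perturbation compares gains against losses:
\begin{itemize}
\item Gains come from the terms in $\vartheta_1$ and from the jump chord, which depends on $\vartheta_1+\vartheta_{k+1}$.
\item Losses come from the terms in $\vartheta_{\ell^*+1}$ and $\vartheta_{k+1-\ell^*}$.
\end{itemize}
Because the jump chord of an active path is shorter than the chord $d(T_\ell,T_{\ell+1})$, its angle is either small or close to $\pi$, and the cosine weights must be compared. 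This comparison is the delicate point. A cleaner alternative is to use the concavity already established: $B$ is strictly concave on a convex domain, so it suffices to find one symmetric configuration $\mathbf{T}''$ with $B(\mathbf{T}'')\ge B(\mathbf{T})$ at which path $\mathcal{L}_0$ is minimal, and then take convex combinations.

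Concretely, I would let the \emph{jump} condition drive the construction. At the optimum, if some $\ell$ is active, then $\vartheta_{\ell+1}$ is "too large" relative to the cumulative angle $\theta_\ell+\vartheta_{k+1}$. Splitting that large gap by shifting the points $T_{1},\dots,T_\ell$ (and their mirrors) outward, i.e.\ increasing $\vartheta_1,\dots,\vartheta_\ell$ proportionally at the expense of $\vartheta_{\ell+1}$, lengthens every chord on the initial arc and the jump chord. It shortens only the single chord $\vartheta_{\ell+1}$ (and its mirror) in $\mathcal{L}_\ell$. Using $d(T_\ell,T_k)<d(T_\ell,T_{\ell+1})$ together with Lemma~\ref{le:segment}, I would show that the first-order gain exceeds the loss, and similarly for every larger active index.

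\textbf{Main obstacle.} The hardest step is handling several simultaneously active $\ell$'s with a single perturbation direction. A perturbation helping $\mathcal{L}_{\ell_1}$ changes gaps that appear with different signs or weights in $\mathcal{L}_{\ell_2}$. I would first try the direction that shrinks only $\vartheta_{\ell^*+1}$ and its mirror while enlarging $\vartheta_1$ and $\vartheta_{k+1}$. Every active path contains the $\vartheta_1$-chord and the jump chord. The loss from $\vartheta_{\ell^*+1}$ appears in every $\mathcal{L}_\ell$ with $\ell\ne\ell^*$ as an ordinary chord, but in $\mathcal{L}_{\ell}$ for $\ell>\ell^*$ it is also absorbed inside the jump angle, so its net effect cancels there. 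This reduces the problem to a one-variable derivative inequality, which I would verify using concavity of $\sin$ and the active-path inequality.
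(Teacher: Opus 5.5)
\textbf{Gap: shrinking the gap after the smallest active index does not lengthen all tied minimizers.}

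Your framework is the paper's. You assume $\mathcal{L}_0$ is not a minimizer and move angular mass from an interior gap and its mirror into $\vartheta_1,\vartheta_{k+1}$. Then every tied minimizer should lengthen to first order, while inactive paths stay above by continuity. The one choice that matters, however, is made the wrong way.

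Write $J_s:=\vartheta_{k+1}+\sum_{j\le s}\vartheta_j$ for the jump angle of $\mathcal{L}_s$. You shrink $\vartheta_{\ell^*+1}$ with $\ell^*=\min A$. This does lengthen $\mathcal{L}_{\ell^*}$, at rate $\cos(\vartheta_1/2)+2\cos(J_{\ell^*}/2)-\cos(\vartheta_{\ell^*+1}/2)>0$.

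The problem is an active $\ell>\ell^*$. You claim the loss ``cancels'' there because $\vartheta_{\ell^*+1}$ also lies inside $J_\ell$. That is false: lying inside $J_\ell$ is an extra loss. $J_\ell$ now grows at rate $\epsilon$ rather than $2\epsilon$. Meanwhile the chord of $\vartheta_{\ell^*+1}$ and its mirror chord both shrink; $\mathcal{L}_\ell$ traverses the mirror since $\ell+\ell^*\le k-1$. So
$$\frac{d}{d\epsilon}d(\mathcal{L}_\ell)=\cos(\vartheta_1/2)+\cos(J_\ell/2)-2\cos(\vartheta_{\ell^*+1}/2).$$
Since $J_\ell>\vartheta_{\ell^*+1}$, the middle term is smaller than the loss it must offset.

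This rate is negative in admissible configurations. Take $k=12$, $\vartheta_1=0.1$, $\vartheta_2=0.21$, and $\vartheta_3\approx0.420$ chosen so that $d(\mathcal{L}_1)=d(\mathcal{L}_2)<d(\mathcal{L}_0)$. Make the remaining seven gaps equal ($\approx0.689$), so that $\mathcal{L}_3,\dots,\mathcal{L}_6$ are longer than $\mathcal{L}_0$. Then $A=\{1,2\}$, and your direction changes $d(\mathcal{L}_2)$ at rate $\cos(0.05)+\cos(0.205)-2\cos(0.105)\approx-0.011$. So $B$ decreases and no contradiction arises, and nothing in your argument rules out the maximizer looking like this.

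Your other two routes do not close the gap either. The concavity ``alternative'' is circular: the maximizer is unique, so any $\mathbf{T}''$ with $B(\mathbf{T}'')\ge B(\mathbf{T})$ is $\mathbf{T}$ itself. The proportional-shift variant is only sketched, not carried out.

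\textbf{The fix (what the paper does): use the largest active index.} Take $\ell=\max A$ and shrink $\vartheta_{\ell+1}$ and $\vartheta_{k+1-\ell}$.

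For every active $s\le\ell$, neither shrunk gap lies in $J_s$. So the jump chord grows at rate $2\cos(J_s/2)$, against a loss of at most $2\cos(\vartheta_{\ell+1}/2)$; when $s=\ell$ only one copy of that loss appears.

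A single inequality at the maximal index then controls all of them at once. The condition $d(T_\ell,T_k)<d(T_\ell,T_{\ell+1})$, together with $J_\ell+\vartheta_{\ell+1}<2\pi$, gives $J_\ell<\vartheta_{\ell+1}$. This also settles the ``small or close to $\pi$'' comparison you left open. Hence $J_s\le J_\ell<\vartheta_{\ell+1}$, so $\cos(J_s/2)>\cos(\vartheta_{\ell+1}/2)$, and the rate exceeds $\cos(\vartheta_1/2)>0$.

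The only extra care is $k$ even with $\ell=k/2$. There $\vartheta_{\ell+1}$ is its own mirror and is decreased by $2\epsilon$.
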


Note that this lemma does not say anything about optimality of paths with ``jumps''. It only implies that one of the optimal paths must be without ``jumps'', but there could be other paths with ``jumps'' of the same length. In fact, we will see that in the optimal location of treasures this is indeed the case that there are multiple paths of the same optimal length.

\begin{proof}[Proof of \cref{lem:no_jumps}]
Consider $\mathbf{T}=(T_1, T_2, \ldots, T_k)$, an optimal and symmetric location of $k$ treasures  that yields $b_k$ (that is, $b_k = B(T_1, T_2, \ldots, T_k)$). For a contradiction, suppose that $d(\mathcal{L}_0)$ is not optimal, that is, $d(\mathcal{L}_0) > d(\mathcal{L}_s)$ for some $s \in [\lfloor k/2\rfloor]$. There could be many optimal paths so let us define $\{\mathcal{L}_s : s \in S \subseteq [\lfloor k/2\rfloor] \}$ to be the set of optimal paths. All paths in this set are of the same (optimal) length and strictly shorter than other paths, in particular, strictly shorter than $\mathcal{L}_0$. Let $\ell=\max\{S\}$ be the largest index of an optimal path. Recall that $\ell \le k/2$.

For a given $\epsilon > 0$, we modify slightly $\mathbf{T}$ to get another location of treasures $\mathbf{T}' = \mathbf{T}'(\epsilon) = (T'_1, T'_2, \ldots, T'_k)$: increase angles $\vartheta_{1}$ and $\vartheta_{k+1}$ by $\epsilon$ and decrease angles $\vartheta_{\ell+1}$ and $\vartheta_{k+1-\ell}$ by~$\epsilon$. Note that it is possible that $\ell+1=k+1-\ell$ (this \emph{special} case happens when $k$ is even and $\ell = k/2$). In such cases, we decrease $\vartheta_{\ell+1} = \vartheta_{k+1-\ell}$ by $2 \epsilon$. In other words, we get $\mathbf{T}'$ from $\mathbf{T}$ by rotating $T_{1}, T_{2}, \ldots, T_{\ell}$ slightly counter-clockwise and rotating $T_{k}, T_{k-1}, \ldots, T_{k+1-\ell}$ slightly clockwise. Note that this operation ensures that $\mathbf{T}'$ is symmetric. We will show that for some sufficiently small $\epsilon > 0$ we have that $B(T'_1, T'_2, \ldots, T'_k) > B(T_1, T_2, \ldots, T_k)$ which will contradict an optimality of $\mathbf{T}$. 

Note that the above transformation affects the length of each path in a continuous way. Hence, if $\epsilon$ is small enough, then all non-optimal paths are still longer than any optimal one. Then, it is enough to show that each path $\mathcal{L}_s$ with $s \in S$ increases its length. We will investigate the derivative of the change of the length of any optimal path. The special case ($\ell+1=k+1-\ell$) has to be treated slightly differently and, additionally, there are two cases to consider: $s=\ell$ and $s<\ell$ (depending on which case is considered, the segment from $T_\ell$ to $T_{\ell+1}$ that gets shorter is traversed by the path or not). Therefore, we have in total four cases to consider:
\begin{itemize}
    \item[(C1)] $\ell+1< k+1-\ell$ and $s<\ell$;
    \item[(C2)] $\ell+1< k+1-\ell$ and $s=\ell$;
    \item[(C3)] $\ell+1= k+1-\ell$ and $s<\ell$;
    \item[(C4)] $\ell+1= k+1-\ell$ and $s=\ell$.
\end{itemize}
For each of these cases, we compute the derivative of the path length under the infinitesimal change of the location of treasures, and argue that it is positive.

\medskip

\textbf{Case (C1)}: $\ell+1< k+1-\ell$ and $s<\ell$. In this case, the change affects the following segments of path $\mathcal{L}_s$: we 
increase the distance from $R$ to $T_1$ by angle $\epsilon$, 
increase the distance from $T_s$ to $T_k$ by angle $2\epsilon$ (this part is non-trivial but will be justified soon), 
decrease the distance from $T_{k+1-\ell}$ to $T_{k-\ell}$ by angle $\epsilon$, and
decrease the distance from $T_{\ell+1}$ to $T_{\ell}$ by angle~$\epsilon$. 
Therefore, the derivative of the path length in this case is equal to
\begin{align*}
\cos(\vartheta_1/2)&+2\cos\left( \vartheta_{k+1}/2+\sum_{j=1}^{s}\vartheta_j/2 \right)-\cos(\vartheta_{k+1-\ell}/2) -\cos(\vartheta_{\ell+1}/2) \\
& = \cos(\vartheta_1/2)+2\cos\left( \vartheta_{k+1}/2+\sum_{j=1}^{s}\vartheta_j/2 \right)-2\cos(\vartheta_{\ell+1}/2),
\end{align*}
since $\vartheta_{\ell+1} = \vartheta_{k+1-\ell}$.

Now, a crucial observation is that $\vartheta_{k+1}+\sum_{j=1}^{s}\vartheta_j < \vartheta_{\ell+1}$. Indeed, since $d(\mathcal{L}_\ell) < d(\mathcal{L}_0)$, by \eqref{eq:diff_Ll-L0} we get that $d(T_\ell,T_k) < d(T_\ell, T_{\ell+1})$. Hence, the above inequality holds for $s=\ell$ and so it clearly holds for all $s \in S$ since the left hand side gets smaller when $s < \ell$. Note that, since $\vartheta_{k+1}+\sum_{j=1}^{s}\vartheta_j < \vartheta_{\ell+1}$ and $\vartheta_{k+1}+\sum_{j=1}^{s}\vartheta_j + \vartheta_{\ell+1} < 2\pi$, we have that $\vartheta_{k+1}+\sum_{j=1}^{s}\vartheta_j<\pi$ so, indeed, the distance from $T_s$ to $T_k$ increases. Moreover, this implies that $\cos(  \vartheta_{k+1}/2+\sum_{j=1}^{s}\vartheta_j/2 )-\cos(\vartheta_{\ell+1}/2)>0$ as the $\cos$ function is a decreasing function for angles less than $\pi$. Since, trivially, $\cos(\vartheta_1/2)>0$, we conclude that the proposed change of angles increases the length of $\mathcal{L}_s$ while keeping the symmetry.

\medskip

\textbf{Case (C2)}: $\ell+1< k+1-\ell$ and $s=\ell$. This time, we 
increase the distance from $R$ to $T_1$ by angle $\epsilon$, 
increase the distance from $T_\ell$ to $T_k$ by angle $2\epsilon$, and
decrease the distance from $T_{k+1-\ell}$ to $T_{k-\ell}$ by angle $\epsilon$. 
(Note that the path from $T_{\ell+1}$ to $T_{\ell}$ that gets shorter is not traversed.) The derivative is equal to
$$
\cos(\vartheta_1/2)+2\cos\left( \vartheta_{k+1}/2+\sum_{j=1}^{s}\vartheta_j/2 \right)-\cos(\vartheta_{k+1-\ell}/2),
$$
which is strictly larger than what we had in Case (C1), and so also positive. 

\medskip

\textbf{Case (C3)}: $\ell+1= k+1-\ell$ and $s<\ell$.Note that in this case it is possible that $\vartheta_{\ell+1} \geq \pi$. However, if this happens, then all distances that change would be increased, and so the desired property would hold. Hence, we need to consider the case when $\vartheta_{\ell+1}<\pi$. This time, we 
increase the distance from $R$ to $T_1$ by angle $\epsilon$, 
increase the distance from $T_s$ to $T_k$ by angle $2\epsilon$, 
decrease the distance from $T_{k+1-\ell}=T_{\ell+1}$ to $T_{k-\ell}=T_{\ell}$ by angle $2\epsilon$ (recall that this is a special case). 
The derivative is equal to
$$
\cos(\vartheta_1/2)+2\cos\left( \vartheta_{k+1}/2+\sum_{j=1}^{s}\vartheta_j/2 \right)-2\cos(\vartheta_{\ell+1}/2),
$$
and we are back to Case~(C1).

\medskip

\textbf{Case (C4)}: $\ell+1= k+1-\ell$ and $s=\ell$. This time, we
increase the distance from $R$ to $T_1$ by angle $\epsilon$ and
increase the distance from $T_\ell$ to $T_k$ by angle $2\epsilon$.  
(Note that the path from $T_{\ell+1}$ to $T_{\ell}$ that gets shorter is not traversed.) The conclusion follows immediately, as some segments get longer but no segment gets shorter. This finishes the proof of Case~(C4) and so the proof of the lemma is finished too.
\end{proof}

With \cref{lem:no_jumps} at hand, we can come back and finalize the proof of \cref{thm:ck_le_sk}.

\begin{proof}[Proof of \cref{thm:ck_le_sk}]
Let $\mathbf{T}=(T_1, T_2, \ldots, T_k)$ be the optimal and symmetric location of $k$ treasures that yields $b_k$ (that is, $b_k = B(T_1, T_2, \ldots, T_k)$). It follows from \cref{lem:no_jumps} that $B(T_1, T_2, \ldots, T_k) = d(\mathcal{L}_0)$, that is, one optimal way to visit the $k$ treasures in $\mathbf{T}$ is to use the path with no ``jump''. There could be more optimal paths but for all $\ell \in [ \lfloor k/2 \rfloor ]$, $d(\mathcal{L}_\ell) \ge d(\mathcal{L}_0)$ or, equivalently (see \eqref{eq:diff_Ll-L0}), that 
\begin{equation}\label{eq:constraint}
\text{ for all } \ell \in [ \lfloor k/2 \rfloor ], \qquad d(T_\ell, T_{\ell+1}) \le d(T_\ell, T_k).
\end{equation}
Hence, instead of searching for the worst case location of $k$ treasures, we can redefine $b_k$ as follows (see \eqref{eq:bk}):
$$
b_k = \sup_{(T_1, T_2, \ldots, T_k) \in \mathcal{C}^k} d(\mathcal{L}_0) = \sup_{(T_1, T_2, \ldots, T_k) \in \mathcal{C}^k}  \sum_{i=1}^k 2 \sin( \vartheta_i/2),
$$
where the supremum is taken over all symmetric locations of $k$ treasures on the circle $\mathcal{C}$ that satisfy \eqref{eq:constraint}. In fact, it will be convenient to partition all possible locations into families with the same location of $T_1$: 
\begin{equation}\label{eq:new_bk}
b_k = \sup_\alpha \sup_{(T_1, T_2, \ldots, T_k) \in \mathcal{C}^k} d(\mathcal{L}_0) = \sup_\alpha \sup_{(T_1, T_2, \ldots, T_k) \in \mathcal{C}^k}  \sum_{i=1}^k 2 \sin( \vartheta_i/2),
\end{equation}
where the second suprema are taken over all symmetric locations of $k$ treasures on the circle $\mathcal{C}$ with $\vartheta_1 = \alpha$ that satisfy \eqref{eq:constraint}.

Consider any symmetric $\textbf{T}$ with $\vartheta_1 = \vartheta_{k+1} = \alpha$, that is, positions of $T_1$ and $T_k$ are fixed. If there were no additional constraints, in order to maximize $d(\mathcal{L}_0)$ one should distribute the remaining treasures evenly, that is, fix $\vartheta_2 = \vartheta_3 = \ldots = \vartheta_k = 2(\pi-\alpha)/(k-1)$. If \eqref{eq:constraint} is satisfied for $\ell = 1$ for such location of treasures, then we simply do that. However, if \eqref{eq:constraint} is not satisfied for $\ell = 1$, then it is clearly the best to place $T_2$ as far away from $T_1$ as possible but make sure that $d(T_1, T_2) \le d(T_1, T_k)$. This is achieved when $d(T_1, T_2) = d(T_1, T_k)$, that is, when $\vartheta_2$, the angular distance between $T_1$ and $T_2$ is equal to $\vartheta_1 + \vartheta_{k+1} = 2 \vartheta_1$, the sum of the angular distance between $T_1$ and $R$, and the angular distance between $R$ and $T_k$. 

We may continue this argument and conclude that the best location of treasures (in the family of symmetric locations of $k$ treasures with $\vartheta_1 = \alpha$ that satisfy \eqref{eq:constraint}) is obtained as follows. Start with $\vartheta_1 = \alpha$. If distributing evenly the remaining $k-2i$ treasures does not violate \eqref{eq:constraint}, then do it; otherwise, double the angle (fix $\vartheta_{i+1} = 2 \vartheta_i$) and continue from there. But this is exactly the definition of $(k, \alpha)$-special placement (see \cref{def:special_family}). It follows from \cref{lem:lower_bound_ska} that $\sup_{(T_1, T_2, \ldots, T_k) \in \mathcal{C}^k}  \sum_{i=1}^k 2 \sin( \vartheta_i/2)$ in \eqref{eq:new_bk} is equal to $s_{k,\alpha}$ where $s_{k,\alpha}$ defined in~(\ref{eq:s_ka}). 

There is one small caveat that needs to be discussed. In the definition of $(k, \alpha)$-special placement, it is assumed that $m_k \le \alpha \leq M_k$. This convenient bound for $\alpha$ was introduced earlier as the values of $\alpha$ outside of this interval do not yield worse-case placements of treasures. There was no justification needed for the lower bound of $c_k$ in \cref{sec:ck_ge_sk}: any placement of treasures yields a lower bound and so it makes sense to optimize parameter $\alpha$ only over a promising range for it. For the upper bound of $c_k$, we need to justify this choice. First, note that if $\alpha>M_k$, then the position of $T_1$ and $T_k$ are determined by the choice of $\alpha$ but the remaining treasures are evenly distributed. However, in this case, worse location of treasures is obtained for $\alpha=M_k$, so we may safely assume that $\alpha\leq M_k$. On the other hand, if $\alpha<m_k$, then in our argument we never reach the situation that distributing evenly the remaining treasures does not violate \eqref{eq:constraint}. Hence, no symmetric configuration of $k$ treasures $\textbf{T}$ with $\vartheta_1 = \alpha$ satisfies \eqref{eq:constraint} and so it cannot be optimal. This means that after optimizing with respect to $m_k\leq \alpha \leq M_k$ we get that $b_k = s_k$; see \eqref{eq:s_k}. This finishes the proof of the theorem. 
\end{proof}

\section{Back to the Original Problem} \label{sec:org_problem}

\subsection{Implications of the Auxiliary Problem for the Original One: Asymptotic Behaviour ($k \to \infty$ or $n \to \infty$)} \label{sec:connection_aux_org}

While the study of $c_k$ is of independent interest, its primary utility lies in providing fundamental bounds for the original multi-robot coordination problem $\rho_k^n$. In particular, it will allow us to understand asymptotic behaviour of $\rho_k^n$ when $k \to \infty$ or $n \to \infty$. We begin with the following simple lower bound.

\begin{plemma}\label{lem:rho_lower}
For any $n, k \in \N$, 
$$
\rho_k^n \ge 1 + c_k.
$$ 
\end{plemma}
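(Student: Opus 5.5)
The plan is an adversary argument. Fix any strategy $S \in \mathcal{S}_k^n$ and any $\epsilon>0$. Choose a placement $\mathbf{T}^* = (T_1^*, \ldots, T_k^*)$ with $C(T_1^*, \ldots, T_k^*) \ge c_k - \epsilon$ relative to the start point $R=(1,0)$, which exists by the definition of $c_k$ as a supremum in~\eqref{eq:ck}. This placement will be rotated adversarially according to where $S$ first makes a robot touch the circle.

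\textbf{Step 1: the first arrival on the circle.} No treasure can be discovered before some robot reaches $\mathcal{C}$. Every robot starts at the origin and moves at speed at most $1$, so this happens no earlier than time $1$. Before that moment the robots have no information, so their trajectories up to the first contact are fixed by $S$ and do not depend on $\mathbf{T}$. Let $t_0 \ge 1$ be the first time any robot reaches $\mathcal{C}$, let $r$ be that robot, and let $P$ be the point where it lands.

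\textbf{Step 2: rotate the placement.} Rotate $\mathbf{T}^*$ so that $R$ is mapped to $P$, and call the result $\mathbf{T}$. Since $\mathbf{T}$ is decided only after $S$ has committed to $P$ (the adversary may simulate $S$), and since $\rho_S$ is a supremum over all placements, this choice is legitimate. I must also ensure that no treasure sits exactly at $P$. The paper assumes this for the auxiliary problem, so $\mathbf{T}^*$ can be chosen with no treasure at $R$. If the near-supremum configuration had a treasure at $R$, I would perturb it slightly, using continuity of $C$.

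\textbf{Step 3: cost for robot $r$.} From time $t_0$ onward, robot $r$ must still visit all $k$ treasures of $\mathbf{T}$. Whatever it does, its remaining trajectory is a curve starting at $P$ that passes through every treasure, in some order. By the triangle inequality, the length of such a curve is at least the polygonal length for that visiting order, and hence at least $C$ evaluated on $\mathbf{T}$ from start point $P$. By rotational invariance this equals $C(T_1^*, \ldots, T_k^*) \ge c_k - \epsilon$. Information sharing cannot help here: knowing the locations only lowers the cost to this full-information optimum, never below it. Therefore $\rho_S(\mathbf{T}) \ge t_0 + c_k - \epsilon \ge 1 + c_k - \epsilon$. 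Taking the supremum over placements, then letting $\epsilon \to 0$, then taking the infimum over $S$ gives $\rho_k^n \ge 1 + c_k$.

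\textbf{Main obstacle.} The delicate point is Step 2: justifying that the adversary may choose $\mathbf{T}$ after seeing $P$, and handling possible simultaneous arrivals. If several robots hit the circle at time $t_0$, I pick any one of them as $r$ and the argument is unchanged. Since $S$ may be adaptive, I would phrase the argument through the deterministic pre-discovery trajectory, which is the same for every placement that has no treasure at the visited points. Under randomized strategies the argument would need averaging, but the paper's strategies are deterministic, so I would state that explicitly.
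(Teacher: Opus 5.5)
Your proposal is correct and follows the same adversary argument as the paper: the first robot reaches the circle at some point at time $t_0 \ge 1$, and a (rotated) worst-case configuration is then placed relative to that point, so that robot alone needs at least $c_k$ more time. Your version is more careful than the paper's (the $\epsilon$-approximation of the supremum, the explicit rotation, the triangle-inequality bound for an arbitrary trajectory, and the remark that the pre-contact motion does not depend on the placement), but the route is the same.
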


\begin{proof}
Let $S \in \mathcal{S}_k^n$ be an arbitrary strategy for $n$ robots. Let $t_0$ be the first time at which any robot reaches the circle $\mathcal{C}$. Since all robots start at the origin and have a maximum speed of $1$, it is clear that $t_0 \ge 1$.

Suppose a robot $r^*$ reaches point $R \in \mathcal{C}$ at time $t_0$. At this instant, let the $k$ treasures be placed on $\mathcal{C}$ in a configuration that realizes the worst-case cost $c_k$ for a robot starting at $R$. Even if we provide all robots with full information regarding the treasure locations at time $t_0$, robot $r^*$ still requires at least $c_k$ additional units of time to visit every treasure. Because the objective requires all robots to visit all treasures, the total time $\rho_S$ cannot be less than $t_0 + c_k$. Thus, we have:
$$
\rho_k^n = \inf_S \rho_S \ge 1 + c_k,
$$
which concludes the proof.
\end{proof}

We now establish upper bounds for the optimal time $\rho_k^n$ by considering two specific search strategies.

\begin{plemma}\label{lem:rho_upper}
For any $n, k \in \N$, 
\begin{eqnarray}
\rho_k^n &\le& 1 + 2 \pi \label{eq:rho_upper_1} \\
\rho_k^n &\le& 1 + \frac {2\pi}{n} + c_k. \label{eq:rho_upper_2}
\end{eqnarray}
\end{plemma}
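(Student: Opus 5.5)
Both bounds follow from exhibiting explicit strategies; since $\rho_k^n$ is an infimum over strategies, it suffices to show that for each strategy and every placement $\mathbf{T}\in\mathcal{C}^k$, all robots have visited all treasures by the stated time.

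For \eqref{eq:rho_upper_1}, I will let all $n$ robots move together. They travel straight from the origin to $R=(1,0)$, which takes time $1$. They then traverse the whole circumference of $\mathcal{C}$ counter-clockwise at unit speed, which takes time $2\pi$. Every point of $\mathcal{C}$, and hence every treasure, is passed by every robot by time $1+2\pi$. This holds regardless of $\mathbf{T}$, so $\rho_S\le 1+2\pi$.

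For \eqref{eq:rho_upper_2}, I will use a split-then-gather strategy.
\begin{itemize}
\item \emph{Search phase.} Robot $r_i$ ($i\in\{1,\dots,n\}$) goes straight to the point at angle $2\pi(i-1)/n$, arriving at time $1$. It then sweeps the arc of angular length $2\pi/n$ counter-clockwise, finishing at time $1+2\pi/n$ at the point at angle $2\pi i/n$. The $n$ arcs cover $\mathcal{C}$, so by time $t^*=1+2\pi/n$ every treasure has been discovered, and by instant communication every robot knows all of $\mathbf{T}$.
\item \emph{Touring phase.} Each robot $r_i$ now sits at some point $R_i\in\mathcal{C}$ and follows a shortest path from $R_i$ through all treasures. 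By rotational symmetry, the auxiliary cost from any starting point on $\mathcal{C}$ is at most $c_k$: rotating the configuration so that $R_i$ maps to $(1,0)$ preserves all distances, so the cost is bounded by the supremum in \eqref{eq:ck}.
\end{itemize}
Each robot therefore finishes by $1+2\pi/n+c_k$.

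The touring phase ignores treasures a robot may already have visited during its sweep. Revisiting them is harmless, since the path through all $k$ treasures is an upper bound. The case where fewer than $k$ distinct treasure locations exist is also harmless, since a shortest tour through fewer points is no longer, because $c_k$ is monotone.

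The only genuinely delicate step is the reduction from an arbitrary start $R_i$ to $R=(1,0)$. This needs the observation that $C(\cdot)$ is invariant under rotations of the plane about the origin, together with the fact that $c_k$ is a supremum over all configurations, so it dominates the rotated one.
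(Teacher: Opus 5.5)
Your proof is correct and follows the paper's argument: a full-circle sweep by all robots for \eqref{eq:rho_upper_1}, and for \eqref{eq:rho_upper_2} deployment to a regular $n$-gon, sweeping arcs of length $2\pi/n$, and then touring from the current position. Your sweep direction differs, which is immaterial, and your rotation-invariance remark spells out what the paper leaves implicit in ``by the definition of $c_k$''.
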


\begin{proof}
To prove \eqref{eq:rho_upper_1}, consider a baseline strategy where all robots move at maximum speed directly to the circle $\mathcal{C}$ and then traverse its entire circumference. Regardless of the treasure locations, every point on the circle (and thus every treasure) will be visited by all robots within $1 + 2\pi$ units of time.

\medskip

To prove \eqref{eq:rho_upper_2}, we employ a coordinated search-and-visit strategy.

\medskip \noindent \textbf{Deployment Phase}: 
Let $P_1, P_2, \ldots, P_n$ be $n$ points forming a regular $n$-gon inscribed in~$\mathcal{C}$. Each robot $i \in [n]$ moves directly to its assigned point $P_i$, arriving at time $t=1$.

\medskip \noindent \textbf{Search Phase}: 
From $P_i$, each robot $i$ moves clockwise along the arc of the circle. Since the robots are spaced at intervals of $2\pi/n$, every point on $\mathcal{C}$ will have been encountered by at least one robot after an additional $2\pi/n$ units of time.

\medskip \noindent \textbf{Completion Phase}: 
By time $1 + 2\pi/n$, the locations of all $k$ treasures have been discovered and shared among all robots. Each robot then proceeds from its current position on $\mathcal{C}$ to visit any treasures it has not yet reached. By the definition of $c_k$, this final traversal requires at most $c_k$ additional units of time.

\medskip \noindent
Summing the time for these three phases yields a total time of $1 + \frac{2\pi}{n} + c_k$, which completes the proof of the lemma.
\end{proof}

By combining the results from \cref{lem:rho_lower} and \cref{lem:rho_upper}, equation~\eqref{eq:rho_upper_2}, we can characterize the limiting behaviour of the optimal time as the number of robots $n$ grows large. This result confirms that with a sufficiently large swarm of robots, the search time $2\pi/n$ vanishes, leaving only the ``physical'' limit, that is, the time required to reach the circle plus the worst-case time for a single robot to visit all discovered treasures.

\begin{pcorollary}\label{cor:n_to_infty}
For any $k \in \N$, 
$$
\lim_{n \to \infty} \rho_k^n = 1 + c_k.
$$
\end{pcorollary}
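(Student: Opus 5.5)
The corollary follows by a squeeze between the two bounds just established. For every $n, k \in \N$, \cref{lem:rho_lower} gives the lower bound $\rho_k^n \ge 1 + c_k$. The second inequality of \cref{lem:rho_upper}, namely \eqref{eq:rho_upper_2}, gives the upper bound $\rho_k^n \le 1 + \frac{2\pi}{n} + c_k$. Together,
$$
1 + c_k \;\le\; \rho_k^n \;\le\; 1 + c_k + \frac{2\pi}{n} \qquad \text{for all } n \in \N .
$$

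Now fix $k$. The quantity $c_k$ does not depend on $n$, and it is a finite number, since $2 \le c_k \le 2\pi$. The gap between the two bounds is $2\pi/n$, which tends to $0$ as $n \to \infty$. By the squeeze theorem, $\rho_k^n \to 1 + c_k$.

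One can also see this without the squeeze theorem. By the monotonicity noted in the introduction, the sequence $(\rho_k^n)_{n \ge 1}$ is non-increasing in $n$, and it is bounded below by $1 + c_k$, so its limit exists. The displayed inequalities then identify that limit as $1 + c_k$.

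There is no real obstacle here. The only point to check is that both cited bounds hold uniformly in $n$ for the fixed $k$ (they do, since each lemma is stated for arbitrary $n, k \in \N$), and that $c_k$ is finite.
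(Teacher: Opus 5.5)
Your proof is correct and is exactly the paper's argument. The paper also obtains the corollary by squeezing $\rho_k^n$ between the lower bound $1+c_k$ of \cref{lem:rho_lower} and the upper bound $1+\frac{2\pi}{n}+c_k$ of \eqref{eq:rho_upper_2}, whose gap $2\pi/n$ vanishes as $n\to\infty$; your remark about monotonicity in $n$ is a harmless extra that the squeeze does not need.
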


Conversely, we can deduce the behaviour as the number of treasures $k$ tends to infinity by applying \cref{lem:rho_lower}, \cref{lem:rho_upper}, equation \eqref{eq:rho_upper_1}, and the fact that $\lim_{k \to \infty} c_k = 2 \pi$ (see \cref{lem:general_LB_ck}).

\begin{pcorollary}\label{cor:k_to_infty}
For any $n \in \N$, 
$$
\lim_{k \to \infty} \rho_k^n = 1 + 2\pi.
$$
\end{pcorollary}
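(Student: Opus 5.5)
The statement is Corollary~\ref{cor:k_to_infty}: for fixed $n$, $\lim_{k \to \infty} \rho_k^n = 1 + 2\pi$. I would prove it with a squeeze argument that uses only results already established.

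\textbf{Upper bound.} By \cref{lem:rho_upper}, inequality~\eqref{eq:rho_upper_1}, we have $\rho_k^n \le 1 + 2\pi$ for every $k$. Hence $\limsup_{k\to\infty} \rho_k^n \le 1 + 2\pi$.

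\textbf{Lower bound.} By \cref{lem:rho_lower}, $\rho_k^n \ge 1 + c_k$ for every $k$. \cref{lem:general_LB_ck} states that $\lim_{k\to\infty} c_k = 2\pi$. One can also see this directly: inequality~\eqref{eq:ck-lower-bound2} gives $c_k \ge 2\pi - \pi/k - \pi^3/(3k^2)$, which tends to $2\pi$. Therefore $\liminf_{k\to\infty} \rho_k^n \ge 1 + 2\pi$.

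\textbf{Conclusion.} Combining the two bounds gives
$$
1 + c_k \le \rho_k^n \le 1 + 2\pi,
$$
with $c_k \to 2\pi$, so $\rho_k^n \to 1 + 2\pi$. The monotonicity of $\rho_k^n$ in $k$, noted in the introduction, shows that the limit exists in any case, but the squeeze already establishes this.

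There is no real obstacle. The only point to check is that both bounds hold uniformly in $n$. They do: neither $1 + 2\pi$ nor $1 + c_k$ depends on $n$, so the result holds for every fixed $n \in \N$.
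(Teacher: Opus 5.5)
Your proof is correct and follows the paper's argument: you squeeze $\rho_k^n$ between $1+c_k$ from \cref{lem:rho_lower} and $1+2\pi$ from \eqref{eq:rho_upper_1}, and invoke $\lim_{k\to\infty} c_k = 2\pi$ from \cref{lem:general_LB_ck}. Dropping the positive $7\pi^3/(24k^3)$ term from \eqref{eq:ck-lower-bound2} still leaves a valid lower bound, and your remark that neither bound depends on $n$ is accurate.
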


\subsection{Two Robots and One Treasure ($n=2$, $k=1$)} \label{sec:n2k1}

The problem involving one robot and any number of treasures $k \in \N$ is very easy. \cref{lem:rho_upper} implies that $\rho_k^1 \le 1+2\pi$. To show a matching lower bound for $\rho_k^1$, let us fix any $\epsilon > 0$, arbitrarily small. Regardless of the searching strategy used, at time $1+2\pi-\epsilon$ there are some points in $\mathcal{C}$ that are still \emph{not} visited by the robot. One can hide some treasures there which means the job is not done yet. By taking $\epsilon \to 0$, we get that for any $k \in \N$,
$$
\rho_k^1 = 1+2\pi.
$$

\medskip

Determining the value of $\rho_1^2$ is already non-trivial. Let us start with the following observation. 

\begin{plemma}\label{lem:sqrt3jump}
Consider the unit circle $\mathcal{C}$ and two points from $\mathcal{C}$: $R_1 = (0, 1)$ (angular location $\pi/2$), $R_2 = (0, -1)$ (angular location $3\pi/2$). Let $\stackrel{\frown}{R_1R_2}$ denote a segment of the circle with end points $R_1$ and $R_2$ (half-circle) that contains point $(1,0)$ (angular location $0$). 
Suppose that two robots are at points $R_1$ and, respectively, $R_2$, and one treasure is hidden somewhere in $\stackrel{\frown}{R_1R_2}$. There exists a strategy that guarantees that both robots visit the treasure in time $\pi/6 + \sqrt{3}$. 
\end{plemma}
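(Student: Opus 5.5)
The strategy I would use is the symmetric sweep toward the point $(1,0)$. Robot~1 starts at $R_1$ and moves clockwise along $\mathcal{C}$ at unit speed. Robot~2 starts at $R_2$ and moves counter-clockwise at unit speed. After time $t\in[0,\pi/2]$, robot~1 is at angular location $\pi/2-t$ and robot~2 is at angular location $-\pi/2+t$. Together they have then covered the two arcs of $\stackrel{\frown}{R_1R_2}$ of length $t$ adjacent to $R_1$ and $R_2$. Hence the treasure is found by time $\pi/2$ at the latest, when the robots meet at $(1,0)$. As soon as one robot finds it, the other robot is informed and runs straight (along the chord) to the treasure's location.

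To analyze the strategy, I would first note that by the reflection symmetry across the $x$-axis it suffices to treat a treasure at angular location $\pi/2-t$ for some $t\in[0,\pi/2]$, so that robot~1 discovers it at time $t$. Robot~1 has then visited the treasure. Robot~2 is at angular location $-\pi/2+t$, which is separated from the treasure by the angle $(\pi/2-t)-(-\pi/2+t)=\pi-2t\in[0,\pi]$. The chord between the two positions therefore has length $2\sin((\pi-2t)/2)=2\cos t$. Consequently both robots have visited the treasure by time
$$
f(t) := t + 2\cos t .
$$

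The remaining step is a one-variable maximization of $f$ over $[0,\pi/2]$. We have $f'(t)=1-2\sin t$, which is positive on $[0,\pi/6)$ and negative on $(\pi/6,\pi/2]$, so $f$ attains its maximum at $t=\pi/6$. The maximum value is $f(\pi/6)=\pi/6+2\cos(\pi/6)=\pi/6+\sqrt{3}$. For comparison, the endpoints give $f(0)=2$ and $f(\pi/2)=\pi/2$, both smaller. The endpoint $t=0$ corresponds to a treasure at $R_1$ (or $R_2$), found immediately, after which the other robot crosses a diameter. This yields the claimed bound $\pi/6+\sqrt{3}$ for every treasure position in $\stackrel{\frown}{R_1R_2}$.

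I do not expect a real obstacle, since the statement only asserts that some strategy achieves the bound. The only points needing care are the following. The chord-length formula requires the angular separation $\pi-2t$ to lie in $[0,\pi]$, which holds for $t\in[0,\pi/2]$. The second robot needs no further waiting because information is shared instantly. The case where robot~2 is the finder is symmetric. The worst case occurs at the treasure location of angle $\pi/3$ (or $-\pi/3$), which presumably is where the constant $\sqrt{3}$ in $\rho_1^2$ comes from.
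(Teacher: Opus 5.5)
Your proof is correct and is essentially the paper's argument. The paper uses the same strategy: both robots sweep toward $(1,0)$, and the non-finder runs along the chord. It parametrizes by the treasure's angle $\alpha\in[0,\pi/2]$ to get the completion time $\pi/2-\alpha+2\sin\alpha$, which is exactly your $t+2\cos t$ under $t=\pi/2-\alpha$, and it is maximized at $\alpha=\pi/3$, i.e., your $t=\pi/6$.
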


\begin{proof}
Suppose that both robots move toward point $(1,0)$ along the circle $\mathcal{C}$. When one of them discovers the treasure, it immediately informs the other robot which then stops traversing the circle and runs straight to the treasure.

By symmetry, we may assume that the treasure is placed at a point with angular location $\alpha$, that is, at point $T = (\cos(\alpha), \sin(\alpha))$ for some $\alpha \in [0, \pi/2]$. The above strategy finishes at time
$$
f(\alpha) = \frac {\pi}{2} - \alpha + 2 \sin( \alpha ).
$$
Since $f'(\alpha) = -1 + 2 \cos( \alpha )$, the maximum of the function $f(\alpha)$ is obtained at $\alpha = \pi/3$ and so we are guaranteed to visit the treasure before time $\pi/2 - \pi/3 + 2 \cdot (\sqrt{3}/2) = \pi/6 + \sqrt{3}$. This finishes the proof of the lemma. 
\end{proof}

With \cref{lem:sqrt3jump} at hand, we can prove the exact value of  $\rho_1^2$.

\begin{ptheorem}\label{thm:rho12}
$\rho_1^2 = 1 + \frac {2\pi}{3} + \sqrt{3} \approx 4.8264$.
\end{ptheorem}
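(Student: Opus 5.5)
\emph{Upper bound.} I would first analyse the following explicit two-robot strategy. Both robots travel together to $(-1,0)$, arriving at time $1$. They then walk along $\mathcal{C}$ in opposite directions (one counter-clockwise, one clockwise), so that at time $1+\pi/2$ they occupy $R_1=(0,1)$ and $R_2=(0,-1)$ and have jointly searched the left half-circle. Whenever a robot discovers the treasure, the other robot runs straight to it.

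\emph{First phase.} Suppose the treasure lies at arc distance $x\in[0,\pi/2]$ from $(-1,0)$. It is discovered at time $1+x$. By symmetry the other robot is then at the mirror point, at Euclidean distance $2\sin x$. The finishing time is therefore $1+x+2\sin x$. This function is increasing on $[0,\pi/2]$, since its derivative is $1+2\cos x>0$. Hence the worst case in this phase is $3+\pi/2<1+2\pi/3+\sqrt3$.

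\emph{Second phase.} If the treasure is in the right half-circle, the situation at time $1+\pi/2$ is exactly that of \cref{lem:sqrt3jump}. So both robots visit it by time $1+\pi/2+\pi/6+\sqrt3=1+2\pi/3+\sqrt3$. This proves $\rho_1^2\le 1+2\pi/3+\sqrt3$. The extremal case is a treasure at angle $\pi/3$ in the right half, discovered at time $1+2\pi/3$ while the partner sits at angular distance $2\pi/3$, i.e.\ at distance $\sqrt3$. This is the case the lower bound must reproduce.

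\emph{Lower bound.} I would use an adversary argument centred at the time $t^*=1+2\pi/3-\varepsilon$. No robot is on $\mathcal{C}$ before time $1$, and each explores circle length at rate at most $1$ afterwards. Hence at time $t^*$ the explored subset of $\mathcal{C}$ has total length at most $4\pi/3-2\varepsilon$. The unexplored set $U$ therefore has measure greater than $2\pi/3$.

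Let $P_1,P_2$ be the robots' positions at time $t^*$. The adversary places the treasure at some $u\in U$. It cannot be discovered before $t^*$, and afterwards both robots must reach it. Consequently
\[
\rho_S\ge t^*+\max\bigl(d(P_1,u),d(P_2,u)\bigr).
\]
The goal is to exhibit $u\in U$ with $\max_i d(P_i,u)\ge\sqrt3-O(\varepsilon)$; letting $\varepsilon\to0$ then gives the claim.

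\emph{Main obstacle.} A robot strictly inside the disk is not automatically far from circle points, and the measure count alone ($|U|>2\pi/3$ versus a ``bad'' set of the same size) is borderline. The bad set is the set of $u\in\mathcal{C}$ that are within distance $\sqrt3$ of both robots.
\begin{itemize}
\item I would first handle the case in which both robots are on $\mathcal{C}$. Each ball of radius $\sqrt3$ meets $\mathcal{C}$ in an open arc of angular length $4\pi/3$. The intersection of two such arcs has length at most $4\pi/3$, with equality only when $P_1=P_2$.
\item For $P_1\neq P_2$ I would sharpen the explored-length bound. While the robots are apart, each explored point charges the time of a single robot. Separating them by angular distance $\delta$ costs travel time that I would show offsets, up to $O(\varepsilon)$, the gain $\delta$ in the bad set.
\item For robots inside the disk, I would use the fact that the time a robot spends off $\mathcal{C}$ after time $1$ reduces the explored length by the same amount. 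I would then argue that the worst case reduces to robots on the circle.
\end{itemize}
This careful bookkeeping of explored length against robot positions is the step I expect to be hardest. If it fails, I would pass to a two-time argument that also adds the adversarial option of hiding the treasure at the last unexplored point.
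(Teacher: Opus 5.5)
Your upper bound is correct and uses the paper's strategy; your phase-one estimate $1+x+2\sin x\le 3+\pi/2$ is just a finer version of the paper's ``at most $2$ more units''. The lower bound, however, has a genuine gap.

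The inequality $\rho_S\ge t^*+\max_i d(P_i,u)$ for $u\in U$ is valid. But what you want to extract from it is false for general strategies, namely that some unexplored $u$ lies at distance $\sqrt3-O(\varepsilon)$ from one of the robots' positions at time $t^*$. If both robots are at (or near) the origin at time $t^*$, every point of $\mathcal{C}$ is at distance about $1$ from both. Such strategies lose only because they still have to search after $t^*$, and your inequality does not see that search time. The step ``the worst case reduces to robots on the circle'' is exactly what would need proof, and you give no mechanism for it. Even with both robots on $\mathcal{C}$ the measure count does not close. For $P_1=P_2=P$, the set of points at distance $\ge\sqrt3$ from $P$ is an arc of measure $2\pi/3$, while the explored set may have measure nearly $4\pi/3$. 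You would need an extra reachability argument (points of that far arc can only be explored before time $t^*-\sqrt3$), and the proposal does not supply it. Your fallback ``two-time argument'' is too vague to fill this.

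The missing idea is to make the adversary independent of where the robots are at time $t^*$. Since $\mu(U)>2\pi/3$, $U$ contains two points $A,B$ at angular separation exactly $2\pi/3$, i.e.\ $d(A,B)=\sqrt3$. Otherwise $U$ and its rotations by $2\pi/3$ and $4\pi/3$ would be pairwise disjoint with total measure $3\mu(U)>2\pi$. The robots' trajectories before discovery do not depend on the treasure's location. So the adversary can wait for the first time $t\ge t^*$ at which some robot $r$ reaches $A$ or $B$ (say $A$) and hide the treasure at $B$. Robot $r$ has not yet visited $B$ and needs at least $\sqrt3$ more time, so $\rho_S\ge t^*+\sqrt3$; letting $\varepsilon\to0$ finishes. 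This is the paper's argument, and it handles interior positions, coincident robots and separated robots uniformly, with no bookkeeping of positions.
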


\begin{proof}
To prove an upper bound, let us consider the following strategy. Both robots go straight to point $(-1,0)$ and then they split, one of them goes along the circle $\mathcal{C}$ to point $R_1 = (0, 1)$ (clockwise) whereas the other one goes along the circle $\mathcal{C}$ to point $R_2 = (0, -1)$ (counter-clockwise). They reach their destination at time $1 + \pi/2$. 

If the treasure is discovered at this point, then the robots inform each other about its location and the robot that is missing the treasure goes straight to it. The process ends in additional 2 units of time for a total of $3 + \pi/2 \approx 4.5708$. On the other hand, if the treasure is still not discovered, then the robots can follow a strategy guaranteed by \cref{lem:sqrt3jump} and finish their job in additional $\pi/6 + \sqrt{3}$ units of time for a total of $1+\pi/2+\pi/6 + \sqrt{3} = 1+2\pi/3+\sqrt{3} \approx 4.8264$. Hence, $\rho_1^2 \le 1 + \frac {2\pi}{3} + \sqrt{3}$.

\medskip

Let us now move to the lower bound. Fix any $\epsilon > 0$, arbitrarily small. Let $S \in \mathcal{S}_1^2$ be an arbitrary strategy for two robots. Trivially, regardless of the strategy used, at time $1+2\pi/3-\epsilon$, both robots searched (collectively) part of the circle $\mathcal{C}$ of measure at most $2 (2\pi/3-\epsilon) = 4\pi/3 - 2\epsilon$. It implies that part of $\mathcal{C}$ that is not visited yet has measure at least $2\pi/3 + 2\epsilon$. We will show that there exist two points at distance $\sqrt{3}$ that are still not visited. This will finish the proof of a lower bound. Indeed, we can wait for one of the robots to visit one of these points. When this happens, we can place the treasure in the other point forcing this robot to work for $\sqrt{3}$ extra time. It implies that the time to complete the task is at least $1+2\pi/3-\epsilon + \sqrt{3}$. The desired bound holds after taking $\epsilon \to 0$. 

It remains to prove the claim. For a contradiction, suppose that there exists a set $X \subseteq \mathcal{C}$ of measure $\mu(X) > 2\pi/3$ with the property that no two points in $X$ are at distance $\sqrt{3}$. Note that a distance of exactly $\sqrt{3}$ corresponds to an angular separation of exactly $2\pi/3$. Now, let us create two new sets by rotating $X$ around the circle: let $X_1 = X$ be the original set $X$, let $X_2$ be the set $X$ rotated by $2\pi/3$, and let $X_3$ be the set $X$ rotated by $4\pi/3$. Because, rotating a set does not change its size, all sets have the same measure. 

Consider an intersection of any pair of these three sets, $X_i \cap X_j$. If this intersection is non-empty, then there exists a point separated by $2\pi/3$ which we assumed is impossible. Hence, all three sets are mutually disjoint subsets of the circle $\mathcal{C}$. But it means that
$$
2\pi = \mu(\mathcal{C}) \ge \mu(X_1 \cup X_2 \cup X_3) = \mu(X_1) + \mu(X_2) + \mu(X_3) > 3 \mu(X) > 2 \pi,
$$
which gives us the desired contradiction. This finishes the proof of the theorem. 
\end{proof}

\subsection{Two Robots and Two Treasures ($n=2$, $k=2$)} \label{sec:n2k2}

We were lucky so far. We managed to determine the value of $c_k$ for any $k \in \N$. We also computed $\rho_k^1$ for any $k \in \N$ and $\rho_1^2$. The next step is to try to find the value of $\rho_2^2$. Unfortunately, our luck ends here. We only managed to prove the following bounds for $\rho_2^2$.

\begin{ptheorem}\label{thm:n2k2}
$5.5675 \approx 1+\frac {\pi}{3} + c_2 \le \rho_2^2 \le 6.2195$. \\
($6.2195$ is an upper bound for some implicitly defined constant $U$; see~\eqref{eq:def_U}.)
\end{ptheorem}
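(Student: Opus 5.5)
The plan has two independent halves: a lower bound modelled on the proof of \cref{thm:rho12}, and an upper bound given by an explicit two-robot strategy whose worst case defines the constant $U$.

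\textbf{Lower bound.} Fix $\epsilon>0$ and any strategy $S\in\mathcal{S}_2^2$, and look at time $t_1 = 1+\pi/3-\epsilon$.
\begin{itemize}
\item The two robots together have swept at most $2\pi/3-2\epsilon$ of $\mathcal{C}$, so the unvisited set has measure more than $4\pi/3$.
\item I then want an adversary that places both treasures in unvisited points only after some robot $r^*$ reaches $\mathcal{C}$ at a point $R$. It should pick the worst-case pair for $c_2$ relative to $R$, as in \cref{lem:rho_lower}.
\item The catch is that the worst-case pair (at angles $\pm\alpha_0$ from $R$, with $\alpha_0\approx 1.87$) must be unvisited. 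Waiting until $r^*$ arrives also costs time.
\end{itemize}
So I will choose $r^*$ and the time at which to commit more carefully. Let $R(t)$ be the position of a robot at time $t\ge t_1$, assuming it is on $\mathcal{C}$ (if neither robot is on $\mathcal{C}$, the clock only helps us). Each robot must still visit both treasures, so it suffices to find some robot and some configuration, consisting of an unvisited pair at the $c_2$-worst relative position to that robot's current point, that is available at a time $\ge t_1$.

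The key step is a measure or averaging argument in the spirit of the rotation trick in \cref{thm:rho12}. The set of rotations $\phi$ for which both $R+\alpha_0$ and $R-\alpha_0$ (as angles) lie in the visited set $V$ has measure at most $2\mu(V)<2\pi$. Since the robot's point varies continuously, I expect to find a time at which the pair is unvisited.

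If that is delicate, I will fall back on a weaker but cleaner adversary. Commit at time $t_1$, take $r^*$ to be any robot, note its current point $P$ (which may be off $\mathcal{C}$), and use that the distance from $P$ to $\mathcal{C}$ plus the subsequent tour costs at least $c_2$ up to an $o(1)$ error after perturbing into unvisited points. The perturbation is justified by continuity of $C(T_1,T_2)$ and the fact that the unvisited set has positive measure near any chosen point only if it is dense there. This is the main obstacle: guaranteeing that the extremal pair, or a near-extremal one, is unvisited. I would handle it by noting that $C$ has value close to $c_2$ on a neighbourhood, and that the visited set is a union of at most two arcs of total length $<2\pi/3$. Rotating the extremal pair relative to $R$ can then be replaced by choosing the alternative symmetric witnesses. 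Concretely, near-worst placements exist with the pair's angles ranging over sets of positive measure, and a union bound with the $<2\pi/3$ visited length shows one survives.

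\textbf{Upper bound.} I would take a specific strategy in which both robots go to the circle and split, as in \cref{thm:rho12}, sweeping opposite arcs. Once a first treasure is found, the robots adapt: the finder continues searching while the other heads toward the found treasure, or a \cref{lem:sqrt3jump}-style rule is used. The completion time is then a piecewise function of the two treasure angles. I define $U$ as the maximum of this function, as in \eqref{eq:def_U}, split into cases by which robot finds each treasure and in what order. I then bound each case by elementary calculus or a numerical maximization over a compact domain with a Lipschitz bound, which yields $6.2195$.
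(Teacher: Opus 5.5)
\textbf{Lower bound.} The step that fails is the one you flag yourself: you need a robot standing at some $P\in\mathcal{C}$ at a time $t\ge t_1$ while $P\pm\alpha_0$ are still unvisited. Your rotation count (it should bound the $\phi$ for which \emph{either} point is visited) only shows that \emph{some} angle $\phi$ has $\phi\pm\alpha_0$ unvisited. Nothing puts a robot at $\phi$, and continuity does not help.

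Take the strategy in which both robots reach $\mathcal{C}$ at a common point $Q$ at time $1$ and then sweep in opposite directions. At time $1+t$ the robot at $Q+t$ has $Q+t-\alpha_0$ inside the swept arc $[Q-t,Q+t]$ as soon as $t\ge\alpha_0/2\approx 0.936<\pi/3$, and symmetrically for the other robot. So after $t_1$ no robot ever sees its extremal pair, or any sufficiently close pair, unvisited.

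Your fallback fails too. If you commit at $t_1$ with a robot at an interior point $P$, there is no $c_2$ bound: from the origin every pair can be toured in at most $1+2=3<c_2$. And for a fixed $P$ the near-extremal pairs fill a neighbourhood of $(P+\alpha_0,P-\alpha_0)$ of vanishing measure. No union bound against a visited set of measure up to $2\pi/3$ can then guarantee a survivor.

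The paper avoids the ``where is the robot'' problem by letting the adversary wait. It rotates the triangle $\{R,R+\alpha_0,R-\alpha_0\}$ so that all three vertices are unvisited at time $1+\pi/3-\epsilon$; the bad rotations have measure at most $3\mu(V)<2\pi$. It then waits for the first robot to touch a vertex and puts the treasures on the other two.

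Adopting this will not rescue $c_2$, however. If a base vertex, say $R+\alpha_0$, is touched first, the tour $R+\alpha_0\to R\to R-\alpha_0$ costs $4\sin(\alpha_0/2)\approx 3.2206<c_2\approx 3.5203$. This contradicts the paper's remark that this cost ``is also $c_2$''. For any inscribed triangle, the waiting trick guarantees only the sum of the two shortest sides, which is at most $2\sqrt3$ (equilateral). So it certifies only $1+\pi/3+2\sqrt3\approx 5.5113$, which is \eqref{eq:org_LB1} with $n=k=2$. Reaching $1+\pi/3+c_2$ needs an idea that neither your plan nor the paper's argument supplies.

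\textbf{Upper bound.} This half is a template, not a proof. The number $6.2195$ is the value of one specific strategy, and you never fix yours. In the paper, both robots go to $(-1,0)$ and sweep in opposite directions, with a threshold $\beta^*$:

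If the first treasure $X$ is found at time $1+\beta$ with $\beta<\beta^*$, the robots swap positions along the chord $XX'$ at cost $2\sin\beta$, so both have visited $X$. They resume sweeping, and \cref{lem:sqrt3jump} bounds the remainder, giving $f(\beta)=1+2\sin\beta+2\pi/3+\sqrt3$.

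If $\beta\ge\beta^*$, they keep sweeping. The worst case is one robot finding both treasures, giving $h_{\beta^*}(\gamma)\le g(\beta^*)$.

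The constant $U$ comes from balancing the increasing $f$ against $g$, which is decreasing by the envelope computation $g'(\beta)=-\cos(\gamma^*/2)$. Your reaction rule (the finder keeps searching while the other robot heads to the found treasure) is a different strategy that you have not analysed. Without the threshold, the swap, and the balance of $f$ against $g$, you have no well-defined function to maximize and no reason the result should be $6.2195$.
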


The lower bound is clearly stronger than the simple one of $1+c_2$ implied by \cref{lem:rho_lower}. In fact, we will prove a more general lower bound for $\rho_k^n$ (see \cref{thm:org_LB}, equation~\eqref{eq:org_LB2}, in \cref{sec:org_LB}). For the upper bound see \cref{thm:org_UB} in \cref{sec:org_UB}.

\subsubsection{Lower Bound} \label{sec:org_LB}

\begin{ptheorem}\label{thm:org_LB}
For any $n, k \in \N$, 
\begin{equation}\label{eq:org_LB1}
\rho_k^n \ge 1 + \frac {2\pi}{(k+1)n}+2k \sin \left( \frac {\pi}{k+1} \right).
\end{equation}
Moreover, for a special case ($k=2$) we get the following: for any $n \in \N$,
\begin{equation}\label{eq:org_LB2}
\rho_2^n \ge 1 + \frac {2\pi}{3n} + c_2 \approx 5.5675.
\end{equation}
\end{ptheorem}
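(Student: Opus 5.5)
We follow the adversary argument from the lower bound in \cref{thm:rho12}. The adversary watches the robots for a while, then places the treasures in places that are still unexplored and that force a long completion tour. Fix an arbitrary strategy $S$ and a small $\epsilon>0$, and set $t^* = 1+\frac{2\pi}{(k+1)n}-\epsilon$. No robot reaches $\mathcal{C}$ before time $1$, and each robot covers arc length at most $t^*-1$ afterwards. So by time $t^*$ the robots have collectively visited a subset of $\mathcal{C}$ of measure at most $n(t^*-1) = \frac{2\pi}{k+1} - n\epsilon$. The unvisited set $U\subseteq\mathcal{C}$ therefore has measure exceeding $2\pi\cdot\frac{k}{k+1}$.

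The key step is a rotation/averaging argument in the spirit of the $\sqrt3$ claim. Consider the $k+1$ rotations of the regular $(k+1)$-gon by an angle $\phi\in[0,2\pi/(k+1))$. Its vertices are $P_0(\phi),\dots,P_k(\phi)$. The expected number of vertices lying in the visited set is at most $(k+1)\mu(\mathcal{C}\setminus U)/(2\pi)<1$. Hence for some $\phi$ every vertex is unvisited at time $t^*$. We then wait until the first moment $t\ge t^*$ at which some robot $r$ first reaches one of these vertices, say $P_0$. If no robot ever reaches one, we hide a treasure at a vertex and the strategy fails. At time $t$ we place the $k$ treasures at the remaining vertices $P_1,\dots,P_k$. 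All of them are undiscovered up to this moment, and the placement is consistent with everything the robots have observed.

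Robot $r$ stands at $P_0$ and must visit $k$ treasures forming a regular $(k+1)$-gon with its own position. Every consecutive hop has length at least the side length $2\sin(\pi/(k+1))$, exactly as in the proof of \cref{lem:general_LB_ck}. So $r$ needs at least $2k\sin(\pi/(k+1))$ further time. Letting $\epsilon\to0$ gives \eqref{eq:org_LB1}.

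For \eqref{eq:org_LB2} we refine the placement for $k=2$. The worst configuration for $c_2$ is a triple $R,T_1,T_2$ with $T_1,T_2$ symmetric about the diameter through $R$ at angle $\alpha_0$; its cost $c_2$ is realized by a robot starting at $R$. We use the same averaging trick with the three-point pattern $\{R,T_1,T_2\}$ rotated by $\phi\in[0,2\pi)$. The pattern is not equally spaced, so a pure averaging over $\phi$ only gives expected count $3\mu(\text{visited})/(2\pi)$. This forces the visited measure below $2\pi/3$, which matches the time $1+\frac{2\pi}{3n}$. Hence some rotation has all three points unvisited at time $t^*=1+\frac{2\pi}{3n}-\epsilon$. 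As before, we wait for the first robot to reach one of the three points.

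The main obstacle is that the first-reached point need not be the point playing the role of $R$. If a robot first reaches $T_1$, the remaining points $R$ and $T_2$ must still cost at least $c_2$ from $T_1$. This requires choosing the pattern so that from each of its three points the other two form a worst-case configuration, or at least cost $\ge c_2$. Failing that, we would let the adversary decide which two of the three points carry treasures after seeing which one is reached first. We would check via the case analysis of the $c_2$ computation (Case 2 geometry) that from any vertex of the triple the cheapest tour of the other two is at least $c_2$. If this fails for the exact $c_2$-triple, the fallback is to argue that the adversary can always relabel so that the reached point is an endpoint of the worst-case pair. This is the delicate step.
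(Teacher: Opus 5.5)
Your proof of \eqref{eq:org_LB1} is correct and is essentially the paper's argument. The ingredients match: a union bound over rotated regular $(k+1)$-gons (you phrase it as an expected count, the paper as a bound on the measure of ``bad roots''), waiting for the first robot to reach a vertex, and the side-length bound from \cref{lem:general_LB_ck}.

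For \eqref{eq:org_LB2} there is a genuine gap, exactly at the step you call delicate, and neither your check nor your fallback closes it. The three pattern points are the only points your argument guarantees to be unvisited. So whichever one is reached first, the two treasures must go on the other two, and ``relabelling'' does not change the triangle. Let the inscribed triangle have sides $\lambda_1\le\lambda_2\le\lambda_3$. A robot that first reaches an endpoint of the longest side finishes by walking the two shorter sides, in time $\lambda_1+\lambda_2$.

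For the $c_2$-triangle $\{R,T_1,T_2\}$ the longest side is $T_1T_2$. From $T_1$ the tour $T_1\to R\to T_2$ costs $4\sin(\alpha_0/2)\approx 3.2206$, and $c_2-4\sin(\alpha_0/2)=2\sin(\alpha_0/2)\bigl(2\cos(\alpha_0/2)-1\bigr)>0$ because $\cos(\alpha_0/2)=(\sqrt{33}-1)/8>1/2$. So the check you propose fails.

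No other choice of three points helps either. By the law of sines, the two shorter sides equal $2\sin A+2\sin B$, where $A,B$ are the two smallest angles of the triangle, and $A+B\le 2\pi/3$. By concavity, $\lambda_1+\lambda_2\le 4\sin(\pi/3)=2\sqrt3\approx 3.4641<c_2$. Hence any ``wait for the first hit of a three-point pattern'' argument certifies at most $1+\frac{2\pi}{3n}+2\sqrt3$, which is just \eqref{eq:org_LB1} with $k=2$.

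The paper's own proof handles this step with a single assertion: if a non-root vertex is visited first, the remaining time is ``also $c_2$''. That assertion only accounts for the tour $T_1\to T_2\to R$ and is false for the reason above. Your suspicion was well founded, but as written neither your proposal nor the paper's argument establishes \eqref{eq:org_LB2}. It would need a genuinely different adversary strategy.
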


\begin{proof}
Fix any $\epsilon > 0$, arbitrarily small. Let $S \in \mathcal{S}_k^n$ be an arbitrary strategy for $n$ robots. For a given $i \in [n]$, let $X_i$ be the part of the circle $\mathcal{C}$ searched by robot $i$ by time $1+\frac {2\pi}{(k+1)n}-\epsilon$. Trivially, regardless of the strategy $S$ used, each set $X_i$ has measure at most $\frac {2\pi}{(k+1)n}-\epsilon$ and so all robots searched (collectively) set $\bigcup_i X_i$, part of the circle $\mathcal{C}$ of measure at most $\frac {2\pi}{k+1}-\epsilon n$. 

For a given $\alpha \in [0, 2\pi)$, let $G_\alpha$ be a ``rooted'' regular $(k+1)$-gon inscribed in $\mathcal{C}$, that is, $G_\alpha$ has $k+1$ vertices: the $i$-th vertex $V_i = (\cos(\theta_i), \sin(\theta_i))$ is located at angle $\theta_i = \alpha + (i-1) \frac{2\pi}{k+1}$. Note that each $(k+1)$-gon $G_\alpha$ in this family has a unique root, vertex $V_1$, located at angle~$\alpha$. Our goal is to show that there exists a root $V \in \mathcal{C}$ (and the corresponding $\alpha \in [0, 2\pi)$) such that no vertex in $G_\alpha$ is visited by any of the $n$ robots. Let $Y \subseteq \mathcal{C}$ be the set of roots without this property. Since any point visited by one of the robots eliminates $k+1$ roots as potential candidates, we get that 
$$
\mu(Y) \le (k+1) \Big| \bigcup_i X_i \Big| \le (k+1) \left( \frac {2\pi}{k+1}-\epsilon n \right) = 2\pi - \epsilon n (k+1) < 2\pi.
$$
Hence, $\mu(\mathcal{C} \setminus Y) > 0$ so, indeed, there exists $G_\alpha$ with no vertex visited by any of the robots.

The conclusion now follows easily. We can wait for the first time when one of the robots (let us call it $r^*$) visits one of the vertices of $G_\alpha$. At this point we put $k$ treasures in the remaining $k$ vertices of $G_\alpha$. It is easy to see that the fastest way for robot $r^*$ to visit these treasures is to visit them in clockwise (or counter-clockwise) order; see the proof of \cref{lem:general_LB_ck}. Hence, it will take at least $k \cdot 2 \sin( \frac {\pi}{k+1} )$ units of time for $r^*$ to finish the job. We get the lower bound of $1+\frac {2\pi}{(k+1)n}-\epsilon+2k \sin( \frac {\pi}{k+1} )$ and \eqref{eq:org_LB1} holds after taking $\epsilon \to 0$.  

\medskip

Let us now deal with the special case ($k=2$) and the lower bound in \eqref{eq:org_LB2}. The first part of the proof works exactly as before. The only difference is that instead of $G_\alpha$ being a regular $3$-gon, we consider $3$-gon constructed by taking $2$ vertices of a $(2,\alpha)$-special placement of treasures together with a starting point $R=(1,0)$; see \cref{def:special_family}. We select an optimal value of $\alpha$ that yields $c_2$ (see \cref{sec:c2}) and we root the 3-gon at $R$. As before, we wait for the first robot to visit one of the vertices of the 3-gon and put the two treasures in the other two vertices. The 3-gon is not regular but, conveniently, even if non-root is visited first by some robot, the time to visit the remaining vertices is also $c_2$. This proves \eqref{eq:org_LB2} and the proof of the theorem is finished.
\end{proof}

Let us make a few remarks. First of all, note that the argument that gives a stronger bound for $k=2$ cannot be extended to $k \ge 4$. We are guaranteed to find the $(k+1)$-gone associated with the optimal placement of treasures yielding $c_k$, and we can wait for the first vertex of this $(k+1)$-gone to be visited. However, the time to visit the remaining $k$ vertices is not necessarily at least $c_k$ so the argument brakes down at this point. The 4-gon corresponding to the case $k=3$ is in fact regular so the alternative argument works but it is equivalent and gives the same bound as \eqref{eq:org_LB1}. 

Note also, as already mentioned earlier, that for any $n \in \N$ the lower bound in \eqref{eq:org_LB2} is stronger than the simple one of $1+c_2$ implied by \cref{lem:rho_lower}. The lower bound in \eqref{eq:org_LB1} is stronger for many values of $n$ and $k$ but clearly not for all of them. If $k$ is fixed, then \eqref{eq:org_LB1} is weaker than $1+c_k$ implied by \cref{lem:rho_lower}, provided that $n$ is large enough. 

\subsubsection{Upper Bound} \label{sec:org_UB}

Before we prove an upper bound, we need to introduce some definitions and do some calculus. Let 
$$
f(\beta) := 1 + 2\sin(\beta)+\frac{2\pi}{3} + \sqrt{3}, \qquad \beta \in (0, \pi/2).
$$
For a fixed $\beta \in (0, \pi/2)$, let 
$$
h_\beta(\gamma) := 1+\beta+\gamma+2\sin(\beta+\gamma)+2\sin(\gamma/2), \qquad \gamma \in (0, \pi-\beta).
$$
Clearly,
\begin{eqnarray*}
h_\beta'(\gamma) &=& 1+2\cos(\beta+\gamma)+\cos(\gamma/2) \\
h_\beta''(\gamma) &=& -2\sin(\beta+\gamma) - \frac 12 \sin(\gamma/2).
\end{eqnarray*}
Since $\beta+\gamma \in (0,\pi)$ and $\gamma/2 \in (0, \pi/2)$, $h_\beta''(\gamma)$ is strictly negative and so $h_\beta(\gamma)$ is strictly concave, guaranteeing that a critical point where $h_\beta'(\gamma) = 0$ is the unique global maximum on the interval $(0, \pi-\beta)$. Let $\gamma^* = \gamma^*(\beta)$ be the critical point that maximizes function $h_\beta(\gamma)$ and let $g(\beta)$ be the corresponding maximum value, that is, let
$$
g(\beta) = \max_{\gamma \in (0, \pi-\beta)} h_\beta(\gamma) = h_\beta(\gamma^*).
$$

Function $f(\beta)$ is clearly an increasing function of $\beta$. Function $g(\beta)$, on the other hand, is a decreasing function of $\beta$ but this is far from being obvious. For this we will use the multivariable chain rule. Note that 
$$
g'(\beta) = \frac {d h_\beta(\gamma^*)}{d\beta} = \frac {\partial h_\beta(\gamma^*)}{\partial \beta} + \frac {\partial h_\beta(\gamma^*)}{\partial \gamma} \frac {d \gamma^*}{d \beta} = \frac {\partial h_\beta(\gamma^*)}{\partial \beta},
$$
since we are evaluating at the critical point $\gamma^*$ and so the term $\frac{\partial h_\beta(\gamma^*)}{\partial \gamma} = 0$, causing the second term to vanish. We get that 
\begin{eqnarray*}
g'(\beta) &=& \frac {\partial}{\partial \beta} \Big( 1+\beta+\gamma^*+2\sin(\beta+\gamma^*)+2\sin(\gamma^*/2) \Big) \\
&=& 1 + 2\cos(\beta+\gamma^*).
\end{eqnarray*}
Since $\gamma^*$ is a critical point, we get that 
$$
h_\beta'(\gamma^*) = 1+2\cos(\beta+\gamma^*)+\cos(\gamma^*/2) = 0.
$$
We conclude that $g'(\beta) = - \cos(\gamma^*/2) < 0$, since $\gamma^*/2 \in (0, \pi/2)$, and so indeed $g(\beta)$ is a decreasing function of $\beta$. 

These observations allow us to define the following constants. Let
\begin{equation}\label{eq:def_U}
U := \min_{\beta \in (0, \pi/2)} \max \Big\{ f(\beta), g(\beta) \Big\} = f(\beta^*) = g(\beta^*),
\end{equation}
where $\beta^*$ is the unique value in $(0, \pi/2)$ such that $f(\beta) = g(\beta)$. There is no closed formula for $U$ but one can easily approximate it numerically. We report that $U < 6.2195$. This value was obtained using the code that can be found in Appendix \ref{sec:appendix_code_rho22}.

Now, we are ready to prove an upper bound.

\begin{ptheorem}\label{thm:org_UB}
$\rho_2^2 \le U < 6.2195$.
\end{ptheorem}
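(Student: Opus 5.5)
The plan is to exhibit an explicit two-robot strategy that depends on a parameter $\beta\in(0,\pi/2)$, show that its worst-case completion time is at most $\max\{f(\beta),g(\beta)\}$, and then take $\beta=\beta^*$. Since $f$ is increasing and $g$ is decreasing (as shown above), this gives $\rho_2^2\le f(\beta^*)=g(\beta^*)=U$.

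\textbf{Strategy.} Both robots run to $R=(1,0)$, arriving at time $1$. They then split: one walks counter-clockwise and the other clockwise along $\mathcal{C}$, so that at time $1+t$ they sit at angular positions $\pm t$. The parameter $\beta$ separates two regimes for the first discovery.

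\textbf{Case 1: the first treasure is found early, at angle $\pm t$ with $t\le\beta$.} The other robot crosses the chord of length $2\sin t\le 2\sin\beta$. By then both robots have visited the first treasure, and the unsearched region is an arc of length at most $2\pi$ containing the second treasure. From here the robots run a one-treasure search modelled on the proof of \cref{thm:rho12}. The aim is to show that this one-treasure phase finishes within $\frac{2\pi}{3}+\sqrt3$ after the robots have regrouped. The natural attempt is to arrange that the regrouping and repositioning cost no more than the $1+\frac{\pi}{2}$ spent in \cref{thm:rho12}, and then to apply \cref{lem:sqrt3jump}. This would give the bound $1+2\sin\beta+\frac{2\pi}{3}+\sqrt3=f(\beta)$.

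\textbf{Case 2: no treasure lies within angle $\beta$ of $R$.} The first discovery happens at angle $\beta+\gamma$ (say), at time $1+\beta+\gamma$. The other robot is then at the symmetric point $-(\beta+\gamma)$, and it jumps across the chord of length $2\sin(\beta+\gamma)$. The second treasure is either already found or lies in a region that can be covered with an extra cost of at most $2\sin(\gamma/2)$. I would bound this extra cost using \cref{le:segment} on the relevant arc, together with the symmetry of the positions. The total is then $h_\beta(\gamma)\le g(\beta)$, where the last inequality is the concavity/maximization computation already carried out. Combining both cases gives worst case $\max\{f(\beta),g(\beta)\}$, and choosing $\beta=\beta^*$ gives $U$.

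\textbf{Main obstacle.} The hardest step is the bookkeeping in each case, namely making the remaining one-treasure task fit exactly into the budgets $\frac{2\pi}{3}+\sqrt3$ and $2\sin(\gamma/2)$. This requires specifying precisely where the robots go after the first discovery and checking every sub-case, for instance which robot finds the second treasure and whether it was already passed. I would first fix the post-discovery rule, then enumerate the positions of the second treasure relative to the searched arcs, and reduce each configuration either to \cref{lem:sqrt3jump} or to a chord bound via \cref{le:segment}. If the Case 1 regrouping turns out to cost more than $1+\frac{\pi}{2}$, I would adjust the rule so that the robots head to antipodal points before applying \cref{lem:sqrt3jump}.
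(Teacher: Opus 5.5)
Your proposal has a genuine gap: Case 2 uses the wrong strategy and rests on a false claim, and Case 1 leaves open the one step that makes the budget $f(\beta)$ work.

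\textbf{Case 2.} You let the other robot jump across the chord as soon as the first treasure is found at angle $\beta+\gamma$. You then claim the second treasure is already found or costs only $2\sin(\gamma/2)$ more. Neither holds in general. This is the first discovery, so the second treasure is still hidden somewhere in an unsearched arc of angular length $2\pi-2(\beta+\gamma)$. After the jump, the robots must still search that arc and then both reach the treasure.

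Crossing the chord in response to a discovery at angle $\theta$ is exactly the Case 1 mechanism. Its worst case $f(\theta)$ increases with $\theta$; for example $f(\pi/2)=3+\frac{2\pi}{3}+\sqrt3>U$. The threshold $\beta^*$ exists precisely so that beyond it the robots do \emph{not} react.

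The paper's rule for a first discovery at angle $\beta\ge\beta^*$ is to ignore it and keep both robots sweeping until the second treasure is found. The worst case is that the same robot finds both treasures: the first at angle $\beta$ (time $1+\beta$) and the second at angle $\beta+\gamma$ (time $1+\beta+\gamma$). The other robot stands at the mirror image of the second treasure. It runs $2\sin(\beta+\gamma)$ to that treasure and then $2\sin(\gamma/2)$ back to the first.

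So in $h_\beta(\gamma)$, $\beta$ is the position of the first treasure and $\gamma$ is the gap between the treasures. It is not the offset of the first discovery past the threshold. The bound is $h_\beta(\gamma)\le g(\beta)\le g(\beta^*)$, which uses that $g$ is decreasing, not just the maximization over $\gamma$. If instead each robot finds one treasure, the cost is at most $1+\pi+2<U$.

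\textbf{Case 1.} The device that makes $f(\beta)$ work is that \emph{both} robots cross the chord at once, swapping places. At time $1+t+2\sin t$ the searched arc is still exactly the one between them, with a robot at each end. So the symmetric sweep has merely been delayed by $2\sin t$. Continuing until the robots are antipodal and then applying \cref{lem:sqrt3jump} gives
$(1+t)+2\sin t+(\frac{\pi}{2}-t)+\frac{\pi}{6}+\sqrt3=f(t)\le f(\beta)$.
A second treasure met before that point costs at most $1+2\sin t+\frac{\pi}{2}+2<f(t)$.

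Your rule is different: only the other robot crosses and the robots regroup at the first treasure. Then one of them must cross back to the far end of the searched arc before the symmetric sweep can resume, so the chord is paid twice. The same accounting then yields only $f(t)+\sin t$, which exceeds $U$ for $t$ near $\beta^*$.
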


\begin{proof}
Let $\beta^*$ be the unique value of $\beta$ in $(0, \pi/2)$ such that $U = f(\beta) = g(\beta)$ (see the discussion above). We will analyze the following strategy. Both robots go straight to point $(-1,0)$ and then they split, one of them goes along the circle $\mathcal{C}$ clockwise whereas the other one goes along the circle $\mathcal{C}$ counter-clockwise. We distinguish two cases. 

\medskip
\textbf{Case 1:} One of the treasures is discovered at time $1+\beta$ for some $\beta < \beta^*$. By symmetry, we may assume that it is at point $X = (\cos(\pi-\beta), \sin(\pi-\beta))$ and so it is discovered by the first robot. At that point, the second robot is at point $X' = (\cos(\pi-\beta), -\sin(\pi-\beta))$, a reflection of $X$ across the $x$-axis. The first robot runs from $X$ straight to $X'$ whereas the second robot runs from $X'$ to $X$. They reach their destinations after additional $2 \sin(\beta)$ units of time. 

If the second treasure was placed in $X'$ (that is, both treasures were discovered at the same time), then the job is already done at this point. But, since we search for the worst possible placement of treasures, we may assume that at this point only one treasure is discovered and visited by both robots. The robots continue traversing the circle $\mathcal{C}$. Once the second treasure is discovered, one of the robots is already done and the other one has to run straight to the missing treasure. Our goal now is to find the worst position of the second treasure.

Suppose that the second treasure is found at point $Y = (\cos(\pi-\xi), \sin(\pi-\xi))$. Clearly, $\xi < \pi/2$ does not correspond to the worst-case scenario as $\xi = \pi/2$ yields a worse placement: it takes more time to discover the second treasure and at this point the distance between $Y$ and its reflection, $Y'$, is equal to 2 (the maximum distance two points on the circle can be away from each other). Hence, we may assume that $\xi \ge \pi/2$. But then, it follows from \cref{lem:sqrt3jump} that the worst case scenario for placing the second treasure is to place it at angle $\xi = \pi/2 + \pi/6$ so that the distance between $Y$ and $Y'$ is equal to $\sqrt{3}$. We conclude that the time for the robots to finish their task (in this case) is at most
$$
(1 + \beta) + 2 \sin( \beta ) + \left( \frac {\pi}{2} - \beta \right) + \frac {\pi}{6} + \sqrt{3} ~\le~ 1 + 2 \sin ( \beta^* ) + \frac {2\pi}{3} + \sqrt{3} ~=~ f(\beta^*).
$$

\medskip
\textbf{Case 2:} One of the treasures is discovered at time $1+\beta$ for some $\beta \ge \beta^*$. As in the previous case, by symmetry we may assume that the first treasure is located at point $X = (\cos(\pi-\beta), \sin(\pi-\beta))$. The robots continue traversing the circle until the second treasure, located at point $Y$, is discovered. When this happens, the robots visit missing treasures (if there are any) as quickly as possible. 

It is clear that the worst case scenario is when the first robots discovers both treasures. The second robot goes straight to $Y$ and then goes straight to $X$. It is easy to see that the worst case placement for $X$ is to place it at $X = (\cos(\pi-\beta^*), \sin(\pi-\beta^*))$ since this choice maximizes the distance between $Y$ and $X$, given the constraint that $\beta \ge \beta^*$. Assuming that $Y = (\cos(\pi-\beta^*-\gamma), \sin(\pi-\beta^*-\gamma))$, we conclude that the time for the robots to finish their task (in this case) is equal to
$$
1 + \beta^* + \gamma + 2 \sin(\beta^*+\gamma) + 2 \sin(\gamma/2) ~=~ h_{\beta^*}(\gamma) ~\le~ h_{\beta^*}(\gamma^*) ~=~ g(\beta^*).
$$

Hence, regardless of where the first treasure is placed (Case 1 or Case 2), the strategy guarantees that both treasures are visited by both robots before time $U = f(\beta^*) = g(\beta^*)$. This finishes the proof of the theorem. 
\end{proof}

\newpage

\appendix

\section{Table of Approximated Values of $s_k$}\label{sec:appendix_tables_sk}

\begin{table}[h!]
{
\centering
\begin{tabular}{|r|r|c|c|l|}
\hline
$k$ & optimal $\alpha$ & $s_k$  & $j$ & $\beta_\ell$ for $\ell\in[j+1]$ (in degrees)\\
\hline
2   & $107.2496^{\circ}$ & 3.5203 & 1 & $(107.25, 145.50)$                                     \\
3   &  $60.0000^{\circ}$ & 4.4641 & 1 & $( 60.00, 120.00)$                                     \\
4   &  $43.2721^{\circ}$ & 5.0155 & 2 & $( 43.27,  86.54, 100.37)$                             \\
5   &  $31.2740^{\circ}$ & 5.3481 & 2 & $( 31.27,  62.55,  86.18)$                             \\
6   &  $21.5917^{\circ}$ & 5.5741 & 2 & $( 21.59,  43.18,  76.82)$                             \\
7   &  $16.0204^{\circ}$ & 5.7373 & 3 & $( 16.02,  32.04,  64.08, 67.86)$                      \\
8   &  $12.8023^{\circ}$ & 5.8496 & 3 & $( 12.80,  25.60,  51.21, 60.26)$                      \\
9   &  $10.1139^{\circ}$ & 5.9310 & 3 & $( 10.11,  20.23,  40.46, 54.60)$                      \\
10  &   $7.7389^{\circ}$ & 5.9934 & 3 & $(  7.74,  15.48,  30.96, 50.33)$                      \\
11  &   $5.7666^{\circ}$ & 6.0435 & 4 & $(  5.77,  11.53,  23.07, 46.13, 46.75)$               \\
12  &   $4.8899^{\circ}$ & 6.0823 & 4 & $(  4.89,   9.78,  19.56, 39.12, 42.66)$               \\
13  &   $4.1262^{\circ}$ & 6.1124 & 4 & $(  4.13,   8.25,  16.50, 33.01, 39.37)$               \\
14  &   $3.4397^{\circ}$ & 6.1365 & 4 & $(  3.44,   6.88,  13.76, 27.52, 36.69)$               \\
15  &   $2.8027^{\circ}$ & 6.1562 & 4 & $(  2.80,   5.61,  11.21, 22.42, 34.49)$               \\
16  &   $2.1902^{\circ}$ & 6.1728 & 4 & $(  2.19,   4.38,   8.76, 17.52, 32.70)$               \\
17  &   $1.8469^{\circ}$ & 6.1869 & 5 & $(  1.85,   3.69,   7.39, 14.78, 29.55, 30.69)$        \\
18  &   $1.6303^{\circ}$ & 6.1984 & 5 & $(  1.63,   3.26,   6.52, 13.04, 26.08, 28.77)$        \\
19  &   $1.4318^{\circ}$ & 6.2080 & 5 & $(  1.43,   2.86,   5.73, 11.45, 22.91, 27.12)$        \\
20  &   $1.2467^{\circ}$ & 6.2162 & 5 & $(  1.25,   2.49,   4.99,  9.97, 19.95, 25.70)$        \\
21  &   $1.0708^{\circ}$ & 6.2231 & 5 & $(  1.07,   2.14,   4.28,  8.57, 17.13, 24.47)$        \\
22  &   $0.8998^{\circ}$ & 6.2291 & 5 & $(  0.90,   1.80,   3.60,  7.20, 14.40, 23.40)$        \\
23  &   $0.7295^{\circ}$ & 6.2345 & 5 & $(  0.73,   1.46,   2.92,  5.84, 11.67, 22.48)$        \\
24  &   $0.6478^{\circ}$ & 6.2391 & 6 & $(  0.65,   1.30,   2.59,  5.18, 10.36, 20.73, 21.41)$ \\
\hline
\end{tabular}
\caption{Approximated values of $s_k$ and corresponding optimal $(k,\alpha)$-special placements.}
\label{table:optimal}
}
\end{table}

\newpage
\section{Table of Approximated Values or Bounds for $\rho^n_k$}\label{sec:appendix_tables_rho}

\begin{table}[h!]
\begin{tabular}{|c|ccccc|}
\hline
$k ~\backslash~ n$ & 1         & 2                & 3                & 4                & 5                \\
\hline
1         & 7.2831    & 4.8264           & (4.0472, 5.0944) & (3.7854, 4.5708) & (3.6283, 4.2566) \\
2         & 7.2831    & (5.5675, 6.2195) & (5.1622, 6.6147) & (4.9877, 6.0911) & (4.8829, 5.7769) \\
3         & 7.2831    & (6.0280, 7.2831) & (5.7662, 7.2831) & (5.6353, 7.0349) & (5.5568, 6.7207) \\
4         & 7.2831    & (6.3306, 7.2831) & (6.1211, 7.2831) & (6.0164, 7.2831) & (6.0155, 7.2721) \\
5         & 7.2831    & (6.5236, 7.2831) & (6.3490, 7.2831) & (6.3481, 7.2831) & (6.3481, 7.2831) \\
6         & 7.2831    & (6.6554, 7.2831) & (6.5741, 7.2831) & (6.5741, 7.2831) & (6.5741, 7.2831) \\
7         & 7.2831    & (6.7503, 7.2831) & (6.7373, 7.2831) & (6.7373, 7.2831) & (6.7373, 7.2831) \\
8         & 7.2831    & (6.8496, 7.2831) & (6.8496, 7.2831) & (6.8496, 7.2831) & (6.8496, 7.2831) \\
9         & 7.2831    & (6.9310, 7.2831) & (6.9310, 7.2831) & (6.9310, 7.2831) & (6.9310, 7.2831) \\
10        & 7.2831    & (6.9934, 7.2831) & (6.9934, 7.2831) & (6.9934, 7.2831) & (6.9934, 7.2831) \\
11        & 7.2831    & (7.0435, 7.2831) & (7.0435, 7.2831) & (7.0435, 7.2831) & (7.0435, 7.2831) \\
12        & 7.2831    & (7.0823, 7.2831) & (7.0823, 7.2831) & (7.0823, 7.2831) & (7.0823, 7.2831) \\
13        & 7.2831    & (7.1124, 7.2831) & (7.1124, 7.2831) & (7.1124, 7.2831) & (7.1124, 7.2831) \\
14        & 7.2831    & (7.1365, 7.2831) & (7.1365, 7.2831) & (7.1365, 7.2831) & (7.1365, 7.2831) \\
15        & 7.2831    & (7.1562, 7.2831) & (7.1562, 7.2831) & (7.1562, 7.2831) & (7.1562, 7.2831) \\
16        & 7.2831    & (7.1728, 7.2831) & (7.1728, 7.2831) & (7.1728, 7.2831) & (7.1728, 7.2831) \\
17        & 7.2831    & (7.1869, 7.2831) & (7.1869, 7.2831) & (7.1869, 7.2831) & (7.1869, 7.2831) \\
18        & 7.2831    & (7.1984, 7.2831) & (7.1984, 7.2831) & (7.1984, 7.2831) & (7.1984, 7.2831) \\
19        & 7.2831    & (7.2080, 7.2831) & (7.2080, 7.2831) & (7.2080, 7.2831) & (7.2080, 7.2831) \\
20        & 7.2831    & (7.2162, 7.2831) & (7.2162, 7.2831) & (7.2162, 7.2831) & (7.2162, 7.2831) \\
21        & 7.2831    & (7.2231, 7.2831) & (7.2231, 7.2831) & (7.2231, 7.2831) & (7.2231, 7.2831) \\
22        & 7.2831    & (7.2291, 7.2831) & (7.2291, 7.2831) & (7.2291, 7.2831) & (7.2291, 7.2831) \\
23        & 7.2831    & (7.2345, 7.2831) & (7.2345, 7.2831) & (7.2345, 7.2831) & (7.2345, 7.2831) \\
24        & 7.2831    & (7.2391, 7.2831) & (7.2391, 7.2831) & (7.2391, 7.2831) & (7.2391, 7.2831) \\
\hline
\end{tabular}
\caption{Approximated values or bounds for $\rho^n_k$. When an exact value is known, its approximation is reported; otherwise, lower and upper bounds are given.}
\label{table:bounds}
\end{table}

\section{Code Used to Identify the Worst-case Placement of Treasures}\label{sec:appendix_code_special_family}

\begin{verbatim}
using Combinatorics
using Optimization
using OptimizationMultistartOptimization
using OptimizationOptimJL
using ForwardDiff

d(x, y) = sqrt((x[1] - y[1])^2 + (x[2] - y[2])^2)

genpermutations(n) = collect(permutations(1:n))

const perms = Dict{Int, Vector{Vector{Int64}}}()
const MAX_SPOTS = 8

for i in 1:MAX_SPOTS
    perms[i] = genpermutations(i)
end

function C(T)
    best = Inf
    for p in perms[length(T)]
        cur = d((1.0, 0.0), T[p[1]])
        for i in 2:length(T)
            cur += d(T[p[i-1]], T[p[i]])
        end
        if cur < best
            best = cur
        end
    end
    return -best
end

CA(A, p) = C([(cos(a), sin(a)) for a in A]) # version with angles 

function run_opt(n, reps)
    lb = fill(0.0, n)
    ub = fill(2 * pi, n)
    x0 = rand(n) * pi

    optf = OptimizationFunction(CA, Optimization.AutoForwardDiff())
    prob = OptimizationProblem(optf, x0; lb = lb, ub = ub)
    sol = solve(prob, MultistartOptimization.TikTak(reps), LBFGS())
    return sol
end

for i in 1:MAX_SPOTS
    @show i
    s = [run_opt(i, 100) for _ in 1:10]
    best = s[1]
    for j in 2:length(s)
        if best.objective > s[j].objective
            best = s[j]
        end
    end
    @show best
end
\end{verbatim}

\section{Code Used to Approximate $s_k$}\label{sec:appendix_code_sk}

\begin{verbatim}
function c(x, k)
    maxj = div(k, 2)
    last = 2*pi
    for j = 1:maxj
        current = 4.0*sum(sin(x*(2.0^(a-2.0))) for a in 1:j) + 
            2.0*(k+1.0-2.0*j)*sin((pi-(2.0^j-1.0)*x)/(k+1.0-2.0*j))-
            2.0*sin(0.5*x)
        if current < last
            last = current
        else
            throw(ErrorException("non decreasing sequence"))
        end
        if pi / ((2.0^(j-1.0))*(k+3.0-2.0*j)-1.0) <=
               x <=
               pi / ((2.0^(j-2.0))*(k+5.0-2.0*j)-1.0)
            return current, j
        end
    end
    throw(ErrorException("something went wrong"))
end

function optc(k)
    println("\nk:\t$k")
    minj = 1
    maxj = div(k, 2)
    minx = pi/((2.0^(maxj-1))*(k+3-2*maxj)-1.0)
    maxx = pi/((2.0^(minj-2))*(k+5-2*minj)-1.0)
    x = range(minx, maxx, length=100000)
    y = c.(x, k)
    sk, j = maximum(y)
    bx = x[argmax(y)]*180/pi
    println("s(k):\t", sk)
    println("betas:\t", push!([2*(bx*(2.0^(a-2.0))) for a in 1:j],
                              (360-2(2.0^j-1.0)*bx)/(k+1.0-2.0*j)))
end

for k=2:24
    optc(k)
end
\end{verbatim}

\section{Code Used to Approximate the Upper bound for $\rho_2^2$}\label{sec:appendix_code_rho22}

\begin{verbatim}
using Optim
f(b) = 1 + 2 * sin(b) + 2 * pi / 3 + sqrt(3)
h(b, c) = 1 + b + c + 2 * sin(b + c) + 2 * sin(c / 2)
g(b) = -Optim.minimum(optimize(c -> -h(b, c), 0, pi - b))
U = Optim.minimum(optimize(b -> max(f(b), g(b)), 0.0, pi / 2))
\end{verbatim}

\end{document}